\newcommand{\supp}{\mathsf{supp}}
\providecommand{\lnorm}[2]{\|{#1}\|_{\ell^{#2}}}
\providecommand{\lone}[1]{\lnorm{#1}{1}}
\providecommand{\linfty}[1]{\lnorm{#1}{\infty}}
\providecommand{\ltwo}[1]{\lnorm{#1}{2}}
\newcommand{\zo}{\bits}
\newcommand{\adv}{\calA}
\newcommand{\cd}{\calD}
\newcommand{\df}{\textbf{F}}
\newcommand{\dl}{\textbf{L}}
\newcommand{\dr}{\textbf{R}}
\newcommand{\dq}{\textbf{Q}}
\newcommand{\dbq}{\overline{\dq}}
\newcommand{\mc}{\mathsf{Count}}
\newcommand{\lrmac}{\mathsf{lrMAC}}
\def\calA{{\mathcal A}}
\def\calD{{\mathcal D}}
\def\dbC{{\mathbb C}}\def\dbE{{\mathbb E}}
\def\dbR{{\mathbb R}}
\def\dbZ{{\mathbb Z}}
\DeclareMathOperator{\nm}{\nmExt} 
\DeclareMathOperator{\expect}{E}
\newcommand{\purify}{\mathsf{purify}}
\newcommand{\laext}{\mathsf{laExt}}
\theoremstyle{definition}
\newcommand{\set}[1]{\left \{{#1} \right \}}
\newcommand{\eps}{\epsilon}
\newcommand{\Cond}{\mathsf{Cond}}
\newcommand{\Supp}{\mathsf{Supp}}
\newcommand{\Raz}{\mathsf{Raz}}
\newcommand{\Ext}{\mathsf{Ext}}
\newcommand{\nmExt}{\mathsf{nmExt}}
\newcommand{\mac}{\mathsf{MAC}}
\newcommand{\IP}{\mathsf{IP}}
\newcommand{\Enc}{\mathsf{Enc}}
\newcommand{\TExt}{\mathsf{TExt}}
\newcommand{\zuc}{\Cond}
\newcommand{\scond}{\mathsf {Scond}}
\newcommand{\BI}{\begin{itemize}}
\newcommand{\EI}{\end{itemize}}
\newcommand{\BE}{\begin{enumerate}}
\newcommand{\EE}{\end{enumerate}}
\newtheorem{thm}{Theorem}      
\newcommand{\BT}{\begin{theorem}}   \newcommand{\ET}{\end{theorem}}
\newcommand{\BD}{\begin{definition}}   \newcommand{\ED}{\end{definition}}
\newcommand{\BCR}{\begin{corollary}} \newcommand{\ECR}{\end{corollary}}
\newtheorem{constr}[thm]{Construction}
\newcommand{\BCT}{\begin{constr}} \newcommand{\ECT}{\end{constr}}
\newcommand{\BL}{\begin{lemma}}   \newcommand{\EL}{\end{lemma}}
\newcommand{\BP}{\begin{proposition}}   \newcommand{\EP}{\end{proposition}}
\newcommand{\BCM}{\begin{claim}}   \newcommand{\ECM}{\end{claim}}
\newcommand{\BF}{\begin{fact}}   \newcommand{\EF}{\end{fact}}
\newcommand{\BA}{\begin{assumption}}   \newcommand{\EA}{\end{assumption}}
\def\eps{\varepsilon}
\def\le{\leqslant} \def\ge{\geqslant}
\def\ExtendSymbol#1#2#3#4#5{\ext@arrow 0099{\arrowfill@#1#2#3}{#4}{#5}}
\def\RightExtendSymbol#1#2#3#4#5{\ext@arrow 0359{\arrowfill@#1#2#3}{#4}{#5}}
\def\LeftExtendSymbol#1#2#3#4#5{\ext@arrow 6095{\arrowfill@#1#2#3}{#4}{#5}}
\newcommand\llrightarrow[2][]{\RightExtendSymbol{-}{-}{\rightarrow}{#1}{#2}}
\newcommand\llleftarrow[2][]{\RightExtendSymbol{\leftarrow}{-}{-}{#1}{#2}}
\newcommand{\hinf}{H_\infty}
\newcommand{\thinf}{\widetilde{H}_\infty}
\begin{document}

\begin{titlepage}
\def\thepage{}

\title{
Non-Malleable Extractors, Two-Source Extractors and Privacy Amplification 
}

\author{
Xin Li\thanks{ Partially supported by
NSF Grants CCF-0634811, CCF-0916160, THECB ARP Grant 003658-0113-2007, and a Simons postdoctoral fellowship.}\\
Department of Computer Science\\
University of Washington\\
Seattle, WA 98905, U.S.A.\\
lixints@cs.washington.edu\\
}

\maketitle \thispagestyle{empty}

\begin{abstract}
Dodis and Wichs \cite{DW09} introduced the notion of a non-malleable extractor to study the problem of privacy amplification with an active adversary. A non-malleable extractor is a much stronger version of a strong extractor. Given a weakly-random string $x$ and a uniformly random seed $y$ as the inputs, the non-malleable extractor $\nm$ has the property that $\nm(x,y)$ appears uniform even given $y$ as well as $\nm(x,\adv(y))$, for an arbitrary function $\adv$ with $\adv(y) \neq y$. Dodis and Wichs showed that such an object can be used to give optimal privacy amplification protocols with an active adversary.

Previously, there are only two known constructions of non-malleable extractors \cite{DLWZ11, CRS11}. Both constructions only work for $(n, k)$-sources with $k>n/2$. Interestingly, both constructions are also two-source extractors. 

In this paper, we present a strong connection between non-malleable extractors and two-source extractors. The first part of the connection shows that non-malleable extractors can be used to construct two-source extractors. If the non-malleable extractor works for small min-entropy and has a short seed length with respect to the error, then the resulted two-source extractor beats the best known construction of two-source extractors. This partially explains why previous constructions of non-malleable extractors only work for sources with entropy rate $>1/2$, and why explicit non-malleable extractors for small min-entropy may be hard to get. 

The second part of the connection shows that certain two-source extractors can be used to construct non-malleable extractors. Using this connection, we obtain the first construction of non-malleable extractors for $k < n/2$. Specifically, we give an unconditional construction for min-entropy $k=(1/2-\delta)n$ for some constant $\delta>0$, and a conditional (semi-explicit) construction that can potentially achieve $k=\alpha n$ for any constant $\alpha>0$.

We also generalize non-malleable extractors to the case where there are more than one adversarial seeds, and show a similar connection between the generalized non-malleable extractors and two-source extractors.

Finally, despite the lack of explicit non-malleable extractors for arbitrarily linear entropy, we give the first 2-round privacy amplification protocol with asymptotically optimal entropy loss and communication complexity for $(n, k)$ sources with $k=\alpha n$ for any constant $\alpha>0$. This dramatically improves previous results and answers an open problem in \cite{DLWZ11}.
\end{abstract}
\end{titlepage}

\section{Introduction}
The broad area of \emph{randomness extraction} studies the problem of converting a weakly random source into a distribution that is close to the uniform distribution in statistical distance. Over the past decades extensive research has been conducted in this area. Among which, a long line of research (\cite{SrinivasanZ99, Trevisan01, RazRV02, LuRVW03, GuruswamiUV09, DvirW08, DvirKSS09} to name a few) studies the so called ``seeded extractors", as defined by Nisan and Zuckerman \cite{NisanZ96}. Besides its original motivation in computing with imperfect random sources, seeded extractors have found applications in coding theory, cryptography, complexity and many other areas. We refer the reader to \cite{FortnowS02, Vadhan02, Shatiel11} for a survey on this subject. Nowadays we have nearly optimal constructions of seeded extractors \cite{LuRVW03, GuruswamiUV09, DvirW08, DvirKSS09}.

Another line of research focuses on the problem of extracting random bits from several independent sources \cite{ChorG88, BarakIW04, BarakKSSW05, Raz05, Bourgain05, Rao06, BarakRSW06, Li11b}. In this case, however, the best known construction is far from optimal. Specifically, the probabilistic method shows that there exists an extractor for two independent sources on $n$ bits with each having roughly $\log n$ bits of entropy, while the best two-source extractor to date can only achieve entropy slightly below $n/2$ \cite{Bourgain05}. The best known extractor for small entropy $k$ requires $O(\log n/ \log k)$ independent sources \cite{Rao06, BarakRSW06}. Moreover, it seems hard to improve these results. Especially in the two-source case, after decades of efforts the entropy requirement only drops from anything above $n/2$ \cite{ChorG88} to slightly below $n/2$ \cite{Bourgain05}. 

Recently, a new kind of seeded extractors, called \emph{non-malleable extractors} were introduced in \cite{DW09} to give protocols for the problem of privacy amplification with an active adversary. We now give the definition of a non-malleable extractor below. As a comparison, we also give the definition of a strong seeded extractor.

\smallskip\noindent
{\bf Notation.} We let $[s]$ denote the set $\set{1,2,\ldots,s}$.
For $\ell$ a positive integer,
$U_\ell$ denotes the uniform distribution on $\zo^\ell$, and
for $S$ a set,
$U_S$ denotes the uniform distribution on $S$.
When used as a component in a vector, each $U_\ell$ or $U_S$ is assumed independent of the other components.
We say $W \approx_\eps Z$ if the random variables $W$ and $Z$ have distributions which are $\eps$-close in variation distance.

\begin{definition}
The \emph{min-entropy} of a random variable~$X$ is
\[ H_\infty(X)=\min_{x \in \supp(X)}\log_2(1/\Pr[X=x]).\]
For $X \in \zo^n$, we call $X$ an $(n,H_\infty(X))$-source, and we say $X$ has
\emph{entropy rate} $H_\infty(X)/n$. We say $X$ is a flat source if it is the uniform distribution over some subset $S \subset \bits^n$.
\end{definition}
 
\begin{definition}\label{def:strongext}
A function $\Ext : \bits^n \times \bits^d \rightarrow \bits^m$ is  a \emph{strong $(k,\eps)$-extractor} if for every source $X$ with min-entropy $k$
and independent $Y$ which is uniform on $\zo^d$,
\[ (\Ext(X, Y), Y) \approx_\eps (U_m, Y).\]
\end{definition}

\begin{definition}\footnote{Following \cite{DLWZ11}, we define worst case non-malleable extractors, which is slightly different from the original definition of average case non-malleable extractors in \cite{DW09}. However, the two definitions are essentially equivalent up to a small change of parameters.}
\label{nmdef}
A function $\nm:\bits^n \times \bits^d \to \bits^m$ is a $(k,\eps)$-non-malleable extractor if,
for any source $X$ with $\hinf(X) \geq k$ and any function $\adv:\bits^d \to \bits^d$ such that $\adv(y) \neq y$ for all~$y$,
the following holds.
When $Y$ is chosen uniformly from $\bits^d$ and independent of $X$,
\[
(\nm(X,Y),\nm(X,\adv(Y)),Y) \approx_\eps (U_m,\nm(X,\adv(Y)),Y).
\]
\end{definition}

As we can see from the definitions, a non-malleable extractor is a stronger version of the strong extractor, in the sense that it requires the output to be close to uniform even conditioned on both the seed $Y$ and the output $\nm(X, \adv(Y))$ on a different but arbitrarily correlated seed $\adv(Y)$.

The motivation to study a non-malleable extractor, the privacy amplification problem, is a fundamental problem in symmetric cryptography that has been studied by many researchers. Bennett, Brassard, and Robert introduced this problem in \cite{bbr}. The basic setting is that, two parties (Alice and Bob) share an $n$-bit secret key~$X$, which is weakly random. This could happen because the secret comes from a password or biometric data, which are themselves weakly random, or because an adversary Eve managed
to learn some partial information about an originally uniform secret, for example via side channel attacks. We measure the entropy of $X$ by the min-entropy defined above. The goal is to have Alice and Bob communicate over a public
channel so that they can convert $X$ into a nearly uniform secret key. Generally, we also assume that Alice and Bob have local private uniform random bits. The problem is the presence of the adversary Eve, who can see every message transmitted in the channel and may or may not change the messages. We assume that Eve has unlimited computational power.

The case where Eve is \emph{passive}, i.e., cannot change the messages, can be solved simply by using the above mentioned strong seeded extractors. The case where Eve is \emph{active} (i.e., can change the messages in arbitrary ways), on the other hand, is much more difficult. Historically, Maurer and Wolf \cite{MW97} gave the first non-trivial protocol in this case. Their protocol takes one round and works when the entropy rate of the weakly-random secret $X$ is bigger than $2/3$. Dodis, Katz, Reyzin, and Smith \cite{dkrs} later improved this result to give protocols that work for entropy rate bigger than $1/2$. One drawback in both cases is that the final secret key $R$ is much shorter than the min-entropy of $X$. Later, Dodis and Wichs \cite{DW09} showed that no one-round protocol exists for entropy rate less than $1/2$. The first protocol that breaks the $1/2$ entropy rate barrier is due to Renner and Wolf \cite{RW03}, where they gave a protocol that works for essentially any entropy rate. However their protocol takes $O(s)$ rounds and only achieves entropy loss $O(s^2)$, where $s$ in the security parameter of the protocol. Kanukurthi and Reyzin \cite{kr:agree-close} simplified their protocol, but the parameters remain essentially the same.

In \cite{DW09}, Dodis and Wichs showed that explicit non-malleable extractors can be used to give privacy amplification protocols that take an optimal 2 rounds and achieve optimal entropy loss $O(s)$. They showed that non-malleable extractors exist when $k>2m+3\log(1/\eps) + \log d + 9$ and $d>\log(n-k+1) + 2\log (1/\eps) + 7$. However, they only constructed weaker forms of non-malleable extractors and they gave a protocol that takes 2 rounds but that still has entropy loss $O(s^2)$. Chandran, Kanukurthi, Ostrovsky and Reyzin \cite{ckor} improved the entropy loss to $O(s)$ but the number of rounds becomes $O(s)$ as well.

Dodis, Li, Wooley and Zuckerman \cite{DLWZ11} constructed the first explicit non-malleable extractor. Their construction works for entropy $k>n/2$, but they use a large seed length $d=n$ and the efficiency when outputting more than $\log n$ bits relies on an unproven assumption. Cohen, Raz, and Segev \cite{CRS11} later gave an alternative construction that also works for $k>n/2$, but uses a short seed length and does not rely on any unproven assumption. The construction in \cite{CRS11} also allows multiple adversarial functions $\{\adv_i\}$. By using the non-malleable extractors, these two papers thus gave 2-round privacy amplification protocols that achieve optimal entropy loss $O(s)$. However, since both constructions of non-malleable extractors are only shown to work for entropy $k>n/2$,\footnote{We remark that the 1-bit case construction in \cite{DLWZ11} is a special case of the construction in \cite{CRS11}. Also, it is possible that the construction in \cite{DLWZ11} can work for entropy $k \leq n/2$ (but until now nobody can prove it), but the construction in \cite{CRS11} in general cannot work for entropy $k \leq n/2$.} the protocols also only work for $k>n/2$. For any constant $\delta>0$, \cite{DLWZ11} also gave a protocol for $k=\delta n$ than runs in $\poly(1/\delta)$ rounds and achieves optimal entropy loss $O(s)$. Recently, Li \cite{Li12} introduced the notion of a non-malleable condenser, which is a relaxation of a non-malleable extractor. He showed that non-malleable condensers for $(n, k)$ sources also give privacy amplification protocols that take an optimal 2 rounds and achieve optimal entropy loss $O(s)$. However, the non-malleable condensers constructed in \cite{Li12} also only work for $k > n/2$. Thus the natural open question is whether we can construct non-malleable extractors or condensers for smaller min-entropy, and whether there are 2-round privacy amplification protocols with optimal entropy loss for smaller min-entropy.

One interesting aspect of the two known constructions of non-malleable extractors is that they are also both two-source extractors. Indeed, the construction in \cite{DLWZ11} is in fact one of the two-source extractors introduced in \cite{ChorG88}, which requires the sources to have min-entropy $>n/2$, and the construction in \cite{CRS11} is in fact the two-source extractor in \cite{Raz05}, which requires at least on of the sources to have min-entropy $>n/2$. Coincidently, when used as non-malleable extractors, both of these constructions also require the weak source to have min-entropy $>n/2$. These facts suggest that, despite the fact that these two kinds of extractors seem quite different, there may be some connections between them. However, before this work, no such connection is known. 

\subsection{Our results}
In this paper, we present a strong connection between non-malleable extractors and two-source extractors. First, we show that non-malleable extractors can be used to construct two-source extractors. If the non-malleable extractor works for small min-entropy and has a short seed length (w.r.t. $\log(1/\e)$ where $\e$ is the error of the extractor), then the resulted two-source extractor beats the best known construction of two-source extractors. 

\BT 
Assume that for any $\e>0$, we have explicit constructions of $(k, \e)$-non-malleable extractors with seed length $d=2\log(1/\e)+o(n)$ and output length $m$. Then there exists a constant $\delta>0$ and an explicit construction of two source extractors that take as input an $(n, (1/2-\delta)n)$ source and an independent $(n,k)$ source, and output $m$ bits with error $2^{-\Omega(n)}$.
\ET

Note that if $k$ is small, say $k=n/3$ then this already beats the best known two-source extractors, but better results can be achieved if we have explicit constructions of generalized non-malleable extractors. We have the following definition (which already appears in \cite{CRS11}).

\BD
A function $\nm:\bits^n \times \bits^d \to \bits^m$ is a $(r, k,\eps)$-non-malleable extractor if,
for any source $X$ with $\hinf(X) \geq k$ and any $r$ function $\adv_i:\bits^d \to \bits^d, i=1, \cdots, r$ such that $\adv_i(y) \neq y$ for all~$i$ and $y$,
the following holds.
When $Y$ is chosen uniformly from $\bits^d$ and independent of $X$,
\[
(\nm(X,Y),\{\nm(X,\adv_i(Y))\},Y) \approx_\eps (U_m,\{\nm(X,\adv_i(Y))\},Y).
\]
\ED

Here $r$ is the number of adversarial seeds. Note that traditional non-malleable extractors are just $(1, k, \e)$-non-malleable extractors according to our definition. In \appendixref{sec:nmexist} we show that for any constant $r$, $(r, k,\eps)$-non-malleable extractors exist with seed length $d > \frac{3}{2}\log(n-k)+3\log(1/\e)+O(1)$. Now we have the following theorem.

\BT 
For any constant $b>2$ and any constant $0<\delta<1$, there exists a constant $C=C(\delta)=\poly(1/\delta)$ such that the following holds. Assume that for any $\e>0$ there exists an explicit construction of $(C, k, \e)$-non-malleable extractors with seed length $d=b\log(1/\e)+o(n)$ and output length $m$. Then there exists an explicit construction of two source extractors that take as input an $(n, \delta n)$ source and an independent $(n,k)$ source, and output $m$ bits with error $2^{-\Omega(n)}$.
\ET

Note that if we have a $(C, k, \e)$-non-malleable extractor for $k=\delta n$ and some constant $C=C(\delta)=\poly(1/\delta)$ then this will give us a two-source extractor for $(n, \delta n)$ sources. If $\delta$ is small this will be a big breakthrough for two-source extractors. This also implies that, given current techniques, the $(r, k, \e)$-non-malleable extractor in \cite{CRS11} is probably the best that we can achieve.

Next, we show that in the opposite direction, certain two-source extractors can be used to construct non-malleable extractors. The two-source extractors we will use are those that are constructed based on the inner product function. More specifically, we will consider two-source extractors of the form $\TExt=\IP(f(X), Y)$, where $\IP$ is the inner product function over $\F_2$ and $f(X)$ stands for some function (encoding) of the source $X$. We have the following theorem.



\BT 
Given two integers $r, \ell$ such that $\ell > r$. Assume that we have a two-source extractor $\TExt=\IP(f(X), W)$ such that when given an $(n, k)$-source $X$ and an independent $(n_2, n_2/(r+1)-\ell)$-source $W$, $\TExt$ outputs 1 bit with error $\e$. Then there exists an explicit construction of $(r, k, \e')$-non-malleable extractors that output 1 bit with error $\e'=O(r2^{r-\ell}+2^{\frac{3r}{2}}\e)$.
\ET

Using this theorem, and by combining known two-source extractors, we obtain new and improved constructions of non-malleable extractors.  We give the first explicit constructions of non-malleable extractors that work for min-entropy $k<n/2$. One of them is unconditional and works for $k=(1/2-\delta)n$ for some universal constant $\delta>0$. The other is conditional but can potentially work for $k=\delta n$ for any constant $\delta>0$. Specifically, we have the following theorems.


\BT \label{thm:intmain}
There exists a constant $0< \delta<1$ and an explicit $(k, \e)$-non-malleable extractor $\nm: \bits^n \times \bits^n \to \bits^m$ with $k = (1/2-\delta)n$, $m=\Omega(n)$ and $\e=2^{-\Omega(n)}$.
\ET

Our conditional result needs to use an affine extractor and an assumption from additive combinatorics, as used in \cite{BenSZ11}. Thus we first define affine extractors and state the assumption.

\BD An $[n, m, \rho, \e]$ affine extractor is a deterministic function $f: \bits^n \to \bits^m$ such that whenever $X$ is the uniform distribution over some affine subspace over $\F^n_2$ with dimension $\rho n$, we have that for every $z \in \bits^m$, 

\[|\Pr[f(X)=z]-2^{-m}| < \e.\]
\ED

Note that we bound the error by the $\ell^{\infty}$ norm instead of the traditional $\ell^1$ norm, as in \cite{BenSZ11}. We will let $\lambda$ denote the entropy loss rate, i.e., $\lambda=1-\frac{m}{\rho n}$. We note that it is straightforward to show by the probabilistic method that such extractors exist for any constant $\rho, \lambda>0$. However the state of art constructions only achieve $\lambda$ bigger than $1/2$. 

\cite{BenSZ11} also introduced the Approximate Duality conjecture (ADC), which basically says that if two independent sources $X, Y$ with linear entropy are such that $\IP(X, Y)$ is not close to uniform, then there exist two subsources $X' \subset X, Y' \subset Y$ with small deficiency such that $\IP(X', Y')$ is constant. In \cite{BenSZ11} it is shown that ADC is implied by the well-known Polynomial Freiman-Ruzsa Conjecture in additive combinatorics. For a formal definition, see \sectionref{sec:constant}. We now have the following theorems.


\BT
Assume the ADC conjecture and we have an explicit $[n, m, \frac{2}{3}, 2^{-m}]$ affine extractor with $m=(1-\lambda) \frac{2}{3}n$. Then there exists a semi-explicit $(k, \e)$-non-malleable extractor $\nm: \bits^n \times \bits^d \to \bits^m$ with $k =\frac{3 \lambda}{1+2\lambda} n$, $d=\frac{3}{2+4\lambda}n$, $m=\Omega(n)$ and $\e=2^{-\Omega(n)}$.
\ET

\BT
Given a constant integer $r$, assume the ADC conjecture and we have an explicit $[n, m, \frac{r+1}{r+2}, 2^{-m}]$ affine extractor with $m=(1-\lambda) \frac{r+1}{r+2}n$. Then there exists a semi-explicit $(r, k, \e)$-non-malleable extractor with $k =\frac{(r+2) \lambda}{1+(r+1)\lambda} n$, seed length $d=\frac{r+2}{r+1+(r+1)^2\lambda}n-1$ and $\e=2^{-\Omega(n)}$.
\ET

\begin{remark}
Here we use the ``\emph{semi-explicit}" to mean that the construction may run in time $2^n$. It is semi-explicit in the sense that the running time is polynomial in the length of the extractor's truth table (note that an exhaustive search takes time $2^{2^n}$). If we have affine extractors with large output size such that $\lambda \to 0$, then we can essentially achieve $k=\alpha n$ for any constant $\alpha>0$. 
\end{remark}


Finally, we give a new privacy amplification protocol for min-entropy $k=\delta n$ for any constant $\delta>0$. Although we don't have explicit non-malleable extractors or condensers for such small $k$, our protocol simultaneously achieves optimal round complexity (2 rounds), asymptotically optimal entropy loss and asymptotically optimal communication complexity. This is the first optimal privacy amplification protocol for arbitrarily linear min-entropy. We have the following theorem.

\BT 
For any constant $0<\delta<1$ there exists a constant $0<\beta<1$ such that as long as $s \leq \beta n$, there is an efficient 2-round privacy amplification protocol for any $(n, \delta n)$ weak secret $X$ with security parameter $s$, entropy loss $O(s+\log n)$ and communication complexity $O(s+\log n)$.
\ET

Thus, in the case where $k=\delta n$, our result dramatically improves all previous results. Especially, it improves the round complexity of the protocol in \cite{DLWZ11} from $\poly(1/\delta)$ to 2, and thus answers an open problem in \cite{DLWZ11}.

\section{Overview of The Constructions and Techniques}
In this section we give an overview of our constructions and the techniques used. In order to give a clean description, we shall be informal and imprecise sometimes.
 
 \subsection{From non-malleable extractors to two-source extractors}
 Given a $(k, \e)$ non-malleable extractor $\nm$ with seed length $d=2\log(1/\e)+o(n)$, here is how we can get a two-source extractor. Assume that we have an $(n, k)$ source $X$ and an independent $(n, (1/2-\delta)n)$ source $Y$ for some constant $\delta>0$. Our first step is to use the 1-bit condenser in \cite{Zuc07} to convert $Y$ into two sources $\bar{Y}_1, \bar{Y}_2$ such that each of them has $l=\Omega(n)$ bits and one of them has min-entropy at least $(1/2+\delta)l$. Note that for an appropriately chosen $\delta$ this is indeed possible. Without loss of generality assume that $\bar{Y}_1$ has min-entropy at least $(1/2+\delta)l$. 
 
 Our key observation here is that $\bar{Y}_2$ can now be viewed as a function of $\bar{Y}_1$. More precisely, we show that the source $Y$ is a convex combination of sources $\{Y^i\}$ such that for each $Y^i$, the corresponding $\bar{Y^i_1}$ also has min-entropy at least $(1/2+\delta)l$, and $\bar{Y^i_2}$ is a deterministic function of $\bar{Y^i_1}$. Now this looks like the setting of a non-malleable extractor, where we have one seed and another correlated seed. However, there is a small problem: $\bar{Y^i_1}$ and $\bar{Y^i_2}$ may be equal sometimes. To solve this, we let $Y_1=\bar{Y_1} \circ 0$ and $Y_2=\bar{Y_2} \circ 1$. In this way we guarantee that $Y^i_1$ and $Y^i_2$ are different, and $Y^i_2$ is still a function of $Y^i_1$. Finally, this only increases the length of the seed by 1. 
 
Now we are all set, and we can take the two-source extractor to be $\TExt(X, Y)=\nm(X, Y_1) \oplus \nm(X, Y_2)$. Note that the seed $Y_1$ here is not uniform. However, a simple argument shows that a non-malleable extractor with seed length $d$ and error $\e$ remains a non-malleable extractor even if the seed only has min-entropy $k'$, with error increased to $2^{d-k'}\e$. In our case, with seed length $d=l+1=\Omega(n)=2\log(1/\e)+o(n)$ and $k'=(1/2+\delta)l$, the error is $\e'=2^{d-k'}\e \approx 2^{(1/2-\delta)l}2^{-l/2}=2^{-\Omega(n)}$. By the non-malleability of $\nm$, we get that $\TExt(X, Y)$ is $2^{-\Omega(n)}$-close to uniform.

We note that given any strong extractor with error $\e$ and seed length $d$, it remains an extractor even if the seed only has min-entropy $k'$, with error increased to $2^{d-k'}\e$. However, since the seed length is at least $d=\log(n-k)+2\log(1/\e)-O(1)$, this will never be able to get $k'$ below $d/2$. On the other hand, if the extractor is non-malleable as in our case, then it does allow us to break the entropy rate $1/2$ barrier, and get a two-source extractor for an $(n, (1/2-\delta)n)$ source and an $(n, k)$ source. This shows that non-malleability is a highly non-trivial property of a seeded extractor.

Similarly, if we have $(r, k, \e)$-non-malleable extractors for larger $r$, then we can afford to have more correlated seeds $Y_i$, or equivalently, more sources in the output of the condenser. Thus we can deal with smaller entropy in $Y$. For example, for any constant $\delta>0$, condensers based on sum-product theorems \cite{BarakKSSW05, Raz05, Zuc07} allow us to convert an $(n, \delta n)$ source into a constant $D$ number of sources such that each of them has $l=\Omega(n)$ bits and one of them has min-entropy at least $0.9l$. If we have $(D-1, k, \e)$-non-malleable extractors with suitable parameters, then we can get two-source extractors or an $(n, \delta n)$ source and an $(n, k)$ source.
 
 \subsection{From two-source extractors to non-malleable extractors}
As stated before, we focus on two-source extractors of the form $\IP(f(X), Y)$, where $\IP$ is the inner product function. First consider the simplest function $\IP(X, Y)$. Note that it is a good two-source extractor. For two independent sources on $n$ bits, it works as long as the sum of the entropies of the two sources is greater than $n$. However, at first this function does not seem to be a good candidate for a non-malleable extractor. To see this, consider the inner product function over $\F_2$. Let $X$ be a source that is obtained by concatenating the bit $0$ with $U_{n-1}$, and let $Y$ be an independent uniform seed over $\bits^n$. Now for any $y \in \bits^n$, let $\adv(y)$ be $y$ with the first bit flipped. Thus we see that for all $x$ in the support of $X$, one has $\angles{x, y}=\angles{x, \adv(y)}$. Therefore, the inner product function is not a non-malleable extractor even for weak sources with min-entropy $k=n-1$. 

In the above example, we have that for all $x$ in the support of $X$, $\IP(x, y)=\IP(x, \adv(y))$. Or equivalently, $\IP(x, y+\adv(y))=0$. How does this happen? Looking closely at this example, our key observation is that this is because the range of $Y$ is \emph{too large}. Indeed, in this example the range of $Y$ is the entire $\bits^n$, thus for any $y$ the adversary can choose a different $\adv(y)$ such that $y+\adv(y)=10 \cdots 0$ so that $\forall x \in \Supp(X), \IP(x,  y+\adv(y))=0$. 

This observation suggests that we should choose the range of $Y$ to be a subset $S \subset \bits^n$, so that for some $y$'s, the adversary will be unable to choose the appropriate $\adv(y)$ from $S$. Equivalently, we take a shorter seed length $l$, choose a uniform $y \in \bits^l$ and map $y$ to an element in $\bits^n$. This is essentially an encoding. Now let us see what properties we need the encoding to have.

We start with a construction for min-entropy $k>n/2$. Assume that we have an $(n, k)$ source $X$ with $k=(1/2+\delta)n$ for some constant $\delta>0$. We take an independent and uniform $y \in \bits^l$ and encode $y$ to $\bar{y} \in \bits^n$. For any function $\adv$, let $\bar{y}'$ be the encoding of $\adv(y)$. We will use an injective encoding, so that $\forall y, \bar{y}' \neq \bar{y}$. The output of the non-malleable extractor is then $\IP(X, \bar{Y})$. 

To show that $\IP(X, \bar{Y})$ is a non-malleable extractor, it suffices to show that $\IP(X, \bar{Y})$ is close to uniform, and that $\IP(X, \bar{Y}) \oplus \IP(X, \bar{Y}')$ is close to uniform. The first part is easy. If $X$ has min-entropy $k>n/2$, then we can take $Y$ to be the uniform distribution over some $l \geq n/2$ bits. Since the encoding is injective, $\bar{Y}$ will have min-entropy $l \geq n/2$. Thus $\IP(X, \bar{Y})$ is close to uniform. For the second part, note that $\IP(X, \bar{Y}) \oplus \IP(X, \bar{Y}')=\IP(X, \bar{Y}+\bar{Y}')$. Thus now we need $\bar{Y}+\bar{Y}'$ to have large min-entropy. Indeed, in the above counterexample where $l=n$, the adversary can choose $\adv$ such that $\bar{Y}+\bar{Y}'$ is always equal to $10 \cdots 0$ and thus has entropy 0. Now when we take $l < n$ and map $\bits^l$ to $S \subset \bits^n$, we want $\bar{Y}+\bar{Y}'$ to have a large support size. 

The ideal case would be that $\bar{Y}+\bar{Y}'$ also has support size $|S|=2^l$. This can be achieved if the encoding has the following property: for every two different $y_1, y_2$, we have that $\bar{y}_1+\bar{y}'_1 \neq \bar{y}_2+\bar{y}'_2$, or equivalently, $\bar{y}_1+\bar{y}'_1 + \bar{y}_2+\bar{y}'_2 \neq 0$. Indeed, if this is true then $\bar{Y}+\bar{Y}'$ also has min-entropy $l \geq n/2$, and thus $\IP(X, \bar{Y}) \oplus \IP(X, \bar{Y}')$ is close to uniform. Looking carefully at this property, we see that it can be ensured (at least almost ensured, as we will explain shortly) if we have another property: the elements in $S$ (when viewed as vectors in $\F^n_2$) are 4-wise linearly independent. Indeed, assume that the elements in $S$ are 4-wise linearly independent. Then if $\bar{y}_1+\bar{y}'_1 + \bar{y}_2+\bar{y}'_2 = 0$, the only possible situation is that $\bar{y}'_1=\bar{y}_2$ and $\bar{y}'_2=\bar{y}_1$. Thus there cannot be three different $y_1, y_2, y_3$ such that $\bar{y}_1+\bar{y}'_1 =  \bar{y}_2+\bar{y}'_2 = \bar{y}_3+\bar{y}'_3$. Thus the min-entropy of $\bar{Y}+\bar{Y}'$ is at least $l-1$. 

So now the question is to explicitly find a large subset $S \subset \bits^n$ such that the elements in $S$ are 4-wise linearly independent. Note that in particular this implies that the sum of any two different pairs of elements in $S$ cannot be the same. Thus we have $\binom{|S|}{2} \leq 2^n$. Therefore $|S|$ can be at most roughly $2^{n/2}$. On the other hand, in order to work for any min-entropy $k>n/2$, we will need $l \geq n/2$ and thus $|S| =2^l \geq 2^{n/2}$. These are very tight upper and lower bounds. Luckily, we have explicit constructions that meet these bounds. We will think of the elements in $S$ as columns in a parity check matrix of some binary linear code. Thus we basically need a code with block length $2^{n/2}$ and message length $2^{n/2}-n$. The 4-wise linearly independent property basically is equivalent to saying that the code has distance at least 5. This is precisely the $[2^{n/2}, 2^{n/2}-n, 5]$-BCH code. Note that although the parity check matrix has $2^{n/2}$ columns, each column is $(a, a^3)$ for a different element $a \in \F^*_{2^{n/2}}$. Thus the encoding from $y$ to $\bar{y}$ can be computed efficiently. 

Once we have the encoding, we can choose $l =n/2$ and we know that $\bar{Y}$ has min-entropy $l$ and $\bar{Y}+\bar{Y}'$ has min-entropy $l-1$. Now it is straightforward to show that both $\IP(X, \bar{Y})$ and $\IP(X, \bar{Y}+\bar{Y}')$ are close to uniform. Thus we obtain a non-malleable extractor for entropy $k>n/2$.

Thinking about the above encoding for a moment, one realizes that the same encoding can be used in any two-source extractor of the form $\IP(f(X), Y)$. Specifically, assume that $\IP(f(X), Y)$ is a two-source extractor for an $(n, k)$ source $X$ and an independent $(n, n/2-1)$ source $Y$. Then by the same argument above, if we choose the seed $Y$ to be the uniform distribution over $\bits^{n/2}$ and encode $Y$ to $\bar{Y}$ like before, we will have that both $\bar{Y}$ and $\bar{Y}+\bar{Y}'$ have min-entropy at least $n/2-1$. Thus both $\IP(f(X), \bar{Y})$ and $\IP(f(X), \bar{Y}) \oplus \IP(f(X), \bar{Y}')$ are close to uniform. Therefore we get a non-malleable extractor for min-entropy $k$.

Similarly, if we have a two-source extractor $\IP(f(X), Y)$ for an $(n, k)$ source $X$ and an independent $(n, k')$ source $Y$ with $k' \approx n/(r+1)$, then we can use a BCH code with distance $2r+3$ to construct a $(r, k, \e)$-non-malleable extractor. We choose the seed $Y$ to be the uniform distribution over $\bits^{n/(r+1)}$ and encode $Y$ to $\bar{Y}$ using the parity check matrix, i.e., $\bar{Y}=(Y, Y^3, \cdots, Y^{2r+1})$ when $Y$ is viewed as an element in $\F^*_{2^{n/(r+1)}}$. Since the columns of the parity check matrix are $2(r+1)$-wise linearly independent, we can show that for any subset $S \subseteq [r]$, $\bar{Y} \oplus \bigoplus_{i \in S} \overline{\adv_i(Y)}$ has min-entropy roughly $n/(r+1)$. Thus $\IP(f(X), \bar{Y}) \oplus \bigoplus_{i \in S} \IP(f(X), \overline{\adv_i(Y)})$ is close to uniform. Therefore we get a $(r, k, \e)$-non-malleable extractor.

\subsection{Non-malleable extractors for min-entropy $k<n/2$}
We give the first construction of non-malleable extractors for min-entropy $k<n/2$ by observing that the encoding of sources in \cite{Bourgain05} gives a function $f$ such that $\IP(f(X), Y)$ is a two-source extractor for an $(n, (1/2-\delta)n)$ source $X$ and an independent $(n, k')$ source $Y$ with $k' \approx n/2$.

Specifically, let $X$ be a distribution over some vector space $\F^n_q$ and let $cX$ be the distribution obtained by sampling $x_1, x_2, \cdots, x_c$ from $c$ independent copies of $X$ and computing $\sum x_i$. By Fourier analysis and the Cauchy-Schwarz inequality one can show that in order to prove $\IP(X, Y)$ is close to uniform, it suffices to prove that $\IP(cX, Y)$ is close to uniform with a smaller error, for some integer $c>1$. In \cite{Bourgain05}, Bourgain showed that for a weak source $X$ with min-entropy rate $1/2-\delta$ for some constant $\delta>0$, one can encode $X$ to $\Enc(X)$ such that $3\Enc(X)$ is close to having min-entropy rate $1/2+\delta$. Thus $\IP(\Enc(X), Y)$ is a two-source extractor that meets our needs. Therefore we obtain our non-malleable extractors for min-entropy $k=(1/2-\delta)n$.

\subsection{Non-malleable extractors for any constant min-entropy rate}
In \cite{BenSZ11}, Ben-Sasson and Zewi showed that affine extractors with large output size can be used to construct two source extractors for min-entropy rate $<1/2$. Their ``preimage construction" can potentially achieve any constant min-entropy rate. We observe that their encoding gives a function $f$ such that $\IP(f(X), Y)$ is a two-source extractor for two independent sources with min-entropy rate $\delta$ for any constant $\delta>0$. Specifically, they showed that if we have an affine extractor with large output size, then there is an injective mapping $F: \bits^n \to \bits^{n'}$ that maps $\bits^n$ into the preimage of a certain output of the affine extractor, such that for any weak source $X$ with min-entropy $\delta n$, $F(\Supp(X))$ is not contained in any affine subspace of dimension say $(1-\delta/2)n'$. Thus when $Y$ is a $(n', \delta n')$ source, we have that $\IP(F(X), Y)$ is non-constant. Next, similar as in \cite{BenSZ11}, the ADC conjecture implies that in fact $\IP(F(X), Y)$ is close to uniform. Thus $\IP(F(X), Y)$ is a two-source extractor that meets our needs. Therefore we obtain a non-malleable extractor (and even a $(r, k, \e)$-non-malleable extractor) for min-entropy $k=\delta n$.

\subsection{Increasing output size}
We can also increase the output size to $\Omega(n)$ for all our constructions with 1 bit output. To do this, note that we encode the seed $Y$ by using the columns of a parity check matrix of a BCH code. Equivalently, the encoding is that $\bar{Y}=(Y, Y^3)$ when we use a field $\F_{2^l}$ with $l=\Theta(n)$ and $Y$ is viewed as an element in $\F^*_{2^l}$. Now treat $\F_{2^l}$ as the vector space $\F^l_2$ and take $l$ elements $b_1, \cdots, b_l \in \F_{2^l}$ that corresponds to a basis of $\F^l_2$. For each $b_i$ we define one bit $Z_i=\IP(f(X), b_i \bar{Y})$. 

We then show that $\{Z_i\}$ satisfy the conditions of a non-uniform XOR lemma, \lemmaref{special-case}. Specifically, let $Z'_i=\IP(f(X), b_i \bar{Y'})$ where $Y'=\adv(Y)$. For any non-empty subset $S_1 \subset [l]$ and any subset $S_2 \subset [l]$, by the linearity of the inner product function, the xor of $Z_i$'s where $i \in S_1$ and $Z'_j$'s where $j \in S_2$ is of the form $\IP(f(X), t_1\bar{Y}+t_2\bar{Y'})$, with $t_1, t_2 \in \F_{2^l}$. Since $S_1$ is non-empty we have $t_1 \neq 0$. We then show that $t_1\bar{Y}+t_2\bar{Y'}$ roughly has the same min-entropy as $Y$ (at least the min-entropy of $Y$ minus $\log 3$). Thus $\IP(f(X), t_1\bar{Y}+t_2\bar{Y'})$ is close to uniform. We further show that the error is $2^{-\Omega(n)}$. Thus by \lemmaref{special-case} we can output $m=\Omega(n)$ bits with error $2^{-\Omega(n)}$.

\subsection{Reducing seed length}
In all the constructions where we encode the seed $Y$ by a parity check matrix, the seed length is linear in the source length. However the error is also $2^{-\Omega(n)}$. If we only need to achieve a bigger error, we can reduce the seed length by using the parity check matrix of a BCH code with larger distance. Specifically, when the distance is $2t+1$ the seed length is roughly $n/t$. However we need to guarantee something else. For example, in the construction for min-entropy $k>n/2$, we need to show that both $\IP(X, \bar{Y})$ and $\IP(X, (\bar{Y}+\bar{Y}'))$ are still close to uniform. This can be shown as follows. Since now the columns of the parity check matrix are $2t$-wise linearly independent, both $\frac{t}{2}\bar{Y}$ and $\frac{t}{2}(\bar{Y}+\bar{Y}')$ will now have min-entropy roughly $\frac{t}{2}H_{\infty}(Y) = n/2$. Thus we can conclude that both $\IP(X, \frac{t}{2}\bar{Y})$ and $\IP(X, \frac{t}{2}(\bar{Y}+\bar{Y}'))$ are close to uniform, and therefore both $\IP(X, \bar{Y})$ and $\IP(X, (\bar{Y}+\bar{Y}'))$ are also close to uniform, by the Cauchy-Schwarz inequality. However the error increases according to the seed length. Calculations show that we can get seed length $d=O(\log n+\log (1/\e))$. 

\subsection{An optimal privacy amplification protocol for $k=\delta n$}
In \cite{DLWZ11}, the authors give a privacy amplification protocol for $k=\delta n$ with $C=\poly(1/\delta)$ rounds and entropy loss $\poly(1/\delta)s$, where $s$ is the security parameter. Here we want to somehow ``compress" the protocol into 2 rounds while still keeping the entropy loss to be $O(s)$. As in \cite{DLWZ11}, we first use the condenser in \cite{BarakKSSW05, Raz05, Zuc07} to convert the shared $(n,k)$ source $X$ into a somewhere rate-$0.9$ source $(X_1, \cdots, X_C)$ with $C=\poly(1/\delta)$ rows. Now the high-level idea of the protocol is as follows. In the first round, Alice samples a fresh random string $Y_1$ from her private random bits and sends it to Bob, where Bob receives a possibly modified version $Y'_1$. In the second round, Bob samples a fresh random string $W'$ from his private random bits and tries to send it to Alice, where Alice receives a possibly modified version $W$. We want a protocol such that if Eve does not change $Y_1$, then with high probability Bob can authenticate $W'$ to Alice and they can both output $\Ext(X, W')$ as the final outputs, by using a strong seeded extractor $\Ext$. If Eve does change $Y_1$, then with high probability Alice should be able to detect this and reject.

The first goal is relatively easy to achieve. At the end of the first round, Alice and Bob compute $Z=\Ext(X, Y_1)$ and $Z'=\Ext(X, Y'_1)$ respectively, using a strong extractor $\Ext$. If Eve does not change $Y_1$ then $Z=Z'$ and is private and uniform. Thus in the second round Bob can authenticate $W'$ to Alice by also sending a tag $T'$ produced by a standard MAC (message authentication code) with $Z$ as the key. We now focus on the second goal. If the extractor $\Ext$ in computing $Z$ and $Z'$ is non-malleable for entropy $k$ then this can be done by using the protocol proposed by Dodis and Wichs \cite{DW09}. However, we do not have explicit non-malleable extractors for $k=\delta n$. 

Nevertheless, we will still have Alice and Bob each produce a variable $V$ and $V'$ respectively. We will ensure that, if Eve changes $Y_1$ to a different $Y'_1$, then even given $T'$ and $V'$, with high probability Eve cannot come up with the correct $V$ for Alice. If this is true then in the second round we can have Bob also send $V'$ to Alice, where Alice receives a possibly modified version $\bar{V}$. Alice then checks both the tag $T$ and whether $V = \bar{V}$. If either of them fails, Alice rejects. This will give us a privacy amplification protocol.

The first problem with the above strategy is that now $V'$ may give information about $Z'$, thus now the MAC key may not be uniform. This is easy to solve since there are constructions of MACs that work as long as the key has entropy rate $>1/2$. Thus by limiting the size of $V'$ to be at most half the size of $Z'$, we can ensure that if Eve does not change $Y_1$, Bob can still authenticate $W'$ to Alice. We now explain how we produce the variables $V, V'$.

We actually have Alice produce $C$ variables $V=(V_1, \cdots, V_C)$. Similarly, Bob produces $V'=(V'_1, \cdots, V'_C)$. For this, we first choose a non-malleable extractor and have Alice and Bob each apply the extractor to the somewhere rate-$0.9$ source $(X_1, \cdots, X_C)$, using $Y_1$ and $Y'_1$ as the seeds respectively. Let the outputs be $(\bar{X}_1, \cdots, \bar{X}_C)$ and $(\bar{X'_1}, \cdots, \bar{X'_C})$. Note that one of the $X_i$'s, say $X_g$ is a rate $0.9$-source. Thus we can use the non-malleable extractors in \cite{DLWZ11, CRS11, Li12}. Now we fix $Y_1, Y'_1$, and we have that $\bar{X_g}$ is uniform and independent of $\bar{X'_g}$. Thus we can fix $\bar{X'_g}$ and $\bar{X_g}$ is still uniform. Next, we fix $Z'$. Since now $Z'$ is a deterministic function of $X$, as long as the size of $Z'$ is smaller than the size of $\bar{X_g}$, conditioned on this fixing $\bar{X_g}$ still has a lot of entropy left. We will now have Alice extract each $V_i$ from $\bar{X_i}$, and correspondingly, Bob will extract each $V'_i$ from $\bar{X'_i}$. Note that we can indeed ensure that the size of $\bar{X_g}$ is bigger than $Z'$, while the size of $Z'$ is bigger than the size of $(V'_1, \cdots, V'_C)$ just by limiting the size of each $V'_i$.

Ideally, we would want $V, V'$ be such that $V_g$ is close to uniform conditioned on $V'=(V'_1, \cdots, V'_C)$. However, we cannot achieve exactly this. Instead, what we can achieve is that $V_g$ is close to uniform conditioned on $(V'_1, \cdots, V'_g)$. Once we have this, we can limit the size of $(V'_{g+1}, \cdots, V'_C)$ to be smaller than the size of $V_g$. Thus $V_g$ still has a lot of entropy even conditioned on $V'=(V'_1, \cdots, V'_C)$. This will ensure that with high probability Eve cannot come up with the correct $V_g$. Since we do not know which one of $\{\bar{X_i}\}$ is $\bar{X_g}$, we will choose $(V_1, \cdots, V_C)$ such that the size of $V_C$ is say $2s$, and for any $i$ the size of $V_i$ is twice the size of $V_{i+1}$. In this way, no matter what $g$ is, the size of $(V'_{g+1}, \cdots, V'_C)$ is the size of $V_g$ minus $2s$. Thus $V_g$ still has $2s$ entropy left conditioned on $V'$. 

Finally we explain how we can achieve the above property. We achieve this by using the ``look-ahead" extractor in \cite{DW09} based on an alternating extraction protocol. Specifically, in the first round we also have Alice sample two other random strings $(Y_2, Y_3)$ and send them to Bob, where Bob receives $(Y'_2, Y'_3)$. Note that after we fix $(Y_1, Y'_1)$, $(Y'_2, Y'_3)$ is a deterministic function of $(Y_2, Y_3)$. Now pick a strong extractor $\Ext$ and Alice performs the following alternating extraction protocol: $S_1=Y_3, R_1=\Ext(X, S_1), S_2=\Ext(Y_2, R_1), R_2=\Ext(X, S_2), \cdots S_C=\Ext(Y_2, R_{C-1}), R_C=\Ext(X, S_C)$. Bob will perform the same protocol using $(Y'_2, Y'_3)$ and produces $\{S'_i, R'_i\}$. As long as the size of $(S_i, R_i)$ is limited, this protocol has the property that for any $i$, $(R_i, S_i)$ is uniform and independent of $\{S_j, R_j, S'_j, R'_j, j < i\}$. We now modify this protocol such that whenever $S_i$ is used to extract $R_i$, Alice also uses it to extract $V_i=\Ext(\bar{X_i}, S_i)$. Correspondingly, Bob extracts $V'_i=\Ext(\bar{X'_i}, S'_i)$. Now one can show that as long as the size of $V_i$ is also limited, we have that for any $i$, $(R_i, S_i)$ is uniform and independent of $\{S_j, R_j, V_j, S'_j, R'_j, V'_j,  j < i\}$. Specifically, we have that $S_g$ is uniform and independent of $\{S_j, R_j, V_j, S'_j, R'_j, V'_j,  j < g\}$. Moreover, we can show that conditioned on the fixing of $\{S_j, R_j, V_j, S'_j, R'_j, V'_j,  j < g\}$, both $S_g$ and $S'_g$ are deterministic functions of $(Y_2, Y_3)$ and are thus independent of $\bar{X_g}$. Furthermore, $\bar{X_g}$ still has a lot of entropy left. Now since $\Ext$ is a strong extractor, we have that $V_g=\Ext(\bar{X}_g, S_g)$ is uniform conditioned on $\{V_j, V'_j, j<g\}$ and $(S_g, S'_g)$. Note that we have fixed $(\bar{X'_g}, Z')$ before,  while $V'_g=\Ext(\bar{X'_g}, S'_g)$ and $T'$ is a function of $Z'$. Thus $V_g$ is still uniform even conditioned on $(\{V'_j, j \leq g\}, T')$. Thus we have achieved our goal.

One small problem with the above discussion is that in the first round Alice sends $(Y_1, Y_2, Y_3)$ to Bob and Bob receives $(Y'_1, Y'_2, Y'_3)$. Thus $Y'_1$ is a function of $(Y_1, Y_2, Y_3)$. Therefore fixing $(Y_1, Y'_1)$ may cause $(Y_2, Y_3)$ to lose entropy. Thus in the alternating extraction protocol $(Y_2, Y_3)$ may not be uniform. However, by making the size of $Y_1$ a constant times smaller than the size of $Y_3$, we can ensure that $Y_3$ has entropy rate $>2/3$ conditioned on $(Y_1, Y'_1)$. Thus we can add a step $0$ in the alternating extraction protocol: $S_0=Y_3, R_0=\Raz(S_0, X), S_1=\Ext(Y_2, R_0)$ and the following protocol remains the same. Here $\Raz$ is the two-source extractor in \cite{Raz05} that works as long as one of the source has entropy rate $>1/2$. This gives our whole privacy amplification protocol. Note that the entropy loss is $O(2^Cs)=2^{\poly(1/\delta)}s$. For any constant $\delta>0$ this is still $O(s)$. By using the improved non-malleable extractor in \cite{Li12} that has short seed length and large output size (In fact, it suffices to use the non-malleable condensers in \cite{Li12}, instead of extractors), we can achieve randomness complexity (the number of truly random bits needed) $O(Cs)=\poly(1/\delta)s$ and communication complexity $O(2^Cs)=2^{\poly(1/\delta)}s$. \\
      
\noindent{\bf Organization.} The rest of the paper is organized as follows. We give some preliminaries in \sectionref{sec:prelim}. In \sectionref{sec:nmtwo} we show that non-malleable extractors can be used to construct two-source extractors. In \sectionref{sec:twonm} we show that two-source extractors based on the inner product function can be used to construct non-malleable extractors. In \sectionref{sec:nmext} we give our new and improved constructions of non-malleable extractors. In \sectionref{sec:protocol} we give our privacy amplification protocol for arbitrarily linear min-entropy. We conclude with some open problems in \sectionref{sec:con}. The existence of generalized non-malleable extractors is proved in \appendixref{sec:nmexist} and an alternative construction of non-malleable extractors for entropy $(1/2-\delta)n$ is given in \appendixref{sec:alter}.


\section{Preliminaries} \label{sec:prelim}
We often use capital letters for random variables and corresponding small letters for their instantiations. Let $|S|$ denote the cardinality of the set~$S$.
Let $\dbZ_r$ denote the cyclic group $\dbZ/(r\dbZ)$,
and let $\F_q$ denote the finite field of size $q$.
All logarithms are to the base 2.

\subsection{Probability distributions}
\begin{definition} [statistical distance]Let $W$ and $Z$ be two distributions on
a set $S$. Their \emph{statistical distance} (variation distance) is
\begin{align*}
\Delta(W,Z) \eqdef \max_{T \subseteq S}(|W(T) - Z(T)|) = \frac{1}{2}
\sum_{s \in S}|W(s)-Z(s)|.
\end{align*}
\end{definition}

We say $W$ is $\eps$-close to $Z$, denoted $W \approx_\eps Z$, if $\Delta(W,Z) \leq \eps$.
For a distribution $D$ on a set $S$ and a function $h:S \to T$, let $h(D)$ denote the distribution on $T$ induced by choosing $x$ according to $D$ and outputting $h(x)$.
We often view a distribution as a function whose value at a sample point is the probability of that sample point.
Thus $\lone{W-Z}$ denotes the $\ell_1$ norm of the difference of the distributions specified by the random variables $W$ and $Z$, which equals $2\Delta(W,Z)$.

\begin{definition}

A function $\TExt : \bits^{n_1} \times \bits^{n_2} \rightarrow \bits^m$ is  a \emph{strong two source extractor} for min-entropy $k_1, k_2$ and error $\e$ if for every independent  $(n_1, k_1)$ source $X$ and $(n_2, k_2)$ source $Y$, 

\[ |(\TExt(X, Y), X)-(U_m, X)| < \e\]

and

\[ |(\TExt(X, Y), Y)-(U_m, Y)| < \e,\]
where $U_m$ is the uniform distribution on $m$ bits independent of $(X, Y)$. 
\end{definition}

\subsection{Somewhere Random Sources, Extractors and Condensers}

\begin{definition} [Somewhere Random sources] \label{def:SR} A source $X=(X_1, \cdots, X_t)$ is $(t \times r)$
  \emph{somewhere-random} (SR-source for short) if each $X_i$ takes values in $\bits^r$ and there is an $i$ such that $X_i$ is uniformly distributed.
\end{definition}

\BD
An elementary somewhere-k-source is a 	vector of sources $(X_1, \cdots, X_t)$, such that some $X_i$ is a $k$-source. A somewhere $k$-source is a convex combination of elementary somewhere-k-sources.
\ED

\BD
A function $C: \bits^n \times \bits^d \to \bits^m$ is a $(k \to l, \e)$-condenser if for every $k$-source $X$, $C(X, U_d)$ is $\e$-close to some $l$-source. When convenient, we call $C$ a rate-$(k/n \to l/m, \e)$-condenser.   
\ED

\BD
A function $C: \bits^n \times \bits^d \to \bits^m$ is a $(k \to l, \e)$-somewhere-condenser if for every $k$-source $X$, the vector $(C(X, y)_{y \in \bits^d})$ is $\e$-close to a somewhere-$l$-source. When convenient, we call $C$ a rate-$(k/n \to l/m, \e)$-somewhere-condenser.   
\ED

We are going to use condensers recently constructed based on the sum-product theorem. The following constructions are due to Zuckerman \cite{Zuc07}.

\BT [\cite{Zuc07}] \label{thm:tcondenser}
There exists a constant $\alpha>0$ such that for any constant $0<\delta<0.9$, there is an efficient family of rate-$(\delta \to (1+\alpha)\delta, \e=2^{-\Omega(n)})$-somewhere condensers $\scond: \bits^n \to (\bits^m)^2$ where $m=\Omega(n)$. 

\ET

\BT [\cite{BarakKSSW05, Raz05, Zuc07}] \label{thm:swcondenser}
For any constant $\beta, \delta>0$, there is an efficient family of rate-$(\delta \to 1-\beta, \e=2^{-\Omega(n)})$-somewhere condensers $\zuc: \bits^n \to (\bits^m)^D$ where $D=O(1)$ and $m=\Omega(n)$. 

\ET

\subsection{Average conditional min-entropy}
\label{avgcase}

Dodis and Wichs originally defined non-malleable extractors with respect to average conditional min-entropy, a notion defined by
Dodis, Ostrovsky, Reyzin, and Smith \cite{dors}.

\begin{definition}
The \emph{average conditional min-entropy} is defined as
\[ \thinf(X|W)= - \log \left (\expect_{w \leftarrow W} \left [ \max_x \Pr[X=x|W=w] \right ] \right )
= - \log \left (\expect_{w \leftarrow W} \left [2^{-\hinf(X|W=w)} \right ] \right ).
\]
\end{definition}

Average conditional min-entropy tends to be useful for cryptographic applications.
By taking $W$ to be the empty string, we see that average conditional min-entropy is at least as strong as min-entropy.
In fact, the two are essentially equivalent, up to a small loss in parameters.

We have the following lemmas.

\begin{lemma} [\cite{dors}]
\label{entropies}
For any $s > 0$,
$\Pr_{w \leftarrow W} [\hinf(X|W=w) \geq \thinf(X|W) - s] \geq 1-2^{-s}$.
\end{lemma}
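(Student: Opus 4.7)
The plan is to reduce the claim to a direct application of Markov's inequality applied to a suitable nonnegative random variable derived from the conditional min-entropy.

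First, I would introduce the auxiliary random variable $Z = 2^{-\hinf(X \mid W=W)}$, i.e., the random variable obtained by sampling $w \leftarrow W$ and outputting $2^{-\hinf(X \mid W=w)}$. This is manifestly nonnegative. The key observation, which is immediate from the definition of average conditional min-entropy stated just above the lemma, is that
\[
\expect[Z] \;=\; \expect_{w \leftarrow W}\bigl[2^{-\hinf(X \mid W=w)}\bigr] \;=\; 2^{-\thinf(X \mid W)}.
\]
So the lemma's statement $\thinf(X \mid W)$ is, by design, the right logarithmic rescaling of $\expect[Z]$.

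Next, I would translate the event of interest into an event about $Z$. The bad event $\hinf(X \mid W=w) < \thinf(X \mid W) - s$ is equivalent, after exponentiating and negating, to $2^{-\hinf(X \mid W=w)} > 2^{s} \cdot 2^{-\thinf(X \mid W)}$, i.e., to $Z > 2^{s} \cdot \expect[Z]$.

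Finally, since $Z \ge 0$, Markov's inequality immediately gives
\[
\Pr\bigl[Z > 2^{s} \cdot \expect[Z]\bigr] \;\le\; 2^{-s},
\]
and taking complements yields the claimed bound $\Pr_{w \leftarrow W}[\hinf(X \mid W=w) \ge \thinf(X \mid W) - s] \ge 1 - 2^{-s}$. There is essentially no obstacle here; the only subtlety is recognizing that the definition of $\thinf$ as a $-\log$ of an expectation is precisely what makes a Markov-style tail bound yield the ``typical min-entropy is close to average min-entropy'' conclusion. This is why the lemma, though useful, follows in one line once the right random variable is isolated.
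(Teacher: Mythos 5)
Your proof is correct and is exactly the standard Markov-inequality argument; the paper does not reprove this lemma (it cites \cite{dors}), and the argument there is the same one you give.
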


\BL [\cite{dors}] \label{lem:amentropy}
If a random variable $B$ has at most $2^{\ell}$ possible values, then $\thinf(A|B) \geq \hinf(A)-\ell$.
\EL

To clarify which notion of min-entropy and non-malleable extractor we mean, we use the term \emph{worst-case non-malleable extractor} when we refer to
our Definition~\ref{nmdef}, which is with respect to traditional (worst-case) min-entropy, and \emph{average-case non-malleable extractor} to refer to
the original definition of Dodis and Wichs, which is with respect to average conditional min-entropy.

\begin{corollary}
A $(k,\eps)$-average-case non-malleable extractor is a $(k,\eps)$-worst-case non-malleable extractor.
For any $s>0$, a $(k,\eps)$-worst-case non-malleable extractor is a $(k+s,\eps + 2^{-s})$-average-case non-malleable extractor.
\end{corollary}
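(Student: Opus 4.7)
The plan is to handle the two directions separately, and in each case simply to unpack the definitions of average and worst-case conditional min-entropy, invoking \lemmaref{entropies} exactly once.

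For the first direction, I would argue that worst-case is a special case of average-case by taking trivial conditioning. Given any $(n,k)$-source $X$ (in the worst-case sense) and any adversarial $\adv$, define the auxiliary variable $Z$ to be a constant. Then $\thinf(X \mid Z) = \hinf(X) \geq k$, so the $(k,\eps)$-average-case guarantee applies and yields
\[
(\nm(X,Y),\nm(X,\adv(Y)),Y) \approx_\eps (U_m,\nm(X,\adv(Y)),Y),
\]
which is exactly the worst-case conclusion.

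For the second direction, the idea is a standard ``good event / bad event'' split using \lemmaref{entropies}. Suppose $\nm$ is a $(k,\eps)$-worst-case non-malleable extractor, and let $(X,Z)$ be any pair with $\thinf(X \mid Z) \geq k+s$. By \lemmaref{entropies}, the event $\calE = \{z : \hinf(X \mid Z=z) \geq k\}$ has probability at least $1 - 2^{-s}$ under $Z$. For each $z \in \calE$, the conditional distribution $X \mid Z=z$ is an $(n,k)$-source, and since $Y$ is independent of $(X,Z)$, the worst-case guarantee applied to this source gives
\[
\bigl((\nm(X,Y),\nm(X,\adv(Y)),Y) \mid Z=z\bigr) \approx_\eps \bigl((U_m,\nm(X,\adv(Y)),Y) \mid Z=z\bigr).
\]
Now I would bound the overall statistical distance by averaging over $Z$: good $z$ contribute at most $\eps$ and bad $z$ contribute at most $1$ each, yielding total error at most $(1-2^{-s})\eps + 2^{-s} \leq \eps + 2^{-s}$, which is the desired conclusion.

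The entire argument is essentially a definitional unpacking, so there is no substantive obstacle; the only subtlety is ensuring that the ``extra'' variable $Z$ in the average-case definition can be appended to both sides of the closeness statement when averaging, which uses the fact that statistical distance between joint distributions equals the $Z$-expectation of the conditional statistical distances. I would make this explicit in the write-up to justify the averaging step cleanly.
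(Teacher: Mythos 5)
Your proof is correct, and it is exactly the intended argument: the paper states this corollary without an explicit proof because it is the routine consequence of Lemma~\ref{entropies}, which you invoke in precisely the way that lemma is designed to be used. Both directions, the trivial-conditioning observation for the easy implication and the good/bad split with the averaging identity $\Delta((A,Z),(B,Z)) = \dbE_{z \leftarrow Z}\,\Delta(A\mid_{Z=z}, B\mid_{Z=z})$ for the other, are standard and fully correct; the subtlety you flag (appending $Z$ on both sides) is indeed the only point that deserves a sentence of justification.
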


Throughout the rest of our paper, when we say non-malleable extractor, we refer to the worst-case non-malleable extractor of Definition~\ref{nmdef}.

\subsection{Fourier analysis}

We give some basic and standard facts about Fourier analysis here. We normalize as in \cite{DLWZ11}. 
For functions $f,g$ from a set $S$ to $\dbC$, we define the inner product $\angles{f,g} = \sum_{x \in S} f(x) \overline{g(x)}$.
Let $D$ be a distribution on $S$, sometimes we will also view it as a function from $S$ to $\dbR$.
Note that $\expect_D [f(D)] = \angles{f,D}$.
Now suppose we have functions $h:S \to T$ and $g:T \to \dbC$. 
Then
\[ \angles{g \circ h, D} = \expect_D [g(h(D))] = \angles{g,h(D)}.\]

Let $G$ be a finite abelian group, we say $\phi$ is a character of $G$ if it is a homomorphism from $G$ to $\dbC^\times$.
We call the character that maps all elements to 1 the trivial character.
Define the Fourier coefficient $\widehat{f}(\phi) = \angles{f,\phi}$, and let $\widehat{f}$ denote the vector with entries $\widehat{f}(\phi)$ for all $\phi$.
Note that for a distribution $D$, one has $\widehat{D}(\phi) = \expect_D[\phi(D)]$.

Since the characters divided by $\sqrt{|G|}$ form an orthonormal basis,
the inner product is preserved up to scale: $\angles{\widehat{f},\widehat{g}} = |G| \angles{f,g}$.  As a corollary, we obtain
Parseval's equality:
\[ \ltwo{\widehat{f}}^2=\angles{\widehat{f},\widehat{f}} = |G| \angles{f,f}=|G|\ltwo{f}^2.\]
Hence by Cauchy-Schwarz,
\begin{equation}
\label{fourier-bound}
\lone{f} \leq \sqrt{|G|} \ltwo{f} = \ltwo{\widehat{f}} \leq \sqrt{|G|} \linfty{\widehat{f}}.
\end{equation}

For functions $f,g:S \to \dbC$, we define the function $(f,g):S \times S \to \dbC$ by $(f,g)(x,y) = f(x)g(y)$.
Thus, the characters of the group $G \times G$ are the functions $(\phi,\phi')$, where $\phi$ and $\phi'$ range over all characters of $G$.
We abbreviate the Fourier coefficient $\widehat{(f,g)}((\phi,\phi'))$ by $\widehat{(f,g)}(\phi,\phi')$.  Note that
\[ \widehat{(f,g)}(\phi,\phi') = \sum_{(x,y) \in G \times G} f(x)g(y)\phi(x)\phi'(y) = \left ( \sum_{x \in G} f(x)\phi(x) \right ) \left ( \sum_{y \in G} g(x)\phi'(x) \right )
= \widehat{f}(\phi) \widehat{g}{(\phi')}. \]

In this paper, in the additive group of $\F_p$ we use the characters $e_r(s)=e^{2\pi i rs/p}$ for $r \in \F_p$. It is easy to verify that $\{e_r, r \in F_p\}$ indeed are characters and these characters divided by $\sqrt{p}$ form an orthonormal basis. Note that the trivial character corresponds to the case $r=0$.

We next generalize the characters to the additive group of the field $\F_{p^l}$. In this case, for any $r \in \F_{p^l}$, we use the character $e_r(s) = e^{2 \pi i (r \cdot s)/p}$, where $r$ and $s$ are viewed as vectors in $\F^l_p$ and $\cdot$ indicates the inner product function in $\F^l_p$. Again it is easy to verify that these indeed are characters and they form an orthonormal basis (up to a normalization factor of $p^{l/2}$).

\subsection{Non-uniform XOR lemma}
The following non-uniform XOR lemmas are proved in \cite{DLWZ11}. 

\begin{lemma}
\label{special-case}
Let $(W,W')$ be a random variable on $G \times G$ for a finite abelian group $G$, and suppose
that for all characters $\psi,\psi'$ on $G$ with $\psi$ nontrivial, one has
$$|\expect_{(W,W')}[\psi(W)\psi'(W')]| \leq \e.$$
Then the distribution of $(W,W')$ is $\e |G|$ close to $(U,W')$, where $U$ is the uniform distribution on $G$ independent of $W'$.
Moreover, for $f:G\times G \to \dbR$ defined as the difference of distributions $(W,W') - (U,W')$,
we have $\linfty{\widehat{f}} \leq \e$.
\end{lemma}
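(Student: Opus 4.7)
The plan is to pass directly from the given character bound to an $\linfty{\cdot}$ bound on the Fourier transform of $f=(W,W')-(U,W')$, and then convert to an $\ell^1$ bound (hence a statistical distance bound) via inequality~(\ref{fourier-bound}) applied to the product group $G\times G$. The only real computation is identifying $\widehat{f}(\psi,\psi')$ case-by-case according to whether the first character $\psi$ is trivial.

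First I would write $f(x,y)=\Pr[(W,W')=(x,y)] - \Pr[U=x]\Pr[W'=y]$ and compute its Fourier coefficients on $G\times G$, using the tensor-product identity $\widehat{(f_1,f_2)}(\phi,\phi')=\widehat{f_1}(\phi)\widehat{f_2}(\phi')$ established immediately before the lemma. Since $U$ is uniform on $G$, the standard orthogonality of characters gives $\widehat{U}(\psi)=0$ for every nontrivial $\psi$, and $\widehat{U}(\psi_0)=1$ for the trivial character $\psi_0$. A one-line marginalization similarly shows $\widehat{(W,W')}(\psi_0,\psi')=\widehat{W'}(\psi')$.

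Combining these, whenever $\psi$ is trivial the two Fourier coefficients coincide, so $\widehat{f}(\psi_0,\psi')=0$. Whenever $\psi$ is nontrivial, the $(U,W')$ coefficient vanishes and $\widehat{f}(\psi,\psi')=\widehat{(W,W')}(\psi,\psi')$, whose magnitude equals $|\expect[\psi(W)\psi'(W')]|$ and is therefore bounded by $\e$ by hypothesis. This already establishes $\linfty{\widehat{f}}\le \e$, which is the ``moreover'' clause.

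For the statistical distance claim, I would then apply~(\ref{fourier-bound}) with the group $G\times G$ of size $|G|^2$: it yields $\lone{f}\le \sqrt{|G|^2}\,\linfty{\widehat{f}}=|G|\e$, and since statistical distance is $\lone{f}/2$, we get $\Delta((W,W'),(U,W'))\le |G|\e/2$, comfortably within the claimed $|G|\e$. I do not anticipate any serious obstacle here: the argument is essentially a direct Fourier expansion on $G\times G$, and the hypothesis is exactly tailored to control the only nonzero coefficients of $\widehat{f}$.
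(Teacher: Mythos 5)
Your proposal is correct and follows essentially the same route as the paper's own proof: compute $\widehat{f}$ coefficient-by-coefficient, observe that it vanishes when $\psi$ is trivial and that the $(U,W')$ contribution vanishes when $\psi$ is nontrivial, then invoke inequality~(\ref{fourier-bound}) on $G\times G$. The only cosmetic difference is that you phrase the vanishing of the $(U,W')$ coefficient via the tensor-product identity $\widehat{(U,W')}(\psi,\psi')=\widehat{U}(\psi)\widehat{W'}(\psi')$, while the paper writes it directly as a product of two expectations using independence; these are the same computation.
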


\begin{lemma}\label{lem:noxor}
For every cyclic group $G=\Z_N$ and every integer $M \leq N$, there is an efficiently computable function $\sigma: \Z_N \to \Z_M=H$ such that the following holds. Let $(W,W')$ be a random variable on $G \times G$, and suppose that for all characters $\psi,\psi'$ on $G$ with $\psi$ nontrivial, one has
$$|\expect_{(W,W')}[\psi(W)\psi'(W')]| \leq \e.$$
Then the distribution $(\sigma(W), \sigma(W'))$ is $O(\e M \log N+M/N)$-close to the distribution $(U, W')$ where $U$ stands for the uniform distribution over $H$ independent of $W'$.
\end{lemma}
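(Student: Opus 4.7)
The plan is to take $\sigma:\dbZ_N\to\dbZ_M$ to be the natural scaling map $\sigma(w)=\lfloor wM/N\rfloor$, which is efficiently computable and partitions $\dbZ_N$ into $M$ consecutive intervals of length $\lfloor N/M\rfloor$ or $\lceil N/M\rceil$. The analysis will reduce the statistical distance claim on $\dbZ_M$ to a character-sum estimate by Fourier analysis, by expressing each character $\chi_s$ of $\dbZ_M$, pulled back through $\sigma$, as a linear combination of characters $\psi_r$ of $\dbZ_N$, and then invoking the hypothesis.

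Concretely, I would let $f$ be the difference of the two distributions in question, viewed on $\dbZ_M\times\dbZ_M$ (interpreting the right-hand side of the statement so that its second marginal is $\sigma(W')$). As in the proof of Lemma~\ref{special-case}, the Fourier coefficient $\widehat{f}(\chi_0,\chi_t)$ vanishes for every $t$, while for $s\neq 0$ one has $\widehat{f}(\chi_s,\chi_t)=\expect[\chi_s(\sigma(W))\chi_t(\sigma(W'))]$. Expanding $\chi_s(\sigma(w))=\sum_r c_{s,r}\psi_r(w)$ with $c_{s,r}=\frac{1}{N}\sum_w \chi_s(\sigma(w))\overline{\psi_r(w)}$ one gets $\widehat{f}(\chi_s,\chi_t)=c_{s,0}\,\expect[\chi_t(\sigma(W'))]+\sum_{r\neq 0} c_{s,r}\,\expect[\psi_r(W)\chi_t(\sigma(W'))]$. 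A second Fourier expansion of $\chi_t\circ\sigma$ combined with the character-sum hypothesis then bounds each inner expectation in the last sum by $\e\sum_{r'}|c_{t,r'}|$.

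Two estimates on the coefficients $c_{s,r}$ drive the final bound. First, $|c_{s,0}|\le M/N$ for $s\neq 0$: writing $c_{s,0}=\frac{1}{N}\sum_a(|\sigma^{-1}(a)|-N/M)\chi_s(a)$ (using $\sum_a\chi_s(a)=0$), each summand is bounded by $1/N$ since every fiber $|\sigma^{-1}(a)|$ differs from $N/M$ by less than $1$. Second, $\sum_r|c_{s,r}|=O(\log N)$ by a Dirichlet-kernel type estimate: $c_{s,r}$ decomposes into a sum over the $M$ intervals of a partial geometric series whose magnitude decays like $N/|s+kM|$ along the natural arithmetic progression of nonvanishing frequencies, whose sum is $O(\log N)$. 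Combining these two bounds gives $|\widehat{f}(\chi_s,\chi_t)|=O(M/N+\e\log N)$, and the paper's Fourier inequality $\lone{f}\le\sqrt{|G|}\linfty{\widehat{f}}$ with $|G|=M^2$ then yields the claimed statistical distance bound.

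The main obstacle will be carrying out the Dirichlet-style estimate $\sum_r|c_{s,r}|=O(\log N)$ uniformly in $s$ in the case $M\nmid N$, where $\chi_s\circ\sigma$ fails to be genuinely periodic on $\dbZ_N$ and the clean geometric-series identities of the divisible case become approximate; tracking the partial-period remainder carefully is essential. Note that the boundary contribution $|c_{s,0}|\le M/N$ is precisely what produces the additive $M/N$ term in the error, in sharp contrast to the trivial $O(\e N)$ one would obtain by applying Lemma~\ref{special-case} directly on $\dbZ_N$, which is useless when $M\ll N$.
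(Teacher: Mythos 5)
The paper does not give its own proof of Lemma~\ref{lem:noxor}; it is cited from \cite{DLWZ11}. So I cannot compare your proposal against a proof in this manuscript, and I will only assess it on its own terms. Your overall strategy — take $\sigma$ to be the scaling map, Fourier-expand $\chi_s\circ\sigma$ in terms of characters $\psi_r$ of $\Z_N$, control $|c_{s,0}|\le M/N$ by exploiting near-equal fiber sizes, and control $\sum_r|c_{s,r}|=O(\log N)$ by a Dirichlet-kernel estimate — is the right kind of argument, and both coefficient estimates are correct.

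However, the way you combine them does not yield the stated bound, and the intermediate inequality you assert is wrong by your own bookkeeping. After expanding $\chi_s\circ\sigma$, you still have $\chi_t\circ\sigma$ inside the expectation, and you necessarily expand it too before the hypothesis $|\expect[\psi_r(W)\psi_{r'}(W')]|\le\eps$ can be applied. The $r\ne 0$ contribution is therefore $\sum_{r\ne 0}|c_{s,r}|\cdot\eps\sum_{r'}|c_{t,r'}|=O(\eps\log^2 N)$, with two $\ell^1$-Fourier factors, not one; so the correct conclusion of your computation is $|\widehat f(\chi_s,\chi_t)|=O(M/N+\eps\log^2 N)$, not $O(M/N+\eps\log N)$. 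Feeding this into $\lone{f}\le\sqrt{|G|}\,\linfty{\widehat f}$ with $|G|=M^2$ then gives $O\bigl(M^2/N+\eps M\log^2 N\bigr)$, which overshoots the claimed $O(\eps M\log N+M/N)$ by a factor of $M$ in the first term and a factor of $\log N$ in the second. So as written the proposal does not establish the lemma. (The weaker bound you actually obtain is still strong enough for the application in Appendix~\ref{sec:alter}, where $M=2^{\Omega(n)}$ is taken sufficiently small relative to $N\approx 2^n$ and $\eps=2^{-\Omega(n)}$, but that does not make it a proof of the lemma as stated.) To recover the stated bound you would need to avoid the lossy $\linfty{\widehat f}\to\lone{f}$ step — e.g., by estimating $\sum_{s,t}|\widehat f(\chi_s,\chi_t)|$ directly and exploiting that $c_{s,r}$ is essentially supported on the single residue class $r\equiv s\ (\mathrm{mod}\ M)$, so that the sum over $(s,t)$ does not simply multiply the worst case by $M^2$ — and also remove one $\log N$, which your current double expansion cannot do. Until you track those two losses down, the claim in the final paragraph that you have matched $O(\eps M\log N + M/N)$ is a gap.
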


The following non-uniform XOR lemma is proved in \cite{CRS11}.
\BL \label{lem:noxor2}
Let $X$ be a random variable over $\bits^m$ and $Y$ be a random variable over $\bits^n$. For any subset $\sigma \subseteq [m]$ and $\tau \subseteq [n]$, let $X_{\sigma}=\oplus_{i \in \sigma} X_i$ and $Y_{\tau}=\oplus_{j \in \tau} Y_j$. Assume that for any non-empty $\sigma \subseteq [m]$ and any $\tau \subseteq [n]$, we have $X_{\sigma} \oplus Y_{\tau} \approx_{\e} U$. Then

\[|(X, Y)-(U_m, Y)| \leq ((2^m-1) \cdot 2^n)^{1/2} \e.\]
\EL

\subsection{Strong non-malleable extractor}
The following theorem is proved in \cite{Rao07}.

\BT \label{thm:stext} \cite{Rao07}
Let $\TExt: \bits^{n_1} \times \bits^{n_2} \to \bits^m$ be any two source extractor for min-entropy $k_1, k_2$ with error $\e$. Then if $X$ is an $(n_1, k_1)$ source and $Y$ is an independent $(n_1, k_2')$ source, we have

\[|(\TExt(X, Y), Y)-(U_m, Y)| \leq 2^m(2^{k_2-k_2'+1}+\e).\]
\ET

Here we prove a similar theorem that will enable our non-malleable extractor to be ``strong".

\BT \label{thm:snmext}
Let $\TExt: \bits^{n_1} \times \bits^{n_2} \to \bits^m$ be a two source extractor for min-entropy $k_1, k_2$ and $\adv_i:\bits^{n_2} \to \bits^{n_2}, i=1, \cdots, r$ be $r$ deterministic functions such that for any $(n_1, k_1)$ source $X$ and any independent $(n_2, k_2)$ source $Y$, 

\[|(\TExt(X, Y), \{\TExt(X, \adv_i(Y))\})-(U_m, \{\TExt(X, \adv_i(Y))\})| \leq \e.\]

 Then for any $(n_2, k'_2)$ source $Y'$ independent of $X$, 
 
 \[|(\TExt(X, Y'), \{\TExt(X, \adv_i(Y'))\}, Y')-(U_m, \{\TExt(X, \adv_i(Y'))\}, Y')| \leq 2^{(r+1)m}(2^{k_2-k_2'+1}+\e).\]
\ET

\begin{thmproof}
Let $W=\TExt(X, Y)$ and $W_i'=\TExt(X, \adv_i(Y))$ for any $i \in [r]$. Let $\bar{W}$ be the vector $(W_1', \cdots, W'_r)$. Let $\bar{z}$ be the vector $(z_1', \cdots, z_r') \in (\bits^m)^r $. For any $(z, \bar{z}) \in \bits^m \times (\bits^m)^r$, define the set of bad $y$'s for $(z, \bar{z})$ to be

\[B_{z, \bar{z}} = \{y: |\Pr[W=z, \bar{W}=\bar{z}]-2^{-m}\Pr[\bar{W}=\bar{z}]| > \e\}.\]

Then we must have 

\BCM
For every $(z, \bar{z}) $, $|B_{z, \bar{z}}| < 2 \cdot 2^{k_2}$.
\ECM

To see this, assume for the sake of contradiction that $|B_{z, \bar{z}}| \geq 2\cdot 2^{k_2}$ for some $(z, \bar{z}) $. Let 
\[B^{+}_{z, \bar{z}} = \{y: \Pr[W=z, \bar{W}=\bar{z}]-2^{-m}\Pr[\bar{W}=\bar{z}] > \e\}\] and

\[B^{-}_{z, \bar{z}} = \{y: \Pr[W=z, \bar{W}=\bar{z}]-2^{-m}\Pr[\bar{W}=\bar{z}] < -\e\}.\]

Then $|B_{z, \bar{z}}|=|B^{+}_{z, \bar{z}}|+|B^{-}_{z, \bar{z}} |$ and thus one of them must have size $\geq 2^{k_2}$. Without loss of generality assume that $|B^{+}_{z, \bar{z}}| \geq 2^{k_2}$. Then we can let $Y$ to be the uniform distribution over $|B^{+}_{z, \bar{z}}|$ and $Y$ is independent of $X$, but $|(W, \bar{W})-(U_m, \bar{W})| > \e$, which is a contradiction.

Let $B=\cup_{z, \bar{z}} B_{z, \bar{z}}$. We have $|B|<2^{(r+1)m} \cdot 2 \cdot 2^{k_2}=2^{(r+1)m+1}2^{k_2}$. Now we can bound $|(W, \bar{W}, Y')-(U_m, \bar{W}, Y')|$ when $Y'$ is an independent $(n_2, k'_2)$ source, as follows.

\begin{align*}
&|(W, \bar{W}, Y')-(U_m, \bar{W}, Y')| \\
\leq & \sum_{y \in \Supp(Y')} 2^{-{k'_2}} |(W, \bar{W})|_{Y'=y}-(U_m, \bar{W})|_{Y'=y}| \\ 
=& \sum_{y \in \Supp(Y') \cap B} 2^{-{k'_2}} |(W, \bar{W})|_{Y'=y}-(U_m, \bar{W})|_{Y'=y}| + \sum_{y \in \Supp(Y') \backslash B} 2^{-{k'_2}} |(W, \bar{W})|_{Y'=y}-(U_m, \bar{W})|_{Y'=y}| \\
<& 2^{-{k'_2}}2^{(r+1)m+1}2^{k_2} + 2^{(r+1)m} \e \\
=& 2^{(r+1)m}(2^{k_2-k_2'+1}+\e).
\end{align*}
\end{thmproof}

\subsection{Basic properties of the inner product function}
Here we prove some basic properties of the inner product function.

\BL \label{lem:char1}
Let $\F_p$ be a field and $X, Y$ be two independent random variables over $\F^l_p$. Assume that $X$ has min-entropy $k_1$ and $Y$ has min-entropy $k_2$. Let $Z=\IP(X, Y)=X \cdot Y$ be the inner product function where the operation is in $\F_p$. For any non-trivial character $e_r$ where $r \in \F_p$, 

\[|E_{X, Y}[e_r(Z)]|^2 \leq p^l 2^{-(k_1+k_2)}.\]
\EL

\begin{proof}
Note that if a weak random source $W$ has min-entropy $k$, then $\linfty{W} \leq 2^{-k}$, and $\ltwo{W}^2 = \sum_{w} (\Pr[W=w])^2 \leq 2^{-k} \sum_{w} \Pr[W=w]=2^{-k}$. 

For a fixed $Y=y$,
 
\[E_X[e_r(x \cdot y) ] = E_X [e_{ry}(X)] =\angles{e_{ry}, X}=\overline{\widehat{X}(e_{ry})} . \]

Thus 

\[E_{X, Y}[e_r(Z)]=E_Y [E_X[e_r(x \cdot y) ] ]= E_Y[\overline{\widehat{X}(e_{ry})} ]=\angles{Y, \widehat{X}}.\]

Therefore by Cauchy-Schwartz, 

\begin{align*}
(E_{X, Y}[e_r(Z)])^2 & \leq \angles{Y, Y} \cdot \angles{\widehat{X}, \widehat{X}} \\
 & =\ltwo{Y}^2 \ltwo{\widehat{X}}^2 = p^l \ltwo{Y}^2 \ltwo{X}^2 \\
 & \leq p^l 2^{-k_1} 2^{-k_2} = p^l 2^{-(k_1+k_2)}.
\end{align*}
\end{proof}

Now for any weak random source $W$, we let $2W=W+W$ stand for the distribution that is obtained by first sampling $w_1, w_2$ from two independent and identical distributions according to $W$, and then computing $w_1+w_2$. Similarly $W-W$ is obtained by first sampling $w_1, w_2$ and then computing $w_1-w_2$. Similarly we define $cW$ to be the distribution by sampling $w_i$ from $c$ independent and identical distributions according to $W$, and then computing the sum. We now have the following lemma.

\BL \label{lem:char2}
Let $X, Y$ be two independent random variables over $\F^l_p$. For any two integers $c_1, c_2$, let $X_{c_1}=2^{c_1}X-2^{c_1}X$ and $Y_{c_2}=2^{c_2}Y-2^{c_2}Y$. Then for any non-trivial character $\psi$,

\[|E_{X, Y}[\psi(X \cdot Y)]| \leq |E_{X_{c_1}, Y_{c_2}}[\psi(X_{c_1} \cdot Y_{c_2})]|^{1/2^{c_1+c_2+2}}.\]
\EL

\begin{proof}
First note

\[|E_{X, Y}[\psi(X \cdot Y)]| = |E_Y [E_X [\psi(X \cdot Y) ] ]| \leq E_Y |E_X [\psi(X \cdot Y) ] |. \]

Note that $\psi(s)=e^{2 \pi i rs/p}$ for some $r \in \F_p$. Thus by Jensen's inequality, 

\begin{align*}
(E_{X, Y}[\psi(X \cdot Y)])^2 & \leq E_Y |E_X [\psi(X \cdot Y) ] |^2 = E_Y [E_X [\psi(X \cdot Y) ] \overline{E_X [\psi(X \cdot Y) ]}]\\
& = |E_Y \sum_{x_1, x_2} X(x_1) X(x_2) \psi((x_1-x_2) \cdot Y)| \\
& = |E_Y E_{X-X} [\psi((X-X) \cdot Y)]| \\
& = |E_{X_1, Y} [\psi(X_1 \cdot Y)]|
\end{align*}
where $X_1=X-X$.

Apply the above procedure again, we get that

\[(E_{X, Y}[\psi(X \cdot Y)])^4 \leq |E_{X_1, Y} [\psi(X_1 \cdot Y)]|^2 \leq |E_{X_2, Y} [\psi(X_2 \cdot Y)]|,\]
where $X_2=X_1-X_1=2X-2X$.

Repeat the procedure for $c_1$ times, we get that
\[(E_{X, Y}[\psi(X \cdot Y)])^{2^{c_1+1}} \leq |E_{X_{c_1}, Y} [\psi(X_{c_1} \cdot Y)]|,\]
where $X_{c_1}=2^{c_1}X-2^{c_1}X$.

similarly, we can apply the argument to $Y$ for another $c_2$ times, and we get

\[(E_{X, Y}[\psi(X \cdot Y)])^{2^{c_1+c_2+2}} \leq |E_{X_{c_1}, Y_{c_2}} [\psi(X_{c_1} \cdot Y_{c_2})]|,\]
where $X_{c_1}=2^{c_1}X-2^{c_1}X$ and $Y_{c_2}=2^{c_2}Y-2^{c_2}Y$. Thus the lemma is proved.
\end{proof}

\subsection{Incidence theorems}
We need the following theorems about point line incidences. For a field $\F$, we call a subset $\ell \subset F \times F$ a line if there exist $a, b \in \F$ such that $\ell = \{(x, ax+b)\}$ for all $x \in \F$. Let $P \subset F \times F$ be a set of points and $L$ be a set of lines, we say that a point $(x, y)$ has an incidence with a line $\ell$ if $(x, y) \in \ell$. The following theorem provides a bound on the number of incidences that can be generated from $K$ points and $K$ lines.

\BT \label{thm:incidence} \cite{BourgainKT03, Kon} There exist universal constants $\alpha>0, 0.1>\beta >0$ such that for any field $\F_q$ where $q$ is either prime or $2^p$ for $p$ prime, if $L, P$ are sets of $K$ lines and $K$ points respectively, with $K \leq q^{2-\beta}$, the number of incidences $I(P, L) \leq O(K^{3/2-\alpha})$.  
\ET

\subsection{BCH codes}
In this paper we will only focus on BCH codes over $\F_2$. Given two parameters $m, t \in \N$, a BCH code is a linear code with block length $n=2^m-1$, message length roughly $n-mt$ and distance $d \geq 2t+1$. Specifically, we have the following theorem.

\BT \label{thm:bch}
For all integers $m$ and $t$ there exists an explicit $[n, n-mt, 2t+1]$-BCH code\footnote{In fact, the message length may not be exactly $n-mt$, but for simplicity we will assume that it is exactly $n-mt$. The small error does not affect our analysis. Also, for small $t$ the message length is exactly $n-mt$.}, with $n=2^m-1$.
\ET

Since a BCH code is a linear code, we can take its parity check matrix. Note that this is a $mt \times n$ matrix. Let $\alpha$ be a primitive element in $\F^*_{2^m}$, the $i$'th column of the parity check matrix is of the form $(\alpha^i, (\alpha^i)^3, (\alpha^i)^5, \cdots, (\alpha^i)^{2t-1})$, for $i=0, 1, \cdots, n-1$. Since $\alpha$ is a generator in $\F^*_{2^m}$, equivalently, for $y \in \F^*_{2^m}$ we can think of the $y$'th column to be $(y, y^3, \cdots, y^{2t-1})$.

\section{From Non-Malleable Extractors to Two-Source Extractors} \label{sec:nmtwo}
In this section we show that non-malleable extractors can be used to construct two-source extractors. First we have the following lemmas.

\BL \label{lem:conv} 
Let $X$ be a probability distribution on $\bits^{n_1}$ and $Y, Y_1, \cdots, Y_m$ be probability distributions on $\bits^{n_2}$. Assume that there exists a function $f:\bits^{n_1} \to \bits^{n_2}$ and positive numbers $v_1, \cdots v_m$ with $\sum_i v_i=1$ such that $Y=f(X)$ and $Y=\sum_i v_i Y_i$. Then there exist probability distributions $X_1, \cdots, X_m \in \bits^{n_1}$ such that

\[X=\sum_i v_i X_i \text{ and } Y_i=f(X_i).\]
\EL

\begin{proof}
We define the distributions $\{X_i\}$ as follows. Let $S$ be the support of $Y$. For any $y \in S$, let $S_y=\{x \in \Supp(X): f(x)=y\}$, i.e., $S_y$ is the set of preimages of $y$. Let $p(y)=\Pr[Y=y]$ and $\forall i, p_i(y)=\Pr[Y_i=y]$. Thus we have

\[p(y)=\sum_i v_i p_i(y)\] and 

\[p(y)=\sum_{x \in S_y}\Pr[X=x].\]

Now for any $x \in \Supp(X)$, let $y=f(x)$. Let $\Pr[X_i=x]=q_i(x)=\frac{p_i(y)}{p(y)}\Pr[X=x]$. First note that 

\[\sum_{x \in S_y}q_i(x)=\sum_{x \in S_y}\frac{p_i(y)}{p(y)}\Pr[X=x]=\frac{p_i(y)}{p(y)}\sum_{x \in S_y}\Pr[X=x]=\frac{p_i(y)}{p(y)}p(y)=p_i(y).\]

Thus we have that $Y_i=f(X_i)$. Next note that 

\[\sum_{x}q_i(x) = \sum_y \sum_{x \in S_y} q_i(x)=\sum_y p_i(y)=1.\]

Therefore for any $i$, $X_i$ is indeed a probability distribution. Finally, note that

\[\sum_i v_i q_i(x)=\sum_i \frac{v_i p_i(y)}{p(y)}\Pr[X=x]=\Pr[X=x].\]

Thus we have that $X=\sum_i v_i X_i$.
\end{proof}

\BL \label{lem:decomp}
Let $X$ be a probability distribution on $\bits^{n_1}$. Assume that there exists a function $f:\bits^{n_1} \to \bits^{n_2}$ such that $Y=f(X)$ is an $(n_2, k)$-source. Then there exits an integer $m$ and (flat) $(n_1,k)$-sources $X_1, \cdots, X_m$, bijections $f_1, \cdots, f_m: \bits^{n_1} \to \bits^{n_2}$, positive numbers $v_1, \cdots v_m$ with $\sum_i v_i=1$ such that 

\[X=\sum_i v_i X_i \text{ and } f(X_i)=f_i(X_i).\]
\EL

\begin{proof}
First, note that every $(n_2, k)$-source is a convex combination of flat $(n_2, k)$-sources. By \lemmaref{lem:conv}, if $Y=f(X)$ and $Y=\sum_i v_i Y_i$, then there exist probability distributions $X_1, \cdots, X_m$ on $\bits^{n_1}$ such that $X=\sum_i v_i X_i$ and $Y_i=f(X_i)$. Thus it suffices to consider only the case where $Y$ is a flat $(n_2,k)$-source.

Now let $Y$ be a flat $(n_2, k)$-source. For any $y \in \Supp(Y)$, let $S_y=\{x \in \Supp(X): f(x)=y\}$. For any $x \in \Supp(X)$, let $p(x)=\Pr[X=x]$ and for any $y \in \Supp(Y)$, let $p(y)=\Pr[Y=y]$. Thus we have $\sum_{x \in S_y} p(x)=p(y)=2^{-k}$. We now decompose $X$ into a convex combination and produce the bijections $f_1, \cdots, f_m$ as follows.

Let $i=1$. While $\cup_y S_y$ is not empty, do the following.

\begin{enumerate}
\item Pick $x$ from $\cup_y S_y$ such that $p(x)$ is the minimum. Assume that $x \in S_y$. Now for all $y' \in \Supp(Y), y' \neq y$, pick an arbitrary $x' \in S_{y'}$. Thus for any $x'$, $p(x') \geq p(x)$. We now let the source $X_i$ be the uniform distribution over the set of the chosen $x$'s, and we let the bijection $f_i$ be such that $f_i(x)=y$ and $f_i(x')=y'$ for all $y' \neq y$. This clearly satisfies the property that $f(X_i)=f_i(X_i)$. Next, we let $v_i$ be the total probability mass of $x$'s, i.e., $v_i=2^k p(x)$. 

\item We now want to subtract the probability mass from both $X$ and $Y$. Thus for any $x'$, we let $p(x')=p(x')-p(x)$ and if $p(x')=0$, remove $x'$ from $S_{y'}$. Specifically, we remove $x$ from $S_y$. Similarly, for any $y \in \Supp(Y)$, let $p(y)=p(y)-p(x)$. Note after this we still have that for any $y \in \Supp(Y)$, $\sum_{x \in S_y} p(x)=p(y)$.

\item Finally, let $i=i+1$.
\end{enumerate}

Note that in the above algorithm, in each iteration at least one element will be removed from $\cup_y S_y$. Thus the algorithm will terminate after finite steps, and we obtain $X_1, \cdots, X_m$, $f_1, \cdots, f_m$ and $v_1, \cdots v_m$. Note that in each iteration the $p(y)$'s are always the same. Thus in each step we can always obtain a flat $(n_1,k)$-source $X_i$ and a bijection $f_i$ such that $f_i(X_i)=f(X_i)$. Note that the algorithm terminates only when $\cup_y S_y$ is empty. Since we always have that for any $y \in \Supp(Y)$, $\sum_{x \in S_y} p(x)=p(y)$, when the algorithm terminates we must have that for any $y \in \Supp(Y)$, $p(y)=0$. Thus we have decomposed $X$ into a convex combination of flat sources $X_1, \cdots, X_m$. In other words, $\sum_i v_i=1$. 
\end{proof}

\BL  \label{lem:error}
Let $X$ be a probability distribution over $\bits^{n_1}$, $Y$ be a probability distribution over $\bits^{n_2}$ and $f: \bits^{n_1} \to \bits^{n_2}$ be any deterministic function. Assume that $|f(X)-Y| \leq \e$ for some $0<\e<1$, then there exists a probability distribution $X'$ over $\bits^{n_1}$ such that 

\[|X'-X| \leq \e \text{ and } Y=f(X').\]

\EL

\begin{proof}
For any $y \in \Supp(Y)$, let $p(y)=\Pr[f(X)=y]$ and $q(y)=\Pr[Y=y]$. Let $S_y=\{x \in \Supp(X): f(x)=y\}$. Thus we have that $p(y)=\sum_{x \in S_y} \Pr[X=x]$. Let $W=\{y \in \Supp(Y): p(y)>q(y)\}$ and $V=\{y \in \Supp(Y): p(y)<q(y)\}$. Thus we have that $\sum_{y \in W} |p(y)-q(y)| = \sum_{y \in V} |p(y)-q(y)| = \e$.

We now gradually change the probability distribution $X$ into $X'$, as follows. First let $X'$ be the same probability distribution as $X$, then, while $W$ is not empty or $V$ is not empty, do the following. 

\begin{enumerate}
\item Pick $y \in W \cup V$ such that $|p(y)-q(y)|=min\{|p(y')-q(y')|, y' \in W \cup V\}$. 

\item If $y \in W$, we decrease $p(y)$ to $q(y)$. Specifically, let $\delta = \tau=p(y)-q(y)$. We pick the elements $x \in S_y$ one by one in an arbitrary order and while $\tau>0$, do the following. Let $\tau'=min(\Pr[X'=x], \tau)$, $\Pr[X'=x]=\Pr[X'=x]-\tau'$ and $\tau=\tau-\tau'$. Note that since $p(y)=\delta+q(y) \geq \delta$, this process will indeed end when $\tau=0$ and now $p(y)=q(y)$. Now to ensure that $X'$ is still a probability distribution, we pick any $\bar{y} \in V$ and increase $p(\bar{y})$ to $p(\bar{y})+ \delta$. To do this, simply pick any $x \in S_{\bar{y}}$ and let $\Pr[X'=x]=\Pr[X'=x]+\delta$. Note that after this change we still have that $p(\bar{y}) \leq q(\bar{y})$. Finally, remove $y$ from $W$ and if $p(\bar{y}) = q(\bar{y})$, remove $\bar{y}$ from $V$.

\item If $y \in V$, we increase $p(y)$ to $q(y)$. Specifically, let $\delta=q(y)-p(y)$. Pick any $x \in S_y$ and let $\Pr[X'=x]=\Pr[X'=x]+\delta$. Now to ensure that $X'$ is still a probability distribution, we pick any $\bar{y} \in W$ and decrease $p(\bar{y})$ to $p(\bar{y})- \delta$. To do this, let $\tau=\delta$. We pick the elements $x \in S_{\bar{y}}$ one by one in an arbitrary order and while $\tau>0$, do the following. Let $\tau'=min(\Pr[X'=x], \tau)$, $\Pr[X'=x]=\Pr[X'=x]-\tau'$ and $\tau=\tau-\tau'$. Note that since $p_{\bar{y}} \geq \delta+q_{\bar{y}}$, this process will indeed end when $\tau=0$ and we still have $p(\bar{y}) \geq q(\bar{y})$. Finally, remove $y$ from $V$ and if $p(\bar{y}) = q(\bar{y})$, remove $\bar{y}$ from $W$.
\end{enumerate}

Note that in each iteration, at least one element will be removed from $W \cup V$. Thus the iteration will end after finite steps. When it ends, we have that $\forall y, p(y)=q(y)$. Thus $f(X')=Y$. Also, it is clear from the algorithm that $|X'-X| =\sum_{y \in W} |p(y)-q(y)| \leq \e$.
\end{proof}

We also need the following definition and theorem about non-malleable extractors with weak random seeds.

\begin{definition}\label{def:weak-seed}\cite{DLWZ11}
A function $\nm:[N] \times [D] \to [M]$ is a $(k,k',\eps)$-non-malleable extractor if,
for any source $X$ with $\hinf(X) \geq k$, any seed $Y$ with $\hinf(Y) \geq k'$,
and any function $\adv:[D] \to [D]$ such that $\adv(y) \neq y$ for all $y$,
the following holds:
\[
(\nm(X,Y),\nm(X,\adv(Y)),Y) \approx_\eps (U_{[M]},\nm(X,\adv(Y)),Y).
\]
\end{definition}

A non-malleable extractor with small error will remain to be non-malleable even if the seed is somewhat weak random.

\BL\cite{DLWZ11} \label{lem:weakseed}
A $(k, \e)$-non-malleable extractor $\nm: \zo^n \times \zo^d \to \zo^m$ is also a $(k, k', \e')$-non-malleable extractor with $\e'=2^{d-k'}\e$.
\EL

Now we can show how non-malleable extractors can be used to construct two-source extractors. In \cite{DW09}, Dodis and Wichs showed that non-malleable extractors for $(n,k)$-sources exist when when $k>2m+3\log(1/\eps) + \log d + 9$ and $d>\log(n-k+1) + 2\log (1/\eps) + 7$. We now show the following theorem.

\BT \label{thm:nmtwo1}
Assume that for any $\e>0$, we have explicit constructions of $(k, \e)$-non-malleable extractors $\nmExt$ with seed length $d=2\log(1/\e)+o(n)$ and output length $m$. Then there exists a constant $\delta>0$ and an explicit construction of two source extractors that take as input an $(n, (1/2-\delta)n)$ source and an independent $(n,k)$ source, and output $m$ bits with error $2^{-\Omega(n)}$.
\ET

\begin{thmproof}
Let $Y$ be an $(n, (1/2-\delta)n)$ source and $X$ be an independent $(n,k)$ source. We construct the two-source extractor $\TExt$ as follows. First we use the somewhere-condenser in \theoremref{thm:tcondenser} to convert $Y$ into a source $\bar{Y}=\scond(Y) $with two rows $(\bar{Y}_1, \bar{Y}_2)$. By \theoremref{thm:tcondenser}, $\bar{Y}$ is $2^{-\Omega(n)}$-close to a somewhere rate-$(1+\alpha)(1/2-\delta)$-source. We choose $\delta>0$ such that $(1+\alpha)(1/2-\delta)=(1/2+\delta)$. Thus now $\bar{Y}$ is $2^{-\Omega(n)}$-close to a somewhere rate-$(1/2+\delta)$-source. Note that each row of $\bar{Y}$ has $l=\Omega(n)$ bits. 

Now let $Y_1=(\bar{Y}_1 \circ 0)$ and $Y_2=(\bar{Y}_2 \circ 1)$. The two-source extractor is defined as

\[\TExt(X, Y)=\nmExt(X, Y_1) \oplus \nmExt(X, Y_2).\]

We now show that this is indeed a two-source extractor for $X$ and $Y$. First note that $\scond(Y)$ is $2^{-\Omega(n)}$-close to a somewhere rate-$(1/2+\delta)$-source. \lemmaref{lem:error} implies that there exists another source $Y'$ such that $|Y-Y'| \leq 2^{-\Omega(n)}$ and $\scond(Y')$ is a somewhere rate-$(1/2+\delta)$-source. In the following analysis we will treat $Y$ as $Y'$, and this will add at most $2^{-\Omega(n)}$ to the error. 

Thus we now have that $\bar{Y}=\scond(Y)$ is a somewhere rate-$(1/2+\delta)$-source, and we want to show that $\TExt(X, Y)$ is close to uniform. Note that a somewhere rate-$(1/2+\delta)$-source is a convex combination of elementary somewhere rate-$(1/2+\delta)$-sources. \lemmaref{lem:conv} now implies that there exist sources $Y^1, \cdots, Y^t$ such that $Y$ is a convex combination of $Y^1, \cdots, Y^t$ and for each $Y^i$, $\scond(Y^i)$ is an elementary somewhere rate-$(1/2+\delta)$-source. Thus we only need to show that for each $Y^i$, $\TExt(X, Y^i)$ is close to uniform. Equivalently and for simplicity, we can assume without loss of generality that $\scond(Y)$ is an elementary somewhere rate-$(1/2+\delta)$-source.

Now, again without loss of generality we assume that $\bar{Y}_1$ is a $(l, (1/2+\delta)l)$-source. Consider the function $f(Y)=\bar{Y}_1$. \lemmaref{lem:decomp} implies that there exist flat $(n, (1/2+\delta)l)$ sources $Y^1, \cdots, Y^t$ and bijections $f_1, \cdots, f_t$ such that $Y$ is a convex combination of $Y^1, \cdots, Y^t$, and for each $i \in [t]$, $f(Y^i)=f_i(Y^i)$. Thus we only need to show that for each $Y^i$, $\TExt(X, Y^i)$ is close to uniform. Consider such a $Y^i$. Since $f(Y^i)=f_i(Y^i)$ and $f_i$ is a bijection, $\bar{Y^i_1}$ is a $(l, (1/2+\delta)l)$-source. Moreover, we can take $g_i$ to be the inverse function of $f_i$, and now $Y^i=g_i(\bar{Y^i_1})$. Note that $\bar{Y^i_2}$ is a deterministic function of $Y^i$, thus we have that now $\bar{Y^i_2}$ is a deterministic function of $\bar{Y^i_1}$. Finally, note that $Y^i_1 \leftrightarrow \bar{Y^i_1}$ and $Y^i_2 \leftrightarrow \bar{Y^i_2}$ are both bijections, we thus have the following claim.

\BCM
$Y^i_1$ is a $(l+1, (1/2+\delta)l)$ source, $Y^i_2=h_i(Y^i_1)$ where $h_i$ is a deterministic function, and $\forall y \in \Supp(Y^i_1), h_i(y) \neq y$.
\ECM

In other words, $Y^i_2$ can be viewed exactly as the seed modified by an adversary in a non-malleable extractor. Now since we are using a non-malleable extractor with seed length $d=l+1=2\log(1/\e)+o(n)=\Omega(n)$ and the seed has min-entropy $(1/2+\delta)l$, by \lemmaref{lem:weakseed} the error of the non-malleable extractor is 

\[2^{d-k'}\e=2^{l+1-(1/2+\delta)l}2^{(o(n)-l-1)/2}=2^{-(\delta l-o(n))}=2^{-\Omega(n)}.\] 

Therefore, we have that $|(\nmExt(X, Y^i_1), \nmExt(X, Y^i_2))-(U_m, \nmExt(X, Y^i_2))| \leq 2^{-\Omega(n)}$. Thus $\TExt(X, Y^i)=\nmExt(X, Y^i_1) \oplus \nmExt(X, Y^i_2)$ is $2^{-\Omega(n)}$-close to uniform. So $\TExt(X, Y)$ is also $2^{-\Omega(n)}$-close to uniform.
\end{thmproof}

We can generalize this theorem to work for sources with smaller min-entropy. For this we need $(r, k, \e)$-non-malleable extractors with $r>1$. We will prove the following theorem in \appendixref{sec:nmexist}.

\BT
For any constant $r \geq 1$, there exists a $(r, k,\eps)$-non-malleable extractor as long as 

\[d > \frac{3}{2}\log(n-k)+3\log(1/\e)+O(1)\]

\[k > (r+1)m+\frac{d}{3}+2 \log (1/\e)+\log(d)+O(1). \]

\ET

We can also define $(r, k, \e)$-non-malleable extractors with weak seed.

\begin{definition}
A function $\bits^n \times \bits^d \to \bits^m$ is a $(r, k,k',\eps)$-non-malleable extractor if,
for any source $X$ with $\hinf(X) \geq k$, any seed $Y$ with $\hinf(Y) \geq k'$,
and any $r$ function $\adv_i:\bits^d \to \bits^d, i=1, \cdots, r$ such that $\adv_i(y) \neq y$ for all~$i$ and $y$,
the following holds:
\[
(\nm(X,Y),\{\nm(X,\adv_i(Y))\},Y) \approx_\eps (U_m,\{\nm(X,\adv_i(Y))\},Y).
\]
\end{definition}

Similarly we have the following lemma.

\begin{lemma}
\label{lem:gweakseed}
A $(r, k,\eps)$-non-malleable extractor $\nm:\bits^n \times \bits^d \to \bits^m$
is also a $(r, k,k',\eps')$-non-malleable extractor with $\eps' = 2^{d-k'} \eps$.
\end{lemma}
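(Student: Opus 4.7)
The plan is to reduce the weak-seed case to the uniform-seed case by a direct averaging argument, which mirrors the proof of \lemmaref{lem:weakseed} but applied to the joint distribution $(\nm(X,Y),\{\nm(X,\adv_i(Y))\})$ rather than the single-adversary pair. Fix any source $X$ with $H_\infty(X) \ge k$, any seed $Y$ independent of $X$ with $H_\infty(Y) \ge k'$, and any $r$ functions $\adv_1,\dots,\adv_r:\bits^d \to \bits^d$ satisfying $\adv_i(y) \ne y$ for all $i$ and $y$.

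First, I would define an error function $f: \bits^d \to [0,2]$ by
\[
f(y) \;=\; \Bigl|\bigl(\nm(X,y),\{\nm(X,\adv_i(y))\}\bigr) \;-\; \bigl(U_m,\{\nm(X,\adv_i(y))\}\bigr)\Bigr|,
\]
where $U_m$ is independent of $X$. Because $Y$ is independent of $X$, the standard identity for statistical distance conditioned on one coordinate of a joint distribution yields
\[
\bigl|(\nm(X,Y),\{\nm(X,\adv_i(Y))\},Y) - (U_m,\{\nm(X,\adv_i(Y))\},Y)\bigr| \;=\; \E_{Y}[f(Y)].
\]
So it suffices to bound $\E_Y[f(Y)]$.

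Next, I apply the $(r,k,\e)$-non-malleable extractor guarantee to the uniform seed $U_d$ (with the same fixed adversary functions $\adv_1,\dots,\adv_r$). This gives $\E_{U_d}[f(U_d)] \le \e$, i.e., $\sum_{y \in \bits^d} f(y) \le 2^d \e$. Finally, since $H_\infty(Y) \ge k'$ implies $\Pr[Y=y] \le 2^{-k'}$ pointwise, I bound
\[
\E_Y[f(Y)] \;=\; \sum_{y} \Pr[Y=y]\, f(y) \;\le\; 2^{-k'} \sum_{y} f(y) \;\le\; 2^{d-k'}\e,
\]
which is exactly $\e'$ as claimed.

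There is no substantive obstacle here: the statement is a routine re-averaging, and the only thing to verify is that the adversary functions $\adv_i$ are the \emph{same} functions in the weak-seed setting as in the uniform-seed setting, so that the uniform-seed guarantee can be invoked pointwise in $y$ without any loss. This is immediate from the definitions, since $\adv_i$ depends only on the seed coordinate and not on the distribution of $Y$.
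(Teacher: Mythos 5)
Your proof is correct and takes essentially the same route as the paper: both define the pointwise error $\eps_y = f(y)$, bound $\sum_y f(y) \leq 2^d \eps$ via the uniform-seed guarantee, and then re-weight using $\Pr[Y=y] \leq 2^{-k'}$. (Minor note: statistical distance lies in $[0,1]$, not $[0,2]$, though this does not affect the argument.)
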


\begin{proof}
For $y \in \bits^d$, let $\eps_y = \Delta((\nm(X,y),\{\nm(X,\adv_i(y))\},y), (U_m,\{\nm(X,\adv_i(y))\},y)).$
Then for $Y$ chosen uniformly from $\bits^d$,
\[
\eps \geq \Delta((\nm(X,Y),\{\nm(X,\adv_i(Y))\},Y), (U_m,\{\nm(X,\adv_i(Y))\},Y)) = \frac{1}{2^d} \sum_{y \in \bits^d} \eps_y.
\]
Thus, for $Y'$ with $\hinf(Y') \geq k'$, we get
\begin{align*}
\Delta((\nm(X,Y'),\{\nm(X,\adv_i(Y'))\},&Y'), (U_m,\{\nm(X,\adv_i(Y'))\},Y'))\\
&=\sum_{y \in \bits^d} \Pr[Y=y] \eps_y \leq 2^{-k'}\sum_{y \in \bits^d} \eps_y \leq 2^{d-k'} \eps.
\end{align*}
\end{proof}

Now we have the following theorem.

\BT \label{thm:nmtwo2}
For any constant $b>2$ and any constant $0<\delta<1$, there exists a constant $C=C(\delta)=\poly(1/\delta)$ such that the following holds. Assume that for any $\e>0$ there exists an explicit construction of $(C, k, \e)$-non-malleable extractors $\nmExt$ with seed length $d=b\log(1/\e)+o(n)$ and output length $m$. Then there exists an explicit construction of two source extractors that take as input an $(n, \delta n)$ source and an independent $(n,k)$ source, and output $m$ bits with error $2^{-\Omega(n)}$.
\ET

\begin{thmproof}
Let $Y$ be an $(n, \delta n)$ source and $X$ be an independent $(n,k)$ source. We construct the two-source extractor $\TExt$ as follows. First we use the somewhere-condenser in \theoremref{thm:swcondenser} to convert $Y$ into a source $\bar{Y}=\zuc(Y) $ with $D=\poly(1/\delta)$ rows $(\bar{Y}_1, \cdots, \bar{Y}_D)$ such that $\bar{Y}$ is $2^{-\Omega(n)}$-close to a somewhere rate-$\frac{b}{b+1}$-source. Note that each row of $\bar{Y}$ has $l=\Omega(n)$ bits. 

Now for each row $j$ we let $Y_j$ be $\bar{Y}_j$ concatenated with the binary expression of $j-1$. The two-source extractor is defined as

\[\TExt(X, Y)=\bigoplus_j \nmExt(X, Y_j).\]

Let $C=D-1$. Again, we want to show that $\TExt(X, Y)$ is close to uniform. The proof is similar to the proof in \theoremref{thm:nmtwo1}. Specifically, we can assume that $Y$ is such that $\zuc(Y)$ is indeed a somewhere rate-$\frac{b}{b+1}$-source. This only adds $2^{-\Omega(n)}$ to the error. Next, we can assume that $\zuc(Y)$ is an an elementary somewhere rate-$\frac{b}{b+1}$-source.

Now, without loss of generality we assume that $\bar{Y}_1$ is a $(l, \frac{b}{b+1}l)$-source. Consider the function $f(Y)=\bar{Y}_1$. \lemmaref{lem:decomp} implies that there exist flat $(n, \frac{b}{b+1}l)$ sources $Y^1, \cdots, Y^t$ and bijections $f_1, \cdots, f_t$ such that $Y$ is a convex combination of $Y^1, \cdots, Y^t$, and for each $i \in [t]$, $f(Y^i)=f_i(Y^i)$. Thus we only need to show that for each $Y^i$, $\TExt(X, Y^i)$ is close to uniform. Consider such a $Y^i$. Since $f(Y^i)=f_i(Y^i)$ and $f_i$ is a bijection, $\bar{Y^i_1}$ is a $(l, \frac{b}{b+1}l)$-source. Moreover, we can take $g_i$ to be the inverse function of $f_i$, and now $Y^i=g_i(\bar{Y^i_1})$. Note that for any $j, j \neq 1$, $\bar{Y^i_j}$ is a deterministic function of $Y^i$, thus we have that now $\bar{Y^i_j}$ is a deterministic function of $\bar{Y^i_1}$. Finally, note that for any $j$, $Y^i_j \leftrightarrow \bar{Y^i_j}$ is a bijection, we thus have the following claim.

\BCM
$Y^i_1$ is a $(l+O(1), \frac{b}{b+1}l)$ source, for any $j, j \neq 1$,  $Y^i_j=h_{ij}(Y^i_1)$ where $h_{ij}$ is a deterministic function, and $\forall j, j \neq 1, \forall y, h_{ij}(y) \neq y$.
\ECM

Note that now we have $C$ modified seeds for the non-malleable extractor. Since we are using a non-malleable extractor with seed length $d=l+O(1)=b\log(1/\e)+o(n)=\Omega(n)$ and the seed has min-entropy $\frac{b}{b+1}l$, by \lemmaref{lem:gweakseed} the error of the non-malleable extractor is 

\[2^{d-k'}\e=2^{l+O(1)-\frac{b}{b+1}l}2^{(o(n)-l-O(1))/b} \leq 2^{-(\frac{1}{b(b+1)} l-o(n))}=2^{-\Omega(n)}.\] 

Therefore, we have that $|(\nmExt(X, Y^i_1), \{\nmExt(X, Y^i_j)\})-(U_m, \{\nmExt(X, Y^i_j)\})| \leq 2^{-\Omega(n)}$. Thus $\TExt(X, Y^i)=\bigoplus_j \nmExt(X, Y^i_j)$ is $2^{-\Omega(n)}$-close to uniform. So $\TExt(X, Y)$ is also $2^{-\Omega(n)}$-close to uniform.
\end{thmproof}

\section{From Two-Source Extractors to Non-Malleable Extractors} \label{sec:twonm}
In this section we show how to use a certain kind of two-source extractors to construct non-malleable extractors. 

We first define the following encoding of a string $y \in \bits^d$. 

\BD Given an integer $s$, we choose a BCH code with $t=2$ and $m=s+1$, thus the block length is $n=2^{s+1}-1$ and the parity check matrix is a $mt \times n$ matrix. For any $y \in \bits^s$, let $S_Y$ stand for the integer whose binary expression is $y$. We encode $y$ to $\bar{y}$ such that $\bar{y}$ is the $S_y$'th column in the parity check matrix (i.e., $\Enc(y)=\bar{y}=(y, y^3)$ when $y$ is viewed as an element in $\F^*_{2^{s+1}}$). 

\ED

We have the following theorem.

\BT \label{thm:twonm1}
Assume that we have a two-source extractor $\TExt=\IP(f(X), W)$ such that when given an $(n_1, k)$-source $X$ and an independent $(n_2, n_2/2-\ell)$-source $W$, $\TExt$ outputs 1 bit with error $\e$. Let $n'_2=\lfloor \frac{n_2}{2} \rfloor-1$ and let $Y$ be the uniform distribution over $\bits^{n'_2}$. Define a seeded extractor 

\[\nmExt(X, Y)=\IP(f(X), \Enc(Y)).\]

Then $\nmExt$ is a $(k, \e')$-non-malleable extractor with error $\e'=O(2^{-\ell}+\e)$.
\ET

\begin{thmproof}
First let $Y'$ be a source over $\bits^{n'_2}$ with min-entropy $n_2/2-\ell+1$. Let $\adv: \bits^{n'_2} \to \bits^{n'_2}$ be any deterministic function such that $\forall y, \adv(y) \neq y$.

Note that the BCH code has distance $2t+1=5>4$, thus any 4 columns in the parity check matrix must be linearly independent. This in particular implies that every two different columns must be different. Thus $\Enc(Y')=\bar{Y'}$ has min-entropy $n_2/2-\ell+1$. Therefore by the assumption we have that
\[\nmExt(X, Y') \approx_{\e} U.\]

Next, note that 

\begin{align*}
\nmExt(X, Y') \oplus \nmExt(X, \adv(Y')) &=\IP(f(X), \Enc(Y')) \oplus \IP(f(X), \Enc(\adv(Y'))) \\ &=\IP(f(x), \overline{Y'} + \overline{\adv(Y')}).
\end{align*}

For two different $y_1, y_2$, if $\overline{y_1}+\overline{\adv(y_1)}=\overline{y_2}+\overline{\adv(y_2)}$, then $\overline{y_1}, \overline{\adv(y_1)}, \overline{y_2}, \overline{\adv(y_2)}$ are linearly dependent. Note that $\overline{\adv(y_1)}=\Enc(\adv(y_1))$ and $\overline{\adv(y_2)}=\Enc(\adv(y_2))$ are also some columns of the parity check matrix. Since $\adv(y_1) \neq y_1$ and $\adv(y_2) \neq y_2$, we have that $\overline{\adv(y_1)} \neq \overline{y_1}$ and $\overline{\adv(y_2)} \neq \overline{y_2}$. Thus we must have $\overline{\adv(y_1)}=\overline{y_2}$ and $\overline{\adv(y_2)}=\overline{y_1}$.

Therefore, the min-entropy of $\overline{Y'} + \overline{\adv(Y')}$ is at least  $n_2/2-\ell$ since the probability of getting any particular element in the support is at most $2 \cdot 2^{-(n_2/2-\ell+1)}=2^{-(n_2/2-\ell)}$. Thus by the assumption we have

\[\nmExt(X, Y') \oplus \nmExt(X, \adv(Y')) \approx_{\e} U.\]

Thus by the non-uniform XOR lemma, \lemmaref{special-case}, we have

\[|(\nm(X, Y'), \nm(X, \adv(Y')))-(U, \nm(X, \adv(Y')))| \leq 2\e.\]

Now note that $Y$ has min-entropy $n'_2=\lfloor \frac{n_2}{2} \rfloor-1$, thus by \theoremref{thm:snmext}, 

\[|(\nm(X, Y), \nm(X, \adv(Y)), Y)-(U, \nm(X, \adv(Y)), Y)| \leq 2^2(2^{-(\ell-4)}+2\e)=O(2^{-\ell}+\e).\]
\end{thmproof}

Similarly, we can generalize the above theorem to the case of $(r, k,\eps)$-non-malleable extractors. We have the following definition and theorem.

\BD Given two integers $r, s$, we choose a BCH code with $t=r+1$ and $m=s+1$, thus the block length is $n=2^{s+1}-1$ and the parity check matrix is a $mt \times n$ matrix. For any $y \in \bits^s$, let $S_Y$ stand for the integer whose binary expression is $y$. We encode $y$ to $\bar{y}$ such that $\bar{y}$ is the $S_y$'th column in the parity check matrix (i.e., $\Enc_r(y)=\bar{y}=(y, y^3, \cdots, y^{2r+1})$ when $y$ is viewed as an element in $\F^*_{2^{s+1}}$). 
\ED

\BT \label{thm:twonm2}
Given two integers $r, \ell$ such that $\ell > r$. Assume that we have a two-source extractor $\TExt=\IP(f(X), W)$ such that when given an $(n_1, k)$-source $X$ and an independent $(n_2, n_2/(r+1)-\ell)$-source $W$, $\TExt$ outputs 1 bit with error $\e$. Let $n'_2=\lfloor \frac{n_2}{r+1} \rfloor-1$ and let $Y$ be the uniform distribution over $\bits^{n'_2}$. Define a seeded extractor 

\[\nmExt(X, Y)=\IP(f(X), \Enc_r(Y)).\]

Then $\nmExt$ is a $(r, k, \e')$-non-malleable extractor with error $\e'=O(r2^{r-\ell}+2^{\frac{3r}{2}}\e)$.
\ET

\begin{thmproof}
First let $Y'$ be a source over $\bits^{n'_2}$ with min-entropy $n_2/(r+1)-\ell+\log(r+1)$. Let $\adv_i: \bits^{n'_2} \to \bits^{n'_2}, i=1, \cdots, r$ be $r$ deterministic function such that for any $i$, $\forall y, \adv_i(y) \neq y$.

Note that the BCH code has distance $2t+1=2r+3>2r+2$, thus any $2r+2$ columns in the parity check matrix must be linearly independent. This in particular implies that every two different columns must be different. Thus $\Enc(Y')=\bar{Y'}$ has min-entropy $n_2/(r+1)-\ell+\log(r+1)$. Therefore by the assumption we have that
\[\nmExt(X, Y') \approx_{\e} U.\]

Next, choose any non-empty subset $S \subseteq [r]$, note that

\begin{align*}
\nmExt(X, Y') \oplus \bigoplus_{i \in S} \nmExt(X, \adv_i(Y')) &=\IP(f(X), \Enc(Y')) \oplus \bigoplus_{i \in S} \IP(f(X), \Enc(\adv_i(Y'))) \\ &=\IP(f(x), \overline{Y'} + \sum_{i \in S} \overline{\adv_i(Y')}).
\end{align*}

For two different $y_1, y_2$, if $\overline{y_1}+\sum_{i \in S} \overline{\adv_i(y_1)}=\overline{y_2}+\sum_{i \in S} \overline{\adv_i(y_2)}$, then $\overline{y_1}, \{\overline{\adv_i(y_1)}, i \in S\}, \overline{y_2}, \{\overline{\adv_i(y_2)}, \\ i \in S\}$ are linearly dependent. Without loss of generality we assume that all $\adv_i(y_1)$ are different and all $\adv_i(y_2)$ are different, since if not then some of them sum to 0 and this only decreases the number of items. Note that $\overline{\adv_i(y_1)}=\Enc_r(\adv_i(y_1))$ and $\overline{\adv_i(y_2)}=\Enc_r(\adv_i(y_2))$ are also some columns of the parity check matrix. Since the columns are $2r+2$-wise linearly independent, and the total number of items here is at most $2r+2$, we must have that the items in $\{\overline{y_1}, \{\overline{\adv_i(y_1)}, i \in S\}\}$ and the items in $\{\overline{y_2}, \{\overline{\adv_i(y_2)}, i \in S\}\}$ form a perfect matching, where an edge in the matching is of the form $\overline{y_1}=\overline{\adv_j(y_2)}$, $\overline{\adv_i(y_1)}=\overline{y_2}$ or $\overline{\adv_i(y_1)}=\overline{\adv_j(y_2)}$.

Now we claim that for any $y$, there are at most $r$ different $y_j$'s such that $\overline{y}+\sum_{i \in S} \overline{\adv_i(y)}=\overline{y_j}+\sum_{i \in S} \overline{\adv_i(y_j)}$. To see this, assume for the sake of contradiction that there are $r+1$ such different $y_j$'s. Then by the above discussion we see that for each $j$, there exists an $i \in S$ such that $\overline{y_j}=\overline{\adv_i(y)}$. Since $|S| \leq r$ we must have two different $y_j$ and $y_l$ and an $i \in S$ such that $\overline{y_j}=\overline{\adv_i(y)}=\overline{y_l}$. Note that $\Enc_r$ is injective, thus we have $y_j=y_l$, a contradiction.

Therefore, the min-entropy of $\overline{Y'} + \sum_{i \in S} \overline{\adv_i(Y')}$ is at least $n_2/(r+1)-\ell+\log(r+1)-\log(r+1)=n_2/(r+1)-\ell$. Therefore by the assumption we have that

\[\nmExt(X, Y') \oplus \bigoplus_{i \in S} \nmExt(X, \adv_i(Y')) \approx_{\e} U.\]

Thus by the non-uniform XOR lemma, \lemmaref{lem:noxor2}, we have

\[|(\nm(X, Y'), \{\nm(X, \adv_i(Y'))\})-(U, \{\nm(X, \adv_i(Y'))\})| \leq 2^{\frac{r}{2}}\e.\]

Now note that $Y$ has min-entropy $n'_2=\lfloor \frac{n_2}{r+1} \rfloor-1$, thus by \theoremref{thm:snmext}, 

\begin{align*}
&|(\nm(X, Y),  \{\nm(X, \adv_i(Y))\}, Y)-(U, \nm(X, \{\nm(X, \adv_i(Y))\}, Y)|  \\ &\leq 2^{r+1}(2^{-(\ell-\log(r+1)-3)}+2^{\frac{r}{2}}\e)=O(r2^{r-\ell}+2^{\frac{3r}{2}}\e).
\end{align*}
\end{thmproof}

\section{Improved Constructions of Non-Malleable Extractors} \label{sec:nmext}
Now we can use the above theorems to construct new non-malleable extractors. As a warm up, we first give a new construction of non-malleable extractors for min-entropy rate $>1/2$.

\subsection{A Non-Malleable Extractor for Entropy Rate $> 1/2$}\label{sec:half}
For this purpose, simply notice that the inner product function itself is a two-source extractor for an $(n, (1/2+\delta)n)$ source and another independent source on $n$ bits with min-entropy slightly below $n/2$. Specifically, we have

\BT \cite{ChorG88, Vazirani85} \label{thm:ip} 
For every constant $\delta>0$, if $X$ is an $(n, k_1)$ source, $Y$ is an independent $(n, k_2)$ source and $k_1+k_2 \geq (1+\delta) n$, then 

\[\IP(X, Y) \approx_\e U\]
with $\e=2^{-\Omega(\delta n)}$. 

\ET

Thus we can use the following construction. Given an $(n,k)$-source $X$ with $k = (1/2+\delta)n$, take an independent uniform seed $Y \in \bits^{n/2-1}$ and encode $Y$ to $\bar{Y}$ such that $\bar{Y}=\Enc(Y)=(Y, Y^3)$ when $Y$ is viewed as an element in $\F^*_{2^{n/2}}$. Our non-malleable extractor is now defined as 

\[\nmExt(X, Y)=\IP(X, \Enc(Y))=\IP(X, \bar{Y}),\]
where $\IP$ is the inner product function over $\F_2$.

\BT \label{thm:nmhalf}
For any constant $\delta>0$, the function $\nmExt$ defined as above is a $((1/2+\delta)n, 2^{-\Omega(\delta n)})$ non-malleable extractor.
\ET

\begin{thmproof}
By \theoremref{thm:ip}, $\IP(X, Y)$ is a two-source extractor for an $(n, (1/2+\delta)n)$ source and an independent $(n, (1/2-\delta/2)n)$ source with error $2^{-\Omega(\delta n)}$. Thus by \theoremref{thm:twonm1}, $\nmExt$ is a $((1/2+\delta)n, \e)$ non-malleable extractor with $\e=O(2^{-\delta n/2}+2^{-\Omega(\delta n)})=2^{-\Omega(\delta n)}$.
\end{thmproof}

\subsection{Non-Malleable Extractors for Entropy Rate $<1/2$}\label{sec:below}
In this section we give one of our main constructions, namely a non-malleable extractor for weak sources with min-entropy rate $1/2-\delta$ for some universal constant $\delta>0$. We have the following construction.

Given an $(n,k)$-source $X$ with $k = (1/2-\delta)n$, we first pick a prime $p$ that is close to $n$. By Bertrand's postulate and \cite{BHP01}, there exists $n_0 \in \N$ such that for every $n \geq n_0$, there exists a prime between $n$ and $n+O(n^{0.525})$. We will pick a prime $p$ in this range. Note that the prime can be found in polynomial time in $n$. Take the field $\F_q$ where $q=2^p$ and let $g$ be a generator in $\F^*_q$. The construction is as follows.

\BI
\item Treat $X$ as an element in $\F^*_q$ and encode $X$ such that $\Enc(X)=(X, g^X)$.
\item Take an independent and uniform seed $Y \in \bits^{p-1}$ and encode $Y$ to $\bar{Y}$ such that $\bar{Y}=(Y, Y^3)$ when $Y$ is viewed as an element in $\F^*_{2^p}$.
\item Output $\nm(X, Y)=\IP(\Enc(X), \bar{Y})$ where $\IP$ is the inner product function over $\F_2$.
\EI

To prove our construction is a non-malleable extractor, we are going to use \theoremref{thm:twonm1}. To this end, we first prove the following lemma.

\BL \label{lem:smallerror} There exists a constant $\delta>0$ such that for any $(n,k)$-source $X$ with $k = (1/2-\delta)n$, and any independent $(2p, k_2)$ source $Y$ with $k_2 \geq (1-\delta)p$, 
 \[|\IP(\Enc(X), Y)-U| \leq \e, \]
where $\e=2^{-\Omega(n)}$.
\EL

\begin{proof}
We think of $X$ as a distribution in $\F^*_q$ that has min-entropy $k$. This increases the error by at most $2^{-k}$ (for the element 0). By the XOR lemma, we only need to show that for the only non-trivial character $\psi$ (since we only output 1 bit), 

\[|E_{X, Y}[\psi(\IP(\Enc(X), Y))]| \leq 2^{-\Omega(n)}.\]

Let $X'=4\Enc(X)-4\Enc(X)$, by \lemmaref{lem:char2} we have 

\[|E_{X, Y}[\psi(\IP(\Enc(X), Y))]| \leq |E_{X', Y}[ \psi(X' \cdot Y)]|^{\frac{1}{8}}.\]

We next bound $|E_{X', Y}[ \psi(X' \cdot Y)]|$. First we show that $X'$ is close to a source with min-entropy rate $>1/2$. We have the following claim.

\BCM \label{clm:grow}
There is a universal constant $\delta>0$ such that if $X$ is any weak source with min-entropy $(1/2-\delta) n$, $3 \Enc(X)$ is $2^{-\Omega(n)}$-close to a source with min-entropy $(1/2+\delta) (2p)$.
\ECM

\begin{proof}[Proof of the claim]
Note that $k=(1/2-\delta) n$ and $p$ is between $n$ and $n(1+\frac{1}{2 \ln^2 n})$. Thus for sufficiently large $n$ we have that $k \geq (1/2-1.01\delta)p$. Note that we choose the field $\F_q$ where $q=2^p$. Thus the sum of $\Enc(X)+\Enc(X)$ when viewing $\Enc(X)$ as a vector in $\F^{2p}_2$ is the same as when viewing $\Enc(X)$ as a vector in $\F^2_q$. In the following we will view $\Enc(X)$ as a vector in $\F^2_q$. We show that $3 \Enc(X)$ has a larger min-entropy rate. 

First consider the distribution $2\Enc(X)$. Note that the distribution is of the form $(X+X, g^X+g^X)$. Let $\bar{X}=g^X$ and note that $g^x$ is a bijection in $F^*_q$. Thus $\bar{X}$  has the same min-entropy as X. Now the support of $2\Enc(X)$ is of the form $(\log_g(\bar{x}_1\bar{x}_2), \bar{x}_1+\bar{x}_2)$. For any $(b, a)$ in this support, we have that $\bar{x}_1\bar{x}_2=g^b$ and $\bar{x}_1+\bar{x}_2=a$. Thus there are at most 2 different pairs of $(\bar{x}_1, \bar{x}_2)$ that satisfy both equations. Therefore the min-entropy of $2 \Enc(X)$ is at least $2 H_\infty(X)-1$. We can also assume that $a \neq 0$ since this only increases the error by at most $2^{-H_\infty(X)}$. Now let $k=H_\infty(X)-1$, we have that $\Enc(X)$ has min-entropy at least $k$ and $2\Enc(X)$ has min-entropy at least $2k$.

Now consider $3\Enc(X)$. Every element in the support of $3\Enc(X)$ has the form $(\log_g(\bar{x}_1\bar{x}_2\bar{x}_3), \bar{x}_1+\bar{x}_2+\bar{x}_3)$, which determines the point $(\bar{x}_1\bar{x}_2\bar{x}_3, \bar{x}_1+\bar{x}_2+\bar{x}_3)$. Let $a=\bar{x}_1+\bar{x}_2$ and $b=\bar{x}_1\bar{x}_2$, this point is 

\[(b\bar{x}_3, a+\bar{x}_3).\]

Let $\tilde{x}_3=a+\bar{x}_3$, then 

\[(a+\bar{x}_3, b\bar{x}_3)=(\tilde{x}_3, b\tilde{x}_3-ab).\]

For a fixed $(a=\bar{x}_1+\bar{x}_2, b=\bar{x}_1\bar{x}_2)$ define the line 

\[\ell_{a,b}=\{(x, bx-ab)|x \in \F_q\}.\]

Thus we have a set of lines $L=\{\ell_{a, b}\}$. Note that $a\neq 0$ and $b \neq 0$. Thus for different $(a, b)$, the line $\ell_{a, b}$ is also different. Note that $x_3$ is sampled from $X_3$, which has min-entropy $k$ and $(a, b)$ is sampled from $\Enc(X_1)+\Enc(X_2)$, which has min-entropy $2k$. Further note that these two distributions are independent. Since every weak source with min-entropy $k$ is a convex combination of flat $k$ sources, without loss of generality we can assume that $X_3$ and $\Enc(X_1)+\Enc(X_2)$ are both flat sources. Thus $L$ has size $2^{2k}$. 

Now let $\alpha, \beta$ be the two constants in \theoremref{thm:incidence}. Assume that $3 \Enc(X)$ is $\e$-far from any source with min-entropy $(1+\alpha/2)2k$. Since $3 \Enc(X)$ determines the distribution $(A+\bar{X}_3, B\bar{X}_3)$, this distribution is also $\e$-far from any source with min-entropy $(1+\alpha/2)2k$. Thus there must exist some set $M$ of size at most $2^{(1+\alpha/2)2k}$ such that 

\[\Pr_{(a, b) \leftarrow 2\Enc(X), x_3 \leftarrow X}[(a+\bar{x}_3, b\bar{x}_3) \in M] \geq \e.\]  

Note that whenever $(a+\bar{x}_3, b\bar{x}_3) \in M$, this point has an incidence with the line $\ell_{a,b}$. Further note that whenever $(a, b)$ is different or $x_3$ is different, the incidence is also different. Thus by the above inequality the number of incidences between the set of points $M$ and the set of lines $L$ is  at least

\[\Pr_{(a, b) \leftarrow 2\Enc(X), x_3 \leftarrow X}[(a+\bar{x}_3, b\bar{x}_3) \in M] 2^k 2^{2k} \geq \e2^{3k}.\]

On the other hand, since $L$ has size $2^{2k}$ and $M$ has size $2^{(1+\alpha/2)2k} \leq 2^{(1+\alpha/2)2(1/2-\delta)p} < 2^{(1+\alpha/2)p} \leq q^{2-\beta}$, by \theoremref{thm:incidence}, the number of incidences between $M$ and $L$ is at most $O(2^{(3/2-\alpha)(2+\alpha)k}) < 2^{3k(1-\alpha/6)}=2^{-\alpha k/2}2^{3k}$.

Thus we must have $\e < 2^{-\alpha k/2}$.

Thus we have shown that $3\Enc(X)$ is $2^{-\alpha k/2}$-close to having min-entropy $(1+\alpha/2)2k$. By choosing $\delta$ appropriately, we get that $3\Enc(X)$ is $2^{-\Omega(n)}$-close to having min-entropy $(1/2+\delta)2p$.
\end{proof}

Now note that $Y$ is a weak source over $\bits^{2p}$ with min-entropy $k_2 \geq (1-\delta)p$. Also note that the min-entropy of $X'$ is at least the min-entropy of $3\Enc(X)$. Thus by \lemmaref{lem:char1} we have that

\[|E_{X', Y}[ \psi(X' \cdot Y)]| \leq 2^{2p}2^{-(1/2+\delta)2p}2^{-(1-\delta)p}+2^{-\Omega(n)}=2^{-\Omega(n)}.\]

Therefore \[|E_{X, Y}[\psi(\IP(\Enc(X), Y))]| \leq 2^{-\Omega(n)}.\]
\end{proof}

Now we can prove our construction is a non-malleable extractor.

\BT
For any $(n,k)$-source $X$ with $k = (1/2-\delta)n$, the function $\nm$ defined above is a $(k, \e)$-non-malleable extractor with $\e=2^{-\Omega(n)}$.
\ET

\begin{thmproof}
By \lemmaref{lem:smallerror}, $\IP(\Enc(X), Y)$ is a two-source extractor for an $(n, k)$ source and an independent $(p, (1-\delta)p)$ source with error $2^{-\Omega(n)}$. Therefore by \theoremref{thm:twonm1}, $\nm$ is a $(k, \e)$-non-malleable extractor with error $\e=O(2^{-\delta p}+2^{-\Omega(n)})=2^{-\Omega(n)}$.
\end{thmproof}

\subsection{Achieving Even Smaller Min-Entropy}\label{sec:constant}
In this section we show that we can construct non-malleable extractors for even smaller min-entropy rate (potentially any constant arbitrarily close to 0), if we assume that we have affine extractors with large enough output size, and the Approximate Duality Conjecture (or the Polynomial Freiman-Ruzsa Conjecture) as in \cite{BenSZ11}. 

 Recall the definition of an affine extractor. 

\BD An $[n, m, \rho, \e]$ affine extractor is a deterministic function $f: \bits^n \to \bits^m$ such that whenever $X$ is the uniform distribution over some affine subspace over $\F^n_2$ with dimension $\rho n$, we have that for every $z \in \bits^m$, 

\[|\Pr[f(X)=z]-2^{-m}| < \e.\]
\ED

Now we define the duality measure of two sets as in \cite{BenSZ11}.

\BD  \cite{BenSZ11}
Given two sets $A, B \subseteq \F^n_2$, their duality measure is defined as

\[\mu^{\perp} (A, B) = \left | E_{a \in A, b \in B}[(-1)^{\angles{a,b}}]\right |.\]
\ED

The following conjecture is introduced in \cite{BenSZ11} and is shown in that paper to be implied by the well-known Polynomial Freiman-Ruzsa Conjecture in additive combinatorics.

\begin{conjecture} (Approximate Duality (ADC)) \cite{BenSZ11} \label{conj:adc}
For every pair of constants $\alpha, \delta>0$ there exist a constant $\zeta>0$ and an integer $r$, both depending on $\alpha$ and $\delta$ such that the following holds for sufficiently large $n$. If $A, B \subseteq \F^n_2$ satisfy $|A|, |B| \geq 2^{\alpha n}$ and $\mu^{\perp} (A, B) \geq 2^{-\zeta n}$, then there exists a pair of subsets

\[A' \subseteq A, \text{} A' \geq \frac{|A|}{2^{\delta n+1}} \text{ and } B' \subseteq B, \text{} |B'| \geq \left (\frac{\mu^{\perp} (A, B)}{2} \right )^r \cdot \frac{|B|}{2^{\delta n}}\]
such that $\mu^{\perp} (A', B')=1$.
\end{conjecture}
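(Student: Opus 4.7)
My plan is to derive the Approximate Duality Conjecture from the Polynomial Freiman--Ruzsa (PFR) conjecture, following the strategy of Ben-Sasson and Zewi. The proof has three main steps: a Fourier reformulation, an application of PFR, and a dualization back to $A$.

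First, a Fourier reformulation. Writing $f = \mathbf{1}_A/|A|$ so that $\widehat{f}(b) = \dbE_{a \sim A}[(-1)^{\langle a, b \rangle}]$, we have $\mu^{\perp}(A,B) = \lvert \dbE_{b \sim B}[\widehat{f}(b)] \rvert$. Averaging and a sign flip applied to the hypothesis $\mu^{\perp}(A,B) \geq 2^{-\zeta n}$ produce a subset $B_0 \subseteq B$ of relative density $\Omega(\mu^{\perp}(A,B))$ on which $|\widehat{f}(b)| \geq \lambda := \Omega(\mu^{\perp}(A,B))$; in particular $B_0$ sits inside the large-spectrum set $\mathrm{Spec}_\lambda(A) := \{b : |\widehat{f}(b)| \geq \lambda\}$.

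Second, additive structure via PFR. By Plancherel, such a dense subset of $\mathrm{Spec}_\lambda(A)$ has additive energy $\gg \mathrm{poly}(\lambda)\cdot|B_0|^3$, so the Balog--Szemer\'edi--Gowers theorem extracts a sub-piece of bounded doubling; PFR then yields a subspace $T \leq \F_2^n$ with $\dim T \leq \delta n$ and an element $b_* \in \F_2^n$ such that $B_1 := B_0 \cap (b_* + T)$ has size $|B_1| \geq (\mu^{\perp}(A,B)/2)^r\,|B|/2^{\delta n}$, where $r$ is the polynomial exponent coming from PFR. Third, dualize. Fix any $b_0 \in B_1$ and a basis $t_1, \ldots, t_d$ (with $d \leq \dim T \leq \delta n$) of $\mathrm{span}(B_1 - b_0)$. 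The linear map $\pi(a) := (\langle a, t_1 \rangle, \ldots, \langle a, t_d \rangle, \langle a, b_0 \rangle) \in \F_2^{d+1}$ partitions $\F_2^n$ into at most $2^{\delta n + 1}$ fibers, so by pigeonhole one fiber contains $A' \subseteq A$ of size $\geq |A|/2^{\delta n + 1}$. For $a \in A'$, the expression $\langle a, b_0 + t \rangle = \langle a, b_0 \rangle + \sum_i t^{(i)} \langle a, t_i \rangle$ becomes a fixed affine function of $t$ determined by $\pi(a)$, so restricting $B'$ to the majority level set of this $\F_2$-valued function (or taking $B' = B_1$ if its linear part vanishes) leaves $\langle a, b \rangle$ constant on $A' \times B'$, giving $\mu^{\perp}(A',B') = 1$ with the claimed sizes up to constant factors absorbed into $r$.

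The main obstacle is the middle step: PFR is precisely what converts the Fourier-analytic largeness of the spectrum into a genuine low-codimensional subspace structure with only polynomial loss $(\mu^{\perp})^r$ in the density of $B_1$. Unconditionally one only has Sanders' quasi-polynomial Bogolyubov--Ruzsa theorem, which forces $\zeta$ to depend much worse on $\alpha$ and $\delta$ and replaces the exponent $r$ by a polylogarithmic quantity. Carefully tracking the polynomial exponent through the BSG/PFR pipeline, and ensuring that the pigeonhole reductions in the dualization step do not erode the quantitative bounds, is the bulk of the technical work.
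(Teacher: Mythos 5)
The statement you were asked to prove is presented in the paper as a \emph{conjecture} (Conjecture~\ref{conj:adc}), quoted verbatim from \cite{BenSZ11}; the paper neither proves it nor claims to. It is used only as a hypothesis in the conditional constructions of \sectionref{sec:constant}, and the paper merely records that \cite{BenSZ11} showed ADC is implied by the Polynomial Freiman--Ruzsa conjecture. There is therefore no ``paper's own proof'' to compare your attempt against, and what you have written cannot be a proof of the statement in the sense the paper uses it: it is a sketch of a reduction from one open conjecture (PFR) to another (ADC), i.e.\ precisely the content the paper delegates to \cite{BenSZ11} and does not reproduce.

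Taken on its own terms as a sketch of the PFR $\Rightarrow$ ADC implication, your outline is broadly the right shape: the Fourier reformulation $\mu^{\perp}(A,B)=|\dbE_{b\sim B}\widehat f(b)|$ with $f=\mathbf{1}_A/|A|$; an averaging and sign-fixing step producing a density-$\Omega(\mu^{\perp})$ subset $B_0\subseteq B$ inside the $\lambda$-large spectrum of $A$; additive energy and Balog--Szemer\'edi--Gowers to extract a bounded-doubling piece; PFR to locate a low-dimensional coset $b_*+T$; and the dualization by partitioning $A$ along $a\mapsto(\langle a,t_1\rangle,\ldots,\langle a,t_d\rangle,\langle a,b_0\rangle)$ and restricting $B_1$ to a majority level set. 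Two points still need care even at the sketch level. After the pigeonhole on $A$, the form $\langle a,b\rangle$ for $a\in A'$ depends on $b$ through a fixed affine function, so you lose a further factor of two passing to the majority level set; this is absorbable by incrementing $r$ (since $\mu^{\perp}/2\le 1/2$), but it should be said explicitly because $r$ is an exponent and constant slack is not automatically free. And you have not stated, with explicit exponents, the additive-energy lemma for dense subsets of $\mathrm{Spec}_{\lambda}(A)$; it is exactly the composition of that lemma with BSG and PFR that fixes the permitted $r$ and $\zeta(\alpha,\delta)$, so without tracking those constants the quantified form of the conjecture is not established. None of this changes the status: even made fully rigorous, your argument would be conditional on PFR, which is precisely how the paper treats ADC.
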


We now have the following construction.

\begin{construction}
Given any $(n,k)$ source $X$ and a constant $0<\lambda<1$, let $f: \bits^{n'} \to \bits^{m'}$ be an $[n', m'=(1-\lambda) \frac{2}{3}n', \frac{2}{3}, 2^{-m'}]$ affine extractor such that $n=n'-m'$. For any $z \in \bits^{m'}$, let $f^{-1}(z)=\{x: f(x)=z\}$. Then there exists $z \in \bits^{m'}$ such that $|f^{-1}(z)| \geq 2^n$. Let $F: \bits^n \to f^{-1}(z)$ be (any) injective map. Now take an independent uniform seed $Y \in \bits^{n'/2-1}$ and encode $Y$ to $\bar{Y}$ such that $\bar{Y}=\Enc(Y)=(Y, Y^3)$ when $Y$ is viewed as an element in $\F^*_{2^{n'/2}}$. Our non-malleable extractor is now defined as 

\[\nmExt(X, Y)=\IP(F(X), \Enc(Y))=\IP(F(X), \bar{Y}),\]
where $\IP$ is the inner product function taken over $\F_2$. 
\end{construction}

\begin{remark}
Note that here the function $F$ may not be efficiently computable (in time $\poly(n)$). However, the time to compute $F$ is polynomial in the length of the truth table of our final extractor.
\end{remark}


Again, we will show that our construction is a non-malleable extractor by using \theoremref{thm:twonm1}. To this end, we first show the lemma.

\BL \label{lem:disperser} 
For any $(n,k)$ source $X$ with $k =\frac{2.5 \lambda}{1+2\lambda} n$ and any independent $(n', \frac{n'}{3}+1)$ source $Y$, $\IP(F(X), Y)$ is non-constant.
\EL

\begin{proof}
As usual we can assume without loss of generality that $X$ and $Y$ are flat sources. If $\IP(F(X), Y)$ is a constant, then $\Supp(F(X))$ and $\Supp(Y)$ must be contained in two affine subspaces with dimension $d_1, d_2$ such that $d_1+d_2 \leq n'$. Note that  $d_2 >\frac{n'}{3}$ since $Y$ has min-entropy $\frac{n'}{3}+1$. We next show that $d_1 > \frac{2}{3}n'$ and thus reach a contradiction. 

To see this, let $S=\Supp(F(X))$. It suffices to show that $S$ is not contained in any affine subspace of dimension $\frac{2}{3}n'$. Let $A$ be such an affine subspace. We have
\[|A \cap S| \leq |A \cap f^{-1}(z)| < 2 \cdot 2^{-m'} 2^{\frac{2}{3}n'} =2^{\frac{2\lambda}{3}n'+1},\]
where the last inequality follows from the fact that $f$ is an affine extractor. Now note that $|S| = 2^{\frac{2.5 \lambda}{1+2\lambda} n}=2^{\frac{2.5\lambda}{3}n'}$. Thus we have that $|A \cap S| < |S|$ and therefore $S$ cannot be contained in $A$.
\end{proof}





Now we have the following lemma. 

\BL \label{lem:conext}
There exists a constant $\zeta=\zeta(\lambda)$ such that for any $(n,k)$ source $X$ with $k =\frac{3 \lambda}{1+2\lambda} n$ and any independent $(n', \frac{7n'}{15})$ source $Y$, $\IP(F(X), Y)$ is $2^{-\zeta n}$-close to uniform.
\EL

\begin{proof}
Let $v=\frac{n}{n'}=\frac{1+2\lambda}{3}$, $\alpha=min\{\lambda, \frac{1}{3}\}$ and $\delta = \frac{\lambda}{8}$. Let $\zeta'$ and $r$ be the constant and the integer guaranteed by conjecture~\ref{conj:adc} for $\alpha$ and $\delta$. Let $\zeta=min \{\frac{\zeta'}{v}, \frac{1-\lambda}{8r}\}$. We will prove the lemma by way of contradiction.

Let $X$ and $Y$ be two independent sources as in the statement of the lemma. Again we assume without loss of generality that both $X$ and $Y$ are flat sources. Let $A=\Supp(X)$ and $\bar{A}=\{F(a)| a \in A\} \subseteq \F^{n'}_2$. Let $\bar{B}=\Supp(Y) \subseteq \F^{n'}_2$. Note that $F$ is an injective function. Thus $|\bar{A}| =2^{\frac{3\lambda}{1+2\lambda}n} = 2^{\lambda n'} \geq 2^{\alpha n'}$ and $|\bar{B}| =2^{\frac{7n'}{15}} > 2^{\frac{n'}{3}} \geq 2^{\alpha n'}$. 

Assume for the sake of contradiction that the error of $\IP(F(X), Y)$, which is equal to $\frac{1}{2} \mu^{\perp} (\bar{A}, \bar{B})$, is greater than $2^{-\zeta n} \geq 2^{-\zeta' n'}$. Then by the ADC conjecture (conjecture~\ref{conj:adc}) there exist $A' \subseteq \bar{A}$ and $B' \subseteq \bar{B}$ such that

\[|A'| \geq \frac{|\bar{A}|}{2^{\delta n'+1}} \geq  2^{\frac{5\lambda}{6} n'} = 2^{\frac{2.5 \lambda}{1+2\lambda} n} \text{ and } |B'| \geq \frac{|\bar{B}|}{2^{\delta n'+r\zeta n}} \geq \frac{2^{\frac{7n'}{15}}}{2^{\frac{n'}{8}}}> 2^{\frac{n'}{3}+1},\]
and $\mu^{\perp} (A', B')=1$.

Let $A''$ be the preimages of $A'$ under $F$. Since $F$ is injective, we must have $|A''| \geq 2^{\frac{2.5 \lambda}{1+2\lambda} n}$. Thus if we let $X'$ and $Y'$ be the uniform distribution over $A''$ and $B'$ respectively, we get two independent sources that satisfy the conditions in \lemmaref{lem:disperser}. However $\IP(F(X'), Y')$ is a constant, which contradicts \lemmaref{lem:disperser}. Thus we must have that $\IP(F(X), Y)$ is $2^{-\zeta n}$-close to uniform.
\end{proof}

Now we can prove the following theorem.

\BT \label{thm:nmconst}
$\nm$ is a $(k, \e)$-non-malleable extractor with $k =\frac{3 \lambda}{1+2\lambda} n$, seed length $d=\frac{3}{2+4\lambda}n-1$ and $\e=2^{-\Omega(n)}$.

\ET

\begin{thmproof}
The seed length is clearly $d=n'/2-1=\frac{3}{2+4\lambda}n-1$. Let $\zeta=\zeta(\lambda)$ be as in \lemmaref{lem:conext}. By \lemmaref{lem:conext}, $\IP(F(X), Y)$ is a two-source extractor for an $(n, k =\frac{3 \lambda}{1+2\lambda} n)$ source and an independent $(n', \frac{7n'}{15})$ source with error $2^{-\zeta n}$. Thus by \theoremref{thm:twonm1}, $\nm$ is a $(k, \e)$-non-malleable extractor with error $\e=O(2^{-n'/30}+2^{-\zeta n})=2^{-\Omega(n)}$.
\end{thmproof}

Similarly, we can generalize our construction to $(r, k, \e)$-non-malleable extractors. We have the following construction and theorem.

\begin{construction}
Given a constant integer $r$, any $(n,k)$ source $X$ and a constant $0<\lambda<1$, let $f: \bits^{n'} \to \bits^{m'}$ be an $[n', m'=(1-\lambda) \frac{r+1}{r+2}n', \frac{r+1}{r+2}, 2^{-m'}]$ affine extractor such that $n=n'-m'$. For any $z \in \bits^{m'}$, let $f^{-1}(z)=\{x: f(x)=z\}$. Then there exists $z \in \bits^{m'}$ such that $|f^{-1}(z)| \geq 2^n$. Let $F: \bits^n \to f^{-1}(z)$ be (any) injective map. Now take an independent uniform seed $Y \in \bits^{n'/(r+1)-1}$ and encode $Y$ to $\bar{Y}$ such that $\bar{Y}=\Enc_r(Y)=(Y, Y^3, \cdots, Y^{2r+1})$ when $Y$ is viewed as an element in $\F^*_{2^{n'/(r+1)}}$. Define a seeded extractor

\[\nmExt(X, Y)=\IP(F(X), \Enc_r(Y))=\IP(F(X), \bar{Y}),\]
where $\IP$ is the inner product function taken over $\F_2$. 
\end{construction}

\BT
$\nm$ is a $(r, k, \e)$-non-malleable extractor with $k =\frac{(r+2) \lambda}{1+(r+1)\lambda} n$, seed length $d=\frac{r+2}{r+1+(r+1)^2\lambda}n-1$ and $\e=2^{-\Omega(n)}$.
\ET

By using \theoremref{thm:twonm2} instead of \theoremref{thm:twonm1}, the proof of this theorem is very similar to the proof of \theoremref{thm:nmconst}. We thus omit the proof here. 

\subsection{Increasing the Output Size and Reducing the Seed Length}\label{sec:seed}
In this section we show that we can increase the output size and reduce the seed length  for the constructions in \subsectionref{sec:half}, \subsectionref{sec:below} and \subsectionref{sec:constant}. All these constructions share the same pattern: the seed $Y$ is encoded using the parity check matrix of a BCH code, and then the output is the inner product function of the encoded source and the encoded seed over $\F_2$.

We only discuss the construction in \subsectionref{sec:below}, but the method can be applied to all the other constructions in the same way. We start by showing how to increase the output size to $m=\Omega(n)$.

\subsubsection{Increasing the output size}
Recall that in the construction we used a field $\F_{2^p}$ for a prime $p$. Given the finite filed $\F_{2^p}$, the elements of this field form a vector space of dimension $p$ over $\F_2$. Let $b_1, \cdots, b_p \in \F_{2^p}$ be a basis for this vector space. Now recall that in the construction we encode the seed $Y$ to $\bar{Y}=(Y, Y^3)$, when viewing $Y$ as an element in $\F^*_{2^p}$. Now for each $b_i$, let $\bar{Y^i}=(b_i Y, b_i Y^3)$ and define one bit $Z_i=\IP(\Enc(X), \bar{Y^i})$. We now show that $\{Z_i\}$ satisfy the conditions of a non-uniform XOR lemma.

\BL \label{lem:xoroutput}
Given any $(n,k)$-source $X$ with $k=(1/2-\delta)n$ and an independent seed $Y \in \bits^{p-1}$ with min-entropy $(1-\delta)p+2$, let $\adv: \bits^{p-1} \to \bits^{p-1}$ be any deterministic function such that $\forall y, \adv(y) \neq y$. For any $i$, let $Z'_i=\IP(\Enc(X), \bar{Y^{i'}})$, where $\bar{Y^{i'}}=(b_i Y', b_i Y'^3)$ and $Y'=\adv(Y)$. Then for any non-empty subset $S_1 \subseteq [p]$ and any subset $S_2 \subseteq [p]$, we have that

\[|\bigoplus_{i \in S_1} Z_i \oplus \bigoplus_{j \in S_2} Z'_j-U | \leq 2^{-\Omega(n)}.\]
\EL

\begin{proof}
Note that 
\[\bigoplus_{i \in S_1} Z_i = \IP(\Enc(X), \sum_{i \in S_1} \bar{Y^i})=\IP(\Enc(X), t_1(Y, Y^3) ),\] 
where $t_1=\sum_{i \in S_1} b_i \in \F_{2^p}$, and

\[\bigoplus_{j \in S_2} Z'_j = \IP(\Enc(X), \sum_{j \in S_2} \bar{Y^{j'}})=\IP(\Enc(X), t_2(Y', Y'^3) ),\] 
where $t_2=\sum_{j \in S_2} b_j \in \F_{2^p}$.

Thus

\[\bigoplus_{i \in S_1} Z_i \oplus \bigoplus_{j \in S_2} Z'_j=\IP(\Enc(X), \tilde{Y}),\]
where $\tilde{Y}=t_1(Y, Y^3)+t_2(Y', Y'^3)$.

We now bound the min-entropy of $\tilde{Y}$ and have the following claim.

\BCM
$H_{\infty}(\tilde{Y}) > (1-\delta)p$.
\ECM

\begin{proof}
We have two cases.

\textbf{Case 1:} $S_2 = \phi$. In this case $\tilde{Y}=t_1(Y, Y^3)$. Since $S_1 \neq \phi$, we have $t_1 \neq 0$. Thus $\tilde{Y}$ has the same min-entropy as $Y$, which is $(1-\delta)p+2>(1-\delta)p$. 

\textbf{Case 2:} $S_2 \neq \phi$. In this case we have $t_1 \neq 0$ and $t_2 \neq 0$. We need to bound the min-entropy of $\tilde{Y}=t_1(Y, Y^3)+t_2(Y', Y'^3)$. Again, if for every two different $y_1, y_2$, we have $t_1(y_1, y_1^3)+t_2(y_1', y_1'^3) \neq t_1(y_2, y_2^3)+t_2(y_2', y_2'^3)$, then $\tilde{Y}$ will have the same min-entropy of $Y$. We now show that any element in $\Supp(\tilde{Y})$ can come from at most 3 different elements in $\Supp(Y)$.

To show this, assume for the sake of contradiction that there are 4 different $y_1, y_2, y_3, y_4$ such that $t_1(y_i, y_i^3)+t_2(y_i', y_i'^3)$ are the same for $i=1, 2, 3, 4$. First consider $y_1, y_2$, we have $t_1(y_1, y_1^3)+t_2(y_1', y_1'^3) = t_1(y_2, y_2^3)+t_2(y_2', y_2'^3)$. Since $t_1 \neq 0$, let $r=t_2/t_1 \in \F_{2^p}$. Thus $r \neq 0$ and we have $(y_1, y_1^3)+r(y_1', y_1'^3) = (y_2, y_2^3)+r(y_2', y_2'^3)$. We first consider the case where $r =1$. In this case, the vectors $(y_1, y_1^3)$, $(y_1', y_1'^3)$, $(y_2, y_2^3)$ and $(y_2', y_2'^3)$ are linearly dependent over $\F_2$. However we know that the columns of the parity check matrix of the BCH code are 4-wise linearly independent. Thus we must have $y_1'=y_2$ and $y_2'=y_1$. Thus in this case the element in $\Supp(\tilde{Y})$ comes from at most 2 different elements in $\Supp(Y)$. Now if $r \neq 1$, we have

\[y_1 +r y_1' = y_2 +r y_2'\] and

\[(y_1)^3 +r (y_1')^3 = (y_2)^3 +r (y_2')^3.\]

Hence we get

\[ y_1-y_2 = r (y_2'-y_1')\] and 

\[ (y_1^2+y_1y_2+y_2^2)(y_1-y_2) = r(y_2'^2+y_1'y_2'+y_1'^2)(y_2'-y_1').\]

Since $y_1 \neq y_2$ and $r \neq 0$, we must have that $y_1' \neq y_2'$. Thus we get 

\[y_1^2+y_1y_2+y_2^2 = y_2'^2+y_1'y_2'+y_1'^2.\]

Similarly we can get 

\[y_1^2+y_1y_3+y_3^2 = y_3'^2+y_1'y_3'+y_1'^2.\]

Thus

\[(y_1+y_2+y_3)(y_2-y_3)=(y_1'+y_2'+y_3')(y_2'-y_3').\]

Also, from $y_2 +r y_2' = y_3 +r y_3'$ we get 

\[y_2-y_3=r(y_3'-y_2').\]

Since $y_2 \neq y_3$, we have

\[y_1'+y_2'+y_3'=-r(y_1+y_2+y_3).\]

Similarly we can get 

\[y_1'+y_2'+y_4'=-r(y_1+y_2+y_4).\]

Therefore

\[y_4'-y_3'=r(y_3-y_4).\]

On the other hand, from $y_3 +r y_3' = y_4 +r y_4'$ we get

\[y_4'-y_3'=1/r(y_3-y_4).\]

Thus

\[(r^2-1)(y_3-y_4)=0.\] 

Since $r \neq 1$, $r^2-1 \neq 0$. Thus we have $y_3=y_4$, a contradiction.

Therefore, the min-entropy of $\tilde{Y}$ is at least $H_{\infty}(Y)-\log 3=(1-\delta)p+2-\log 3>(1-\delta)p$. 
\end{proof}

Now, by \lemmaref{lem:smallerror}, the lemma follows.
\end{proof}

Now we have the following theorem.

\BT 
There exists a constant $0< \delta<1$ such that for any $n \in \N$, $k = (1/2-\delta)n$, there exists an explicit $(k, \e)$-non-malleable extractor $\nm: \bits^n \times \bits^n \to \bits^m$ with $m=\Omega(n)$ and $\e=2^{-\Omega(n)}$.
\ET

\begin{thmproof}
By \lemmaref{lem:xoroutput} and \lemmaref{special-case}, we can choose $m=\Omega(n)$ bits from $\{Z_i\}$ such that when we have $\nm$ output $Z_1 \circ \cdots \circ Z_m$, we get 

\[|(\nm(X, Y), \nm(X, \adv(Y)))-(U_m, \nm(X, \adv(Y)))| \leq 2^{-\Omega(n)}.\]

Note that in \lemmaref{lem:xoroutput} the seed $Y$ only has min-entropy $(1-\delta)p+2$. Thus if we use a uniform seed $Y  \in \bits^{p-1}$, by \theoremref{thm:snmext} we have that 

\[|(\nm(X, Y), \nm(X, \adv(Y)), Y)-(U_m, \nm(X, \adv(Y)), Y)| \leq \e,\]
where $\e=2^{2m}(2^{-\delta p+4}+2^{-\Omega(n)})=2^{-\Omega(n)}$ when $m=\Omega(n)$ and is small enough.
\end{thmproof}

\subsubsection{Reducing the seed length}
In the constructions mentioned above, we use a BCH code with distance 5. Thus the columns of the parity check matrix are 4-wise linearly independent. To reduce the seed length, we are going to use a BCH code with larger distance. Specifically, we will choose a $[2^{\ell}-1, 2^{\ell}-1-2t{\ell}, 4t+1]$-BCH code with ${\ell}=p/t$ for some parameter $t$ to be chosen later. Note that the parity check matrix is a $2p \times (2^{\ell}-1)$ matrix\footnote{Actually $p$ is not divisible by $t$, thus $\ell t <p$. However for simplicity we will assume that the matrix has $2p$ rows. For example we can add $0$'s in the end, the small error does not affect our analysis.}. Thus the columns of the matrix are $D=4t$-wise linearly independent. The detailed construction is as follows.

\BI
\item Given an $(n,k)$-source $X$ with $k = (1/2-\delta)n$, pick a prime $p$ such that $n \leq p \leq n(1+\frac{1}{2 \ln^2 n})$.
\item Let $q=2^p$ and $g$ be a generator in $\F^*_q$. Treat $X$ as an element in $\F^*_q$ and encode $X$ such that $\Enc(X)=(X, g^X)$.
\item Let ${\ell}=p/t$. Take the parity check matrix of a $[2^{\ell}-1, 2^{\ell}-1-2t{\ell}, 4t+1]$-BCH code. Note that it is a $2p \times (2^{\ell}-1)$ matrix. Take an independent and uniform seed $Y \in \bits^{{\ell}-1}$ and let $S_Y$ stand for the integer whose binary expression is $Y$. We encode $Y$ to $\bar{Y}$ such that $\bar{Y}$ is the $S_Y$'th column in the parity check matrix.
\item Output $\nm(X, Y)=\IP(\Enc(X), \bar{Y})$ where $\IP$ is the inner product function taken over $\F_2$.
\EI

As in \subsectionref{sec:below}, we have \claimref{clm:grow}. We now want to argue about the min-entropy of $t\bar{Y}$ and $t(\bar{Y}+\bar{\adv(Y)})$.

\BL
Assume $Y$ has min-entropy $k_2$, then $t\bar{Y}$ is $t^22^{-(k_2+1)}$-close to having min-entropy $t(k_2-\log t)$, and $t(\bar{Y}+\bar{Y'})$ is $t^22^{-(k_2+2)}+t(t2^{-\frac{2}{3}k_2})^{\log t}$-close to having min-entropy $t((1-\frac{\log t}{3t})k_2-3\log t)$.
\EL

\begin{proof}
Without loss of generality assume that $Y$ is a flat source. Let $K=2^{k_2}$. First consider $t\bar{Y}$. Note that $\bar{Y}$ has the same min-entropy as $Y$ and is also a flat source, since every two columns of the parity check matrix are different. The support of $t\bar{Y}$ has the form $\bar{y}_1+ \cdots +\bar{y}_t$. Consider the case where all $\bar{y}_i$'s are different. This takes up a probability mass of 

\[\frac{K!}{(K-t)!} \cdot K^{-t}=1 \cdot (1-\frac{1}{K}) \cdots (1-\frac{t-1}{K}) > 1-\sum_{i=1}^{t-1}\frac{i}{K}>1-\frac{t^2}{2K}.\]

Since the columns of the parity check matrix are $4t$-wise linearly independent. For every two different sets $\{\bar{y}_i\}$'s, their sum cannot be the same. Therefore, the probability mass of getting a particular value is at most $t!K^{-t} \leq 2^{-t(k_2-\log t)}$. Thus $t\bar{Y}$ is $t^22^{-(k_2+1)}$-close to having min-entropy $t(k_2-\log t)$.

Next consider $t(\bar{Y}+\bar{\adv(Y)})$. Let $\adv(Y)=Y'$ and $Y''=\bar{Y}+\bar{Y'}$. Note that for every $s \in \Supp(Y'')$, $s \neq 0$ since $\forall y, \adv(y) \neq y$. Also note that $Y''$ has min-entropy at least $k_2-1$ since if $\bar{y}_1+\bar{y'}_1=\bar{y}_2+\bar{y'}_2$ for $y_1 \neq y_2$, then we must have $\bar{y'}_1=\bar{y}_2$ and $\bar{y'}_2=\bar{y}_1$. Without loss of generality assume that $Y''$ is a flat source with min-entropy $k_2-1$. Let $K_2=2^{k_2-1}$. Note that now in the support of $Y''$ there are no two different $y_1, y_2$ such that $\bar{y}_1+\bar{y'}_1=\bar{y}_2+\bar{y'}_2$ (since this will be absorbed into the same element).

We now consider $tY''$. An element in its support has the form $\sum_{i=1}^{t}(\bar{y}_i+\bar{y'}_i)$. We first get rid of those elements in $\Supp(tY'')$ such that some of the $\{\bar{y}_i+\bar{y'}_i\}$'s are the same. By the same argument as above this takes up a probability mass of at most $\frac{t^2}{2K_2}$. Now, for a particular set $\{\bar{y}_i+\bar{y'}_i\}_{i \in [t]}$, we consider how many different sets can have the same sum. 

Since the columns of the parity check matrix are $4t$-wise linearly independent, if the sum of two different sets $\{\bar{y}_i+\bar{y'}_i\}_{i \in [t]}$ are the same, then except those $\bar{y}_i+\bar{y'}_i$'s that are common in both sets, the rest of $\bar{y}_i+\bar{y'}_i$'s must form cycles. By cycle we mean a set of $l$ elements such that $\bar{y'}_1=\bar{y}_2, \bar{y'}_2=\bar{y}_3, \cdots, \bar{y'}_l=\bar{y}_1$ so that the sum is 0. Note that $l \geq 3$ since the support of $Y''$ has no 2-cycles. Let $S_1, S_2$ be the two sets $\{\bar{y}_i+\bar{y'}_i\}_{i \in [t]}$. Now, the elements in a cycle can come from both sets or just from one set. If the elements from a cycle comes only from $S_2$, then this cycle can be replaced by any other cycle with the same length, and the sum of $S_1$ and $S_2$ are still the same. On the other hand, if the elements of a cycle comes from both $S_1$ and $S_2$, then the elements in this cycle are completely determined by $S_1$ since cycles are disjoint. Therefore, let $r$ be the number of common elements in $S_1, S_2$, and let $l$ be the total length of cycles whose elements only come from the rest elements of $S_2$, and note that cycles have length at least 3, we have that if $l \geq \log t$, then the total probability mass of these elements in $\Supp(tY'')$ is at most

\[\sum_{\log t \leq l \leq t} \binom{t}{l}\binom{\frac{K_2}{3}}{\frac{l}{3}}l!(K_2)^{-l} \leq \sum_{\log t \leq l \leq t} t^l (\frac{K_2}{3})^{\frac{l}{3}}(K_2)^{-l} < \sum_{\log t \leq l \leq t}(t(K_2)^{-\frac{2}{3}})^l < t(t2^{-\frac{2}{3}k_2})^{\log t}.\]

On the other hand, if $l < \log t$, then the probability that $tY''$ gets a particular value is at most

\[\sum_{0 \leq l \leq \log t}\sum_{0 \leq r \leq t-l} \binom{t}{l} \binom{t-l}{r}\binom{\frac{K_2}{3}}{\frac{l}{3}}t! (K_2)^{-t}< t\log t \cdot t^{2t} (\frac{K_2}{3})^{\frac{\log t}{3}}(K_2)^{-t}< t\log t(t^2 (K_2)^{-(1-\frac{\log t}{3t})})^t.\]

Thus the min-entropy is at least $t((1-\frac{\log t}{3t})k_2-3\log t)$.
\end{proof}

Now for an $(n, k)$-source $X$ with $k=(1/2-\delta)n$, we know that $3\Enc(X)$ is $2^{-\Omega(n)}$-close to having min-entropy $(1/2+\delta) (2p)$. Assume that we want our non-malleable extractor to have error $\e \leq 1/n$. We'll choose a parameter $t<n/C\log n$ for a sufficiently large constant $C>1$. When $Y$ is uniform over $\ell=p/t$ bits, $t\bar{Y}$ is close to having min-entropy $t(k_2-\log t)> (1-1/C)p>(1/2-\delta/2) (2p)$, and $t(\bar{Y}+\bar{Y'})$ is close to having min-entropy $t((1-\frac{\log t}{3t})k_2-3\log t)> (1-1/C)p>(1/2-\delta/2) (2p)$. When $t\bar{Y}$ and $t(\bar{Y}+\bar{Y'})$ indeed have this min-entropy, by \lemmaref{lem:char1} we have that both $\IP(3\Enc(X), t\bar{Y})$ and $\IP(3\Enc(X), t(\bar{Y}+\bar{Y'}))$ are $2^{-\Omega(n)}$-close to uniform. Thus we can take $t=\Omega(n/(\log (1/\e)))$ and by \lemmaref{lem:char2} and \theoremref{thm:snmext} we have that the error of the non-malleable extractor is at most $\e$, and the seed length is roughly $p/t = O(n/t) =O(\log(1/\e))$. Thus we have the following theorem.

\BT
There exists a universal constant $\delta>0$ such that for every $n \in \N$ and $\e$ such that $2^{-\Omega(n)} \leq \e \leq 1/\poly(n)$, there exists an explicit $(k, \e)$ non-malleable extractor $\nm: \bits^n \times \bits^d \to \bits$ for $k=(1/2-\delta)n$ and seed length $d=O(\log n+\log (1/\e))$. 
\ET

\section{An Optimal Privacy Amplification Protocol for Arbitrarily Linear Entropy}\label{sec:protocol}
In this section we present our privacy amplification protocol for $(n,k)$-sources $X$ with $k=\delta n$ for any constant $\delta>0$. Following \cite{kr:agree-close} and \cite{DLWZ11}, we define a privacy amplification protocol $(P_A, P_B)$. The protocol is executed by two parties Alice and Bob, who share a secret $X\in \bits^n$. An active, computationally unbounded adversary Eve might have some partial information $E$ about $X$ satisfying $\thinf(X|E)\ge k$. Since Eve is unbounded, we can assume without loss of generality that she is deterministic.
Informally, we want the protocol to be such that whenever a party (Alice or Bob) does not reject, the key $R$ output by this party is random and independent of Eve's view. Moreover, if both parties do not reject, they must output the same keys $R_A=R_B$ with overwhelming probability.

More formally, we assume that Eve has full control of the communication channel between the two parties. This means that Eve can arbitrarily insert, delete, reorder or modify messages sent by Alice and Bob to each other. In particular, Eve's strategy $P_E$ defines two correlated executions $(P_A,P_E)$ and $(P_E,P_B)$ between Alice and Eve, and Eve and Bob, called ``left execution'' and ``right execution'', respectively. Alice and Bob are assumed to have fresh, private and independent random bits $Y$ and $W$, respectively. $Y$ and $W$ are not known to Eve. In the protocol we use $\perp$ as a special symbol to indicate rejection. At the end of the left execution $(P_A(X,Y),P_E(E))$, Alice outputs a key $R_A\in \bits^m \cup \{\perp\}$. Similarly, Bob outputs a key $R_B \in \bits^m \cup \{\perp\}$ at the end of the right execution $(P_E(E),P_B(X,W))$. We let $E'$ denote the final view of Eve, which includes $E$ and the communication transcripts of both executions $(P_A(X,Y),P_E(E))$ and $(P_E(E),P_B(X,W)$. We can now define the security of $(P_A,P_B)$. 

\BD An interactive protocol $(P_A, P_B)$, executed by Alice and Bob on a communication channel fully controlled by an active adversary Eve, is a $(k, m, \e)$-\emph{privacy amplification protocol} if it satisfies the following properties whenever $\thinf(X|E) \geq k$:
\begin{enumerate}
\item \underline{Correctness.} If Eve is passive, then $\Pr[R_A=R_B \land~ R_A\neq \perp \land~ R_B\neq \perp]=1$.
\item \underline{Robustness.} We start by defining the notion of {\em pre-application} robustness, which states that even if Eve is active, $\Pr[R_A \neq R_B \land~ R_A \neq \perp \land~ R_B \neq \perp]\le \e$.

The stronger notion of {\em post-application} robustness is defined similarly, except Eve is additionally given the key $R_A$ the moment she completed the left execution $(P_A,P_E)$, and the key $R_B$ the moment she completed the right execution $(P_E,P_B)$. For example, if Eve completed the left execution before the right execution, she may try to use $R_A$ to force Bob to output a different key $R_B\not\in\{R_A,\perp\}$, and vice versa.
\item \underline{Extraction.} Given a string $r\in \bits^m\cup \{\perp\}$, let $\purify(r)$ be $\perp$ if $r=\perp$, and otherwise replace $r\neq \perp$ by a fresh $m$-bit random string $U_m$:  $\purify(r)\leftarrow U_m$. Letting $E'$ denote Eve's view of the protocol, we require that

\begin{align*}
\Delta((R_A, E'),(\purify(R_A), E')) \leq \e
~~~~\mbox{and}~~~~ 
\Delta((R_B, E'),(\purify(R_B), E')) \leq \e
\end{align*}

Namely, whenever a party does not reject, its key looks like a fresh random string to Eve.
\end{enumerate}
The quantity $k-m$ is called the \emph{entropy loss} and the quantity $\log (1/\e)$ is called the \emph{security parameter} of the protocol.
\ED

\subsection{Prerequisites from previous work}
One-time message authentication codes (MACs) use a shared random key to authenticate a message in the information-theoretic setting.
\begin{definition} \label{def:mac}
A function family $\{\mac_R : \bits^{d} \to \bits^{v} \}$ is a $\e$-secure one-time MAC for messages of length $d$ with tags of length $v$ if for any $w \in \bits^{d}$ and any function (adversary) $A : \bits^{v} \to \bits^{d} \times \bits^{v}$,

\[\Pr_R[\mac_R(W')=T' \wedge W' \neq w \mid (W', T')=A(\mac_R(w))] \leq \e,\]
where $R$ is the uniform distribution over the key space $\bits^{\ell}$.
\end{definition}

\begin{theorem} [\cite{kr:agree-close}] \label{thm:mac}
For any message length $d$ and tag length $v$,
there exists an efficient family of $(\lceil  \frac{d}{v} \rceil 2^{-v})$-secure
$\mac$s with key length $\ell=2v$. In particular, this $\mac$ is $\eps$-secure when
$v = \log d + \log (1/\e)$.\\
More generally, this $\mac$ also enjoys the following security guarantee, even if Eve has partial information $E$ about its key $R$.
Let $(R, E)$ be any joint distribution.
Then, for all attackers $A_1$ and $A_2$,

\begin{align*}
\Pr_{(R, E)} [&\mac_R(W')=T' \wedge W' \neq W \mid W = A_1(E), \\ &~(W', T') = A_2(\mac_R(W), E)] \leq \left \lceil  \frac{d}{v} \right \rceil 2^{v-\thinf(R|E)}.
\end{align*}

(In the special case when $R\equiv U_{2v}$ and independent of $E$, we get the original bound.)
\end{theorem}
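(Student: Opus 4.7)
The plan is to prove Theorem~\ref{thm:mac} via the standard polynomial-evaluation MAC. Parse the key $R \in \bits^{2v}$ as a pair $(A, B) \in \F_{2^v} \times \F_{2^v}$, and parse the message $w \in \bits^d$ by breaking it into $D := \lceil d/v \rceil$ blocks of $v$ bits each and interpreting these blocks as the coefficients of a polynomial $p_w \in \F_{2^v}[X]$ of degree at most $D - 1$. Define $\mac_{(A,B)}(w) := p_w(A) + B$, with all operations in $\F_{2^v}$. Arithmetic in $\F_{2^v}$ runs in time polynomial in $v$, so this family is efficient and has the claimed key length $2v$.

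For the uniform-key, no-side-information case, I would fix the query $w$ and any candidate forgery $(w', t')$ with $w' \neq w$. A successful forgery requires both $T = p_w(A) + B$ (the legitimate tag that Eve observes) and $p_{w'}(A) + B = t'$. Subtracting gives $(p_{w'} - p_w)(A) = t' - T$. Since the message-to-polynomial encoding is injective and $w \neq w'$, the polynomial $p_{w'} - p_w$ is nonzero and has degree at most $D - 1$, hence at most $D$ roots in $\F_{2^v}$. With $A$ uniform, the probability of landing on one of these roots is at most $D/2^v = \lceil d/v \rceil 2^{-v}$, giving the first claim.

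For the general guarantee with partial information $E$, the main observation is that for each fixed value $e$ of $E$, the set of ``bad'' keys $r = (A, B)$ leading to a successful forgery has size at most $D \cdot 2^v$. Writing $W = A_1(e)$, for each of the $2^v$ possible transcript tags $t$, transcript-consistency forces $B = t - p_W(A)$; given $t$, the forgery $(w', t') = A_2(t, e)$ is determined, and the forgery equation $(p_{w'} - p_W)(A) = t' - t$ admits at most $D$ values of $A$. As $t$ ranges over $\F_{2^v}$, the slices $\{(A, t - p_W(A)) : A \in \F_{2^v}\}$ partition $\F_{2^v}^2$, so these bad keys are counted without overlap. Using $\Pr[R = r \mid E = e] \leq 2^{-\hinf(R \mid E = e)}$ and summing,
\[
\Pr[\text{success} \mid E = e] \leq D \cdot 2^v \cdot 2^{-\hinf(R \mid E = e)}.
\]
Taking expectations over $e$ and invoking the definition of $\thinf(R \mid E)$ yields the bound $\lceil d/v \rceil \cdot 2^{v - \thinf(R \mid E)}$, as required.

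The only step requiring genuine care is the partition argument, i.e., checking that the bad keys are not double counted across different transcripts; the rest reduces to the one-time analysis from the uniform case. Specializing to $R \equiv U_{2v}$ independent of $E$ gives $\thinf(R \mid E) = 2v$ and recovers the original $\lceil d/v \rceil 2^{-v}$ bound.
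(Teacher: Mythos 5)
The paper cites Theorem~\ref{thm:mac} from \cite{kr:agree-close} without giving a proof, so there is no in-paper argument to compare against; your proposal is the natural thing to try and the overall partition-of-keys strategy for the leakage-resilient bound is the right one. However, there is a genuine bug in the construction as you describe it.

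You define $p_w$ by parsing $w$ into $D=\lceil d/v\rceil$ blocks and using them as the coefficients of $X^0,\dots,X^{D-1}$, i.e.\ $p_w$ has a constant term. With that convention, $\mac_{(A,B)}(w)=w_0 + w_1A+\cdots+w_{D-1}A^{D-1}+B$, and the scheme is trivially insecure: given the tag $T$ of $w$, the attacker changes only the constant block to some $w_0'\neq w_0$ and outputs $T'=T+(w_0'-w_0)$, which verifies with probability~$1$ for every key. The flaw in your counting step is the claim that because $p_{w'}-p_w$ is nonzero of degree at most $D-1$, the equation $(p_{w'}-p_w)(A)=t'-T$ has at most $D$ solutions. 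The solutions are the roots of $(p_{w'}-p_w)-(t'-T)$, and when $p_{w'}-p_w$ is a \emph{nonzero constant} equal to $t'-T$, this polynomial is identically zero and every $A$ is a solution. The same failure infects the slice-counting in the leakage-resilient part (the slice for that transcript $t$ contains $2^v$, not $\le D$, bad keys).

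The fix is the standard one: drop the constant term and set $p_w(X)=\sum_{i=1}^{D} w_i X^i$, so $p_w$ has degree at most $D$ and zero constant coefficient. Then for $w\neq w'$ the difference $p_{w'}-p_w$ is nonzero of degree between $1$ and $D$, hence $(p_{w'}-p_w)(X)-(t'-T)$ is never the zero polynomial and has at most $D$ roots, recovering the $\lceil d/v\rceil 2^{-v}$ bound. With this correction, the rest of your argument---in particular the observation that for each value $e$ of $E$ the bad keys partition across the $2^v$ transcript slices $\{(A,\,t-p_W(A))\}_{t}$, giving at most $D\cdot 2^v$ bad keys, followed by the bound $\Pr[R=r\mid E=e]\le 2^{-\hinf(R\mid E=e)}$ and averaging over $e$ to obtain $\lceil d/v\rceil 2^{\,v-\thinf(R\mid E)}$---goes through correctly.
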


\begin{remark}
Note that the above theorem indicates that the MAC works even if the key $R$ has average conditional min-entropy rate $>1/2$.
\end{remark}

Sometimes it is convenient to talk about average case seeded extractors, where the source $X$ has average conditional min-entropy $\thinf(X|Z) \geq k$ and the output of the extractor should be uniform given $Z$ as well. The following lemma is proved in \cite{dors}.

\BL \cite{dors}
For any $\delta>0$, if $\Ext$ is a $(k, \e)$ extractor then it is also a $(k+\log(1/\delta), \e+\delta)$ average case extractor.
\EL

For a strong seeded extractor with optimal parameters, we use the following extractor constructed in \cite{GuruswamiUV09}.

\BT [\cite{GuruswamiUV09}] \label{thm:optext} 
For every constant $\alpha>0$, and all positive integers $n,k$ and any $\e>0$, there is an explicit construction of a strong $(k,\e)$-extractor $\Ext: \bits^n \times \bits^d \to \bits^m$ with $d=O(\log n +\log (1/\e))$ and $m \geq (1-\alpha) k$. It is also a strong $(k, \e)$ average case extractor with $m \geq (1-\alpha) k-O(\log n+\log (1/\e))$.
\ET

We need the following construction of strong two-source extractors in \cite{Raz05}.
\begin{theorem}[\cite{Raz05}] \label{thm:razweakseed} For any
  $n_1,n_2,k_1,k_2,m$ and any $0 < \delta < 1/2$ with

\begin{itemize}
\item $n_1 \geq 6 \log n_1 + 2 \log n_2$ \item $k_1 \geq (0.5 +
\delta)n_1 + 3 \log n_1 + \log n_2$ \item $k_2 \geq 5 \log(n_1 -
k_1)$ \item $m \leq \delta \min[n_1/8,k_2/40] - 1$
\end{itemize}

There is a polynomial time computable strong 2-source extractor
$\Raz : \bits^{n_1} \times \bits^{n_2} \rightarrow \bits^m$ for
min-entropy $k_1, k_2$ with error $2^{-1.5 m}$.
\end{theorem}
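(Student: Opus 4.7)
The theorem is attributed to \cite{Raz05}; the plan is to sketch the main ideas one could use to prove it. The construction combines the inner-product function over $\F_2$, which as we have already seen is a good two-source extractor whenever the entropies sum above $n_1$, with a pre-processing step that boosts the effective entropy of the small-entropy source $X_2$. Concretely, I would pick a constant-degree expander $G$ on vertex set (a subset of) $\bits^{n_1}$ with spectral gap bounded away from $0$, use $X_2$ to perform a random walk on $G$ of length $\Theta(m)$ starting at a fixed vertex, and then let the $i$-th output bit be the $\F_2$ inner product of $X_1$ with the $i$-th group of walk vertices (summed in $\F_2^{n_1}$). The role of the walk is to convert logarithmic entropy in $X_2$ into a distribution on $\bits^{n_1}$ whose Fourier transform is spread.

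The analysis follows the template of Lemma~\ref{lem:char1}. To control the bias of a single output bit, I would bound $|E[\psi(\IP(X_1, V))]|$ for any non-trivial character $\psi$, where $V$ is the distribution on $\bits^{n_1}$ produced from $X_2$ by the walk. By Cauchy--Schwarz this is at most $\sqrt{2^{n_1}}\cdot\ltwo{\widehat{X_1}}\cdot \linfty{\widehat{V}}$; using $k_1 \geq (0.5+\delta)n_1$ gives $\ltwo{\widehat{X_1}}^2 \leq 2^{n_1-k_1}$, while the spectral gap of $G$ forces $\linfty{\widehat{V}}$ to decay geometrically in the number of walk steps. The final output is obtained by concatenating the per-bit constructions, and the non-uniform XOR lemma (Lemma~\ref{lem:noxor2}) then turns single-character bias bounds into a bound on the total error. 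To upgrade to the strong two-source guarantee (uniform given one of the sources), I would invoke Theorem~\ref{thm:stext}, which loses a $2^m$ factor that is still absorbed by the $2^{-1.5m}$ bound because $m \leq \delta\min(n_1/8, k_2/40)-1$.

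The main obstacle is precisely the Fourier estimate on $V$: since $X_2$ has min-entropy only $k_2 \geq 5\log(n_1-k_1)$, which is merely logarithmic in the ambient dimension, no direct min-entropy argument works and one must exploit the pseudorandomness of the walk. The clean way to package this is via the $\ell^2$ version of the expander mixing lemma applied to the walk operator: after $t$ steps driven by a min-entropy source of entropy $k_2$, the Fourier mass outside the trivial character contracts by roughly $(1-\gamma)^t$, and the numerical constants in the hypotheses ($5\log(n_1-k_1)$, $k_2/40$, $n_1/8$) are exactly what one gets when optimizing the trade-off between walk length, entropy consumed per step, and the target error $2^{-1.5m}$. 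The auxiliary conditions $n_1 \geq 6\log n_1 + 2\log n_2$ and $k_1 \geq (0.5+\delta)n_1 + 3\log n_1 + \log n_2$ look like they arise from making a union bound over seed choices in the explicit expander construction work out cleanly.
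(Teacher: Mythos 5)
The paper supplies no proof of this theorem; it is imported from Raz's ``Extractors with Weak Random Seeds'' purely as a citation, so there is no argument in the paper to compare yours against. Judged on its own, your sketch has a fatal flaw at its Fourier-analytic heart, and I do not see how to repair it. Whatever preprocessing $g$ you apply to the weak seed (an expander walk plus $\F_2$-sums of walk vertices is one such $g$), the resulting $V = g(X_2)$ is a deterministic function of a $k_2$-source and hence is supported on at most $2^{k_2}$ points of $\bits^{n_1}$. Parseval then forces $\sum_\phi |\widehat{V}(\phi)|^2 = 2^{n_1}\ltwo{V}^2 \ge 2^{n_1-k_2}$, so the nontrivial Fourier coefficients of $V$ average at least about $2^{-k_2}$ in square and $\linfty{\widehat V}$ (on nontrivial characters) is at least $2^{-k_2/2}$; in fact nothing stops the adversary from choosing $\Supp(X_2)$ so that $g(\Supp(X_2))$ concentrates near a coset of a low-dimensional subspace, driving some nontrivial $|\widehat V(\phi)|$ to $1$. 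The geometric contraction $\linfty{\widehat V}\approx (1-\gamma)^t$ you invoke is a statement about walks driven by fresh uniform coins; a walk driven by an adversarial source with only $k_2 = O(\log n_1)$ bits of min-entropy has no mixing guarantee at all, and the walk can be trapped in a tiny neighborhood. So the decay you hinge the proof on is simply false in the relevant adversarial model.

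Even setting aside the walk, the Cauchy--Schwarz template cannot close in this parameter regime. The correct chain is $|\E[(-1)^{\IP(X_1,V)}]| = |\langle V, \widehat{X_1}\rangle| \le \ltwo{V}\ltwo{\widehat{X_1}} \le 2^{-k_2/2}\cdot 2^{(n_1-k_1)/2} = 2^{(n_1-k_1-k_2)/2}$ (your stated bound $\sqrt{2^{n_1}}\ltwo{\widehat{X_1}}\linfty{\widehat V}$ carries an extra factor of about $2^{n_1/2}$ and is incorrect, but the corrected version is already hopeless). In the theorem's regime $k_1 \ge (0.5+\delta)n_1 + O(\log n_1 n_2)$ with $\delta$ an arbitrarily small constant, while $k_2$ need only be $5\log(n_1-k_1)=O(\log n_1)$; thus $n_1 - k_1 - k_2$ is positive and linear in $n_1$, and the ``bound'' on the single-bit bias exceeds $1$ and is vacuous. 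Feeding a vacuous bound into Lemma~\ref{lem:noxor2} and Theorem~\ref{thm:stext} yields nothing. Raz's actual proof cannot and does not go via a single Fourier Cauchy--Schwarz on a boosted seed distribution, precisely because $k_1+k_2 < n_1$ here; it exploits finer structure of the construction (a counting/averaging argument over seeds rather than a global Parseval bound). You should read the original proof rather than try to patch the expander-walk sketch.
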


\BT \cite{DLWZ11, CRS11, Li12} \label{thm:enmext}
For every constant $\delta>0$, there exists a constant $\beta>0$ such that for every $n, k \in \N$ with $k \geq (1/2+\delta)n$ and $\e>2^{-\beta n}$ there exists an explicit $(k, \e)$ non-malleable extractor with seed length $d=O(\log n+\log \e^{-1})$ and output length $m=\Omega(n)$.
\ET

The following standard lemma about conditional min-entropy is implicit in \cite{NisanZ96} and explicit in \cite{MW97}.

\begin{lemma}[\cite{MW97}] \label{lem:condition} 
Let $X$ and $Y$ be random variables and let ${\cal Y}$ denote the range of $Y$. Then for all $\e>0$, one has
\[\Pr_Y \left [ H_{\infty}(X|Y=y) \geq H_{\infty}(X)-\log|{\cal Y}|-\log \left( \frac{1}{\e} \right )\right ] \geq 1-\e.\]
\end{lemma}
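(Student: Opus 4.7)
The plan is to prove the lemma by a direct counting/union-type argument. I would set $k=H_\infty(X)$ and define the threshold $t=k-\log|\mathcal{Y}|-\log(1/\e)$. A value $y$ is called \emph{bad} if $H_\infty(X\mid Y=y)<t$, i.e.\ there exists some $x=x_y$ witnessing this with $\Pr[X=x_y\mid Y=y]>2^{-t}=2^{-k}\cdot|\mathcal{Y}|/\e$. The goal is to show $\Pr[Y\in \mathrm{Bad}]<\e$.

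First I would control $\Pr[Y=y]$ for each individual bad $y$. For any bad $y$ with witness $x_y$, the joint probability satisfies
\[
\Pr[Y=y]\cdot\Pr[X=x_y\mid Y=y] \;=\; \Pr[X=x_y,\,Y=y] \;\le\; \Pr[X=x_y] \;\le\; 2^{-k},
\]
using the min-entropy bound on $X$ in the last inequality. Rearranging and using the lower bound on $\Pr[X=x_y\mid Y=y]$ gives $\Pr[Y=y]<\e/|\mathcal{Y}|$ for every bad $y$.

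Finally I would sum over $\mathrm{Bad}\subseteq\mathcal{Y}$ to obtain
\[
\Pr[Y\in\mathrm{Bad}] \;=\; \sum_{y\in\mathrm{Bad}}\Pr[Y=y] \;<\; |\mathrm{Bad}|\cdot\frac{\e}{|\mathcal{Y}|} \;\le\; \e,
\]
which is exactly the complement of the desired event. There is no real obstacle here; the one thing to be slightly careful about is that the witness $x_y$ depends on $y$, so the bound must be applied pointwise per bad $y$ (rather than trying to union-bound over $x$ first), and the counting bound on $|\mathrm{Bad}|$ by $|\mathcal{Y}|$ is what makes the $\log|\mathcal{Y}|$ term appear in the entropy loss.
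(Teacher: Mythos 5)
Your argument is correct, and it is the standard proof of this lemma (the paper itself only cites \cite{MW97} and does not include a proof). The key step — using $\Pr[Y=y]\Pr[X=x_y\mid Y=y]=\Pr[X=x_y,Y=y]\le\Pr[X=x_y]\le 2^{-H_\infty(X)}$ to bound $\Pr[Y=y]$ for each bad $y$ individually, then summing over at most $|\mathcal{Y}|$ bad values — is exactly right, and you correctly flag that the witness $x_y$ must be chosen per $y$ rather than union-bounded over $x$.
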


\subsection{The privacy amplification protocol}
We first define the following alternating extraction protocol.

\begin{figure}[htb]
\begin{center}
\begin{small}
\begin{tabular}{l c l}
Quentin:  $Q, S_0$ & &~~~~~~~~~~Wendy: $X, \bar{X}=(X_1, \cdots, X_t)$ \\

\hline\\
$S_0$ & $\llrightarrow[\rule{2.5cm}{0cm}]{S_0}{} $ & \\
& $\llleftarrow[\rule{2.5cm}{0cm}]{R_0}{} $ & $R_0=\Raz(S_0, X)$ \\
$S_1=\Ext_q(Q, R_0)$ & $\llrightarrow[\rule{2.5cm}{0cm}]{S_1}{} $ & \\
& $\llleftarrow[\rule{2.5cm}{0cm}]{R_1}{} $ & $R_1=\Ext_w(X, S_1)$, $V_1=\Ext_v(X_1, S_1)$ \\
$S_2=\Ext_q(Q, R_1)$ & $\llrightarrow[\rule{2.5cm}{0cm}]{S_2}{} $ & \\
& $\llleftarrow[\rule{2.5cm}{0cm}]{R_2}{} $ & $R_2=\Ext_w(X, S_2)$, $V_2=\Ext_v(X_2, S_2)$ \\
& $\cdots$ & \\
$S_t=\Ext_q(Q, R_{t-1})$ & $\llrightarrow[\rule{2.5cm}{0cm}]{S_t}{} $ & \\
& & $R_t=\Ext_w(X, S_t)$, $V_t=\Ext_v(X_t, S_t)$ \\
\hline
\end{tabular}
\end{small}
\caption{\label{fig:altext}
Alternating Extraction.
}
\end{center}
\end{figure}

\textbf{Alternating Extraction.} Assume that we have two parties, Quentin and Wendy. Quentin has a source $Q$,  Wendy has a source $X$ and a source $\bar{X}=(X_1\circ \cdots \circ X_t)$ with $t$ rows. Also assume that Quentin has a weak source $S_0$ with entropy rate $>1/2$ (which may be correlated with $Q$). Suppose that $(Q, S_0)$ is kept secret from Wendy and $(X, \bar{X})$ is kept secret from Quentin.  Let $\Ext_q$, $\Ext_w$, $\Ext_v$ be strong seeded extractors with optimal parameters, such as that in \theoremref{thm:optext}. Let $\Raz$ be the strong two-source extractor in \theoremref{thm:razweakseed}. Let $s, d$ be two integer parameters for the protocol. The \emph{alternating extraction protocol} is an interactive process between Quentin and Wendy that runs in $t+1$ steps. 

In the $0$'th step, Quentin sends $S_0$ to Wendy, Wendy computes $R_0=\Raz(S_0, X)$ and replies $R_0$ to Quentin, Quentin then computes $S_1=\Ext_q(Q, R_0)$. In this step $R_0, S_1$ each outputs $d$ bits. In the first step, Quentin sends $S_1$ to Wendy, Wendy computes $V_1=\Ext_v(X_1, S_1)$ and $R_1=\Ext_w(X, S_1)$. She sends $R_1$ to Quentin and Quentin computes $S_2=\Ext_q(Q, R_1)$. In this step $V_1$ outputs $2^{t-1}s$ bits, and $R_1, S_2$ each outputs $d$ bits. In each subsequent step $i$, Quentin sends $S_i$ to Wendy, Wendy computes $V_i=\Ext_v(X_i, S_i)$ and $R_i=\Ext_w(X, S_i)$. She replies $R_i$ to Quentin and Quentin computes $S_{i+1}=\Ext_q(Q, R_i)$. In step $i$, $V_i$ outputs $2^{t-i}s$ bits, and $R_i, S_{i+1}$ each outputs $d$ bits. Therefore, this process produces the following sequence: 

\begin{align*}
S_0, R_0=\Raz(S_0, X), & S_1=\Ext_q(Q, R_0), V_1=\Ext_v(X_1, S_1), R_1=\Ext_w(X, S_1), \cdots, \\ &S_t=\Ext_q(Q, R_{t-1}), V_t=\Ext_v(X_t, S_t), R_t=\Ext_w(X, S_t).
\end{align*}

\textbf{Look-Ahead Extractor.} Now we can define our look-ahead extractor. Let $Y=(Q, S_0)$ be a seed, the look-ahead extractor is defined as 

\[\laext((X, \bar{X}),  Y)=\laext((X, \bar{X}), (Q, S_0)) \eqdef V_1, \cdots, V_t.\]

Note that the look-ahead extractor can be computed by each party (Alice or Bob) alone in our final protocol. Now we give our protocol for privacy amplification.

\subsubsection{The protocol}

Now we give our privacy amplification protocol for the setting when $\thinf(X|E) = k \ge \delta n$.
We assume that the error $\e$ we seek satisfies $2^{-\Omega(\delta n)}< \e < 1/n$. In the description below, it will be convenient to introduce an ``auxiliary'' security parameter $s$. Eventually, we will set $s=\log(C/\e)+O(1)=\log(1/\e)+O(1)$, so that $O(C)/2^s<\e$, for a sufficiently large $O(C)$ constant related to the number of ``bad'' events we will need to account for. We will need the following building blocks:


\begin{itemize}
\item Let $\Cond:\zo^n \rightarrow (\zo^{n'})^C$ be a rate-$(\delta \to 0.9, 2^{-s})$-somewhere-condenser. Specifically, we will use the one from \theoremref{thm:swcondenser}, where $C=\poly(1/\delta)=O(1)$, $n'=\poly(\delta)n=\Omega(n)$ and
$2^{-s} \gg 2^{-\Omega(\delta n)}$.

\item Let $\nmExt:\bits^{n'}\times \bits^{d'}\rightarrow \bits^{m'}$ be a $(0.8n',2^{-s})$-non-malleable extractor. Specifically, we will use the one from \theoremref{thm:enmext} and set the output length $m' = 6\cdot 2^Cs$.

\item Let $\Ext, \Ext_q, \Ext_w, \Ext_v$ be seeded extractors with error $2^{-s}$, seed length $d=O(\log n+s)$ and optimal entropy loss $O(s)$ as in \theoremref{thm:optext}. $\Ext_q, \Ext_w, \Ext_v$ will be used in $\laext$. 

\item Let $\Raz$ be the strong two-source extractor in \theoremref{thm:razweakseed}. This will be used in $\laext$.

\item Let $\lrmac$ be a one-time (``leakage-resilient'') MAC for $d$-bit messages, with key length $2^C(6s)$ and tag length $2^C(3s)$. We will later use the second part of \theoremref{thm:mac} to argue good security of this MAC even when some bits of partial information about its key is leaked to the attacker. 
\end{itemize}
Using the above building blocks, the protocol is given in Figure~\ref{fig:AKA}. To emphasize the presence of Eve, we will use `prime' to denote all the protocol values seen or generated by Bob; e.g., Bob picks $W'$, but Alice sees potentially different $W$, etc. 

\begin{figure}[htb]
\begin{center}
\begin{small}
\begin{tabular}{l c l}
Alice:  $X$ & Eve: $E$ & ~~~~~~~~~~~~Bob: $X$ \\

\hline\\
 $(X_1,\ldots X_C) = \Cond(X)$. &  &  $(X_1,\ldots X_C) = \Cond(X)$.\\
Sample random $Y=(Y_1, Y_2, Y_3)$ && \\
such that& &\\
$|Y_1|=max\{d, d'\}, |Y_3|=30max\{d, d'\}+3s$, &&\\
$|Y_2|=4Cd+31max\{d, d'\}+4s$ & & \\
 & $(Y_1, Y_2, Y_3) \llrightarrow[\rule{0.1cm}{0cm}]{} (Y'_1, Y'_2, Y'_3)$ & \\
&& Sample random $W'$ with $d$ bits.\\
&&  $Z' = \Ext(X;Y_1')$ with $2^C(6s)$ bits.\\
&&  $\bar{X}'=(\bar{X}'_1, \ldots, \bar{X}'_C)$, \\
&&  where $\bar{X}'_i=\nmExt(X_i, Y'_1)$. \\
&&  $V'=(V'_1, \ldots, V'_C)=\laext((X, \bar{X}'), (Y'_2, Y'_3))$ \\
&&  with parameters $(2s, d)$. \\
&&  $T' = \lrmac_{Z'}(W')$.\\
&&  Set final $R_B = \Ext(X;W')$.\\
 & $(W,T, \bar{V}) \llleftarrow[\rule{0.1cm}{0cm}]{} (W',T', V')$ & \\
 $Z = \Ext(X;Y_1)$ with $2^C(6s)$ bits. &&\\
 $\bar{X}=(\bar{X}_1, \ldots, \bar{X}_C)$, && \\
 where $\bar{X}_i=\nmExt(X_i, Y_1)$. &&\\
 $V=(V_1, \ldots, V_C)=\laext((X, \bar{X}), (Y_2, Y_3))$ &&\\
 with parameters $(2s, d)$. && \\
{\bf If} $T \neq \lrmac_{Z}(W)$ or &&\\
$ V \neq  \bar{V}$ {\em reject}.&&\\
Set final $R_A = \Ext(X;W)$.&&\\
\hline
\end{tabular}
\end{small}
\caption{\label{fig:AKA}
$2$-round Privacy Amplification Protocol for $\thinf(X|E) \geq \delta n$.
}
\end{center}
\end{figure}

\BT
For any constant $\delta>0$, the above protocol is a privacy amplification protocol with security parameter $\log(1/\e)$, entropy loss $2^{\poly(1/\delta)}\log(1/\e)$, randomness complexity $\poly(1/\delta)\log(1/\e)$ and communication complexity $2^{\poly(1/\delta)}\log(1/\e)$.
\ET

\begin{thmproof}
The proof can be divided into two cases: whether the adversary changes $Y_1$ or not. Note that $Y_1, Y_2, Y_3$ and $W$ all have size $O(s)$.

\textbf{Case 1:} The adversary does not change $Y_1$. In this case, note that $Z=Z'$ and is $2^{-s}$-close to uniform in Eve's view (even conditioned on $Y_1, Y_2, Y_3$). Note that the size of $(V'_1, \ldots, V'_C)$ is at most $\sum_i 2^{C-i}(2s) < 2^C (2s)$, and the size of $Z$ is $2^C (6s)$. Therefore, by \lemmaref{lem:amentropy} even if conditioned on $(V'_1, \ldots, V'_C)$, the average conditional min-entropy of  $Z$ is at least $2^C (6s)-2^C (2s)=2^C (4s)$. Therefore by theorem~\ref{thm:mac} the probability that Eve can change $W'$ to a different $W$ without causing Alice to reject is at most 

\[\left \lceil  \frac{O(s)}{2^C (3s)} \right \rceil 2^{2^C (3s)-2^C (4s)}+2^{-s} \leq O(2^{-s}).\]

When $W=W'$, by theorem~\ref{thm:optext} $R_A=R_B$ and is $2^{-s}$-close to uniform in Eve's view. 

\textbf{Case 2:} The adversary does change $Y_1$. In this case, first note that by \theoremref{thm:swcondenser}, $\Cond(X)=(X_1,\ldots X_C)$ is $2^{-s}$-close to a somewhere rate-$0.9$-source with $C$ rows, and each row has length $\Omega(n)$. In the following we will simply treat it as a somewhere rate-$0.9$-source, since this only adds $2^{-s}$ to the error. We assume that $X_g, 1 \leq g \leq C$ is a rate $0.9$-source \footnote{In general a somewhere rate-$0.9$-source is a convex combination of elementary somewhere rate-$0.9$-sources, but without loss of generality we can assume it is an elementary somewhere rate-$0.9$-source.}. 

Now since the adversary changes $Y_1$ to $Y'_1 \neq Y_1$, by \theoremref{thm:enmext} we have that 

\[(\bar{X}_g, \bar{X}'_g, Y_1) \approx_{2^{-s}} (U_{m'},  \bar{X}'_g, Y_1).\]

As the first step for the following analysis, we now fix $Y_1$, $Y'_1$ and $Z', \bar{X}'_g$. Note that $Y'_1$ is a deterministic function of $(Y_1, Y_2, Y_3)$, and after fixing $Y'_1$, $(Z', \bar{X}'_g)$ is a deterministic function of $X$. Thus by \lemmaref{lem:amentropy} we have the following claim.

\BCM \label{clm:condition0}
After the fixings of $(Y_1, Y'_1, Z', \bar{X}'_g)$, $\bar{X}_g$ is a deterministic function of $X$ and is $2^{-s}$ close to a source with average conditional min-entropy $m'-2^C(6s)$. 
\ECM

Note that by \lemmaref{lem:amentropy}, after this fixing, the average conditional min-entropy of $X$ is at least $k-m'-2^C(6s)$. Now we analyze the sequences $(V_1, \ldots, V_C)$ produced by $\laext$, or equivalently, the alternating extraction process. Note here $(Q, S_0)=(Y_2, Y_3)$ and $(Q', S'_0)=(Y'_2, Y'_3)$. First we have the following claim.

\BL \label{lem:condition1}
In step $0$, we have 
\[(R_0, S_0, S'_0) \approx_{2^{-s}} (U_d, S_0, S'_0)\]
and
\[(S_1, R_0, S_0, R'_0, S'_0) \approx_{5 \cdot 2^{-s}} (U_d, R_0, S_0, R'_0, S'_0).\]
Moreover, conditioned on $(S_0, S'_0)$, $(R_0, R'_0)$ are both deterministic functions of $X$; conditioned on $(R_0, S_0, R'_0, S'_0)$, $(S_1, S'_1)$ are both deterministic functions of $Q$.
\EL

\begin{proof}[Proof of the claim.] 
Note that previously we have fixed $Y_1, Y'_1$. Since $Y_1$ is independent of $Y_3$ and $Y'_1$ is a deterministic function of $Y$, by \lemmaref{lem:condition} we have that $S_0=Y_3$ is $2^{-s}$-close to a source with min-entropy $29max\{d, d'\}+2s$. Note that $Y_3$ and $X$ are still independent. Thus by \theoremref{thm:razweakseed} we have that 

\[(R_0, S_0) \approx_{2^{-s}} (U_d, S_0).\]

Since conditioned on $S_0$, $R_0$ is a deterministic function of $X$, which is independent of $Y$, we also have that 

\[(R_0, S_0, S'_0) \approx_{2^{-s}} (U_d, S_0, S'_0).\]

Now we fix $(S_0, S'_0)$ and $(R_0, R'_0)$ are both deterministic functions of $X$. Note that $S_0=Y_3$ is independent of $Y_2$ and $S'_0=Y'_3$ is a deterministic function of $Y$. Thus by \lemmaref{lem:condition} we have that conditioned on these fixings $Q=Y_2$ is $2^{-s}$-close to a source with entropy $4Cd$. Since $R_0, R'_0$ are both deterministic functions of $X$, they are independent of $Q$. Therefore by \theoremref{thm:optext} we have 

\[(S_1, R_0, R'_0) \approx_{2^{-s}} (U_d, R_0, R'_0).\]

Thus altogether we have that 

\[(S_1, R_0, S_0, R'_0, S'_0) \approx_{5 \cdot 2^{-s}} (U_d, R_0, S_0, R'_0, S'_0)\]
Moreover, conditioned on $(R_0, S_0, R'_0, S'_0)$, $(S_1, S'_1)$ are both deterministic functions of $Q$.
\end{proof}

Now we fix $(R_0, S_0, R'_0, S'_0)$. Note that after this fixing, $S_1, S'_1$ are both functions of $Q=Y_2$. Note that $Q$ has min-entropy at least $4Cd$.

For $i=0, \cdots, C$, let $View_i=(S_0, \cdots, S_{i}, R_0, \cdots, R_{i}, V_1, \cdots, V_i)$. Similarly define $View'_i$ to be the corresponding variables produced by Bob. We now have the following lemma.

\BL \label{lem:condition2}
For any $i$, we have that 
\[(R_i, View_{i-1}, View'_{i-1}, S_i, S'_i) \approx_{(2i+4)2^{-s}} (U_d, View_{i-1}, View'_{i-1}, S_i, S'_i)\]

and

\[(S_{i+1}, View_i, View'_i) \approx_{(2i+5)2^{-s}} (U_d, View_i, View'_i).\]
Moreover, conditioned on $(View_{i-1}, View'_{i-1}, S_i, S'_i)$, $(R_i, R'_i, V_i, V'_i)$ are all deterministic functions of $X$; conditioned on $(View_i, View'_i)$, $(S_{i+1}, S'_{i+1})$ are both deterministic functions of $Q$.
\EL

\begin{proof}
We prove the lemma by induction on $i$. When $i=0$, the statements are already proved in \lemmaref{lem:condition1}. Now we assume that the statements hold for $i=j$ and we prove them for $i=j+1$.

We first fix $(View_j, View'_j)$. Since now $(S_{j+1}, S'_{j+1})$ are both deterministic functions of $Q$, they are independent of $X$. Moreover $S_j$ is $(2j+5)2^{-s}$-close to uniform. Note that the average conditional min-entropy of $X$ is at least $k-m'-2^C(6s)-2^C(4s)-2Cd=k-m'-2^C(10s)-2Cd$. Therefore by \theoremref{thm:optext} we have that

\[(R_{j+1}, View_j, View'_j, S_{j+1}, S'_{j+1}) \approx_{(2j+6)2^{-s}} (U_d, View_j, View'_j, S_{j+1}, S'_{j+1}).\] 

Moreover, conditioned on $(View_j, View'_j, S_{j+1}, S'_{j+1})$,  $(R_{j+1}, R'_{j+1}, V_{j+1}, V'_{j+1})$ are all deterministic functions of $X$.

Next, since conditioned on $(View_j, View'_j, S_{j+1}, S'_{j+1})$,  $(R_{j+1}, R'_{j+1})$ are both deterministic functions of $X$, they are independent of $Q$. Moreover $R_{j+1}$ is $(2j+6)2^{-s}$-close to uniform. Note that the average conditional min-entropy of $Q$ is at least $4Cd-2Cd =2Cd$. Therefore by \theoremref{thm:optext} we have that

\begin{align*}
& (S_{j+2}, View_j, View'_j, S_{j+1}, S'_{j+1}, R_{j+1}, R'_{j+1}, V_{j+1}, V'_{j+1}) \\ \approx_{(2j+7)2^{-s}} & (U_d, View_j, View'_j, S_{j+1}, S'_{j+1}, R_{j+1}, R'_{j+1}, V_{j+1}, V'_{j+1}).
\end{align*}

Namely, 

\[(S_{j+2}, View_{j+1}, View'_{j+1}) \approx_{(2(j+1)+5)2^{-s}} (U_d, View_{j+1}, View'_{j+1}).\] 
Moreover, conditioned on $(View_{j+1}, View'_{j+1})$,  $(S_{j+2}, S'_{j+2})$ are deterministic functions of $Q$.
\end{proof}

Now we consider step $g$ in the alternating extraction protocol. By \claimref{clm:condition0}, conditioned on $(Y_1, Y'_1, Z', \bar{X}'_g)$, $\bar{X}_g$ is a deterministic function of $X$ and is $2^{-s}$ close to a source with average conditional min-entropy $m'-2^C(6s)$. Since $m'=\Omega(\delta n)$, when $s$ is significantly smaller than $\delta n$ (but we can still achieve up to $s=\Omega(\delta n)$), we have that $m' \geq 2^C(12s)+2Cd$ and thus $m'-2^C(6s) \geq 2^C(6s)+2Cd$. 

We now have the following lemma.

\BL
Conditioned on the fixing of $(View_{g-1}, View'_{g-1})$, $\bar{X}_g$ is a deterministic function of $X$, and the average conditional min-entropy of $\bar{X}_g$ is at least $2^C(2s)$.
\EL

\begin{proof}
We first use induction to prove that for any $i$, conditioned on the fixing of $(View_i, View'_i)$, $\bar{X}_g$ is a deterministic function of $X$. When $i=0$, we first fix $(S_0, S'_0)$, which is independent of $\bar{X}_g$. After this fixing $(R_0, R'_0)$ is a deterministic function of $X$. Thus we can now fix $(R_0, R'_0)$ and conditioned on this fixing, $\bar{X}_g$ is still a deterministic function of $X$. Thus the statement holds for $i=0$.

Now assume that the statement holds for $i=j$, we show that it also holds for $i=j+1$. Specifically, when $(View_j, View'_j)$ is fixed, $(S_{j+1}, S'_{j+1})$ is a deterministic function of $Q$, which is independent of $X$. Thus we can fix $(S_{j+1}, S'_{j+1})$. Now after this fixing, $(R_{j+1}, R'_{j+1}, V_{j+1}, V'_{j+1})$ is a deterministic function of $X$. Thus we can now fix $(R_{j+1}, R'_{j+1}, V_{j+1}, V'_{j+1})$ and conditioned on this fixing, $\bar{X}_g$ is still a deterministic function of $X$. Thus the statement holds for $i=j+1$. Therefore, 
for any $i$, conditioned on the fixing of $(View_i, View'_i)$, $\bar{X}_g$ is a deterministic function of $X$.

Finally, note that only the fixings of $(R_j, R'_j, V_j, V'_j)$ can cause $\bar{X}_g$ to lose entropy. Since the total size of $\{(R_j, R'_j, V_j, V'_j)\}$ is at most $\sum_{i=1}^{C} 2^{C-i}(4s)+2Cd < 2^C(4s)+2Cd$, by \lemmaref{lem:amentropy} the average conditional min-entropy of $\bar{X}_g$ is at least $2^C(2s)$.
\end{proof}

Now by \lemmaref{lem:condition2}, conditioned on the fixing of $(View_{g-1}, View'_{g-1})$, $S_g \approx_{(2g+3)2^{-s}} U_d$ and $S_g, S'_g$ are both deterministic functions of $Q$, which is independent of $X$. Thus $S_g$ and $\bar{X}_g$ are independent. Therefore by \theoremref{thm:optext} we have 

\[(V_g, S_g, S'_g) \approx_{2^{-s}} (U_{2^{C-g}(2s)}, S_g, S'_g).\]

Adding back all the errors, and note that we have fixed $(Y_1, Y'_1, Z', \bar{X}'_g)$ and $(View_{g-1}, View'_{g-1})$, we have that

\[(V_g, S_g, S'_g, View_{g-1}, View'_{g-1}, Z', \bar{X}'_g) \approx_{O(C2^{-s})} (U_{2^{C-g}(2s)}, S_g, S'_g, View_{g-1}, View'_{g-1}, Z', \bar{X}'_g).\]

In particular, note that $V'_g=\Ext_v(\bar{X}'_g, S'_g)$ and for a fixed message $w'$, $T'=\lrmac_{Z'}(w')$ is a function of $Z'$. Thus we have that 

\[(V_g, View'_{g-1}, T', V'_g) \approx_{O(C2^{-s})} (U_{2^{C-g}(2s)}, View'_{g-1}, T', V'_g).\]

This implies that 

\[(V_g, T', V'_1, \cdots, V'_g) \approx_{O(C2^{-s})} (U_{2^{C-g}(2s)}, T', V'_1, \cdots, V'_g).\]

Now note the size of $(V'_{g+1}, \cdots, V'_C)$ is at most $\sum_{i=g+1}^{C}2^{C-i}(2s) = 2^{C-g}(2s)-2s$, and that $V_g$ has size $2^{C-g}(2s)$. Therefore, if $V_g$ is uniform conditioned on $(T', V'_1, \cdots, V'_g)$, then by \lemmaref{lem:condition} we have that with probability $1-2^{-s}$ over the fixings of $(T', V'_1, \cdots, V'_C)$, $V_g$ is a source with min-entropy $s$. Thus the probability that Eve can come up with the correct $V_g$ is at most $2 \cdot 2^{-s}$. Adding back the error, we have that in the case that Eve changes $Y_1$, the probability that Alice does not reject is at most $O(C2^{-s})$. For an appropriately chosen $s=\log (1/\e)+O(1)$ this is at most $\e$.

Finally, note that the entropy loss of the protocol is $O(2^Cs)=2^{\poly(1/\delta)}\log (1/\e)=O(\log (1/\e))$, the randomness complexity is $O(Cd+s)=O(Cs)=\poly(1/\delta)\log (1/\e)=O(\log (1/\e))$ and the communication complexity is $O(2^Cs)=2^{\poly(1/\delta)}\log (1/\e)=O(\log (1/\e))$.
\end{thmproof}

\section{Conclusions and Open Problems}\label{sec:con}
In this paper we present a strong connection between non-malleable extractors and two-source extractors. First, we show that non-malleable extractors can be used to construct two-source extractors. If the non-malleable extractor works for small min-entropy and has a short seed length with respect to the error, then the resulted two-source extractor beats the best known construction of two-source extractors. Second, we show that two-source extractors of the form $\IP(f(X), Y)$ can be used to construct non-malleable extractors. Using this connection, we give the first explicit constructions of non-malleable extractors for min-entropy $k<n/2$.

The most important message from this part is perhaps that, non-malleable extractors and two-source extractors, although seemingly different, are closely related. Thus, future research should probably consider these two kinds of extractors together, as improvements in one kind may lead to improvements in the other. Our connection also suggests that it may be hard to construct non-malleable extractors for small entropy. However, strictly speaking, our result only shows that it may be hard to construct non-malleable extractors for small entropy with short seed length with respect to the error. It is totally possible that we can get explicit non-malleable extractors for small entropy with large seed length. Moreover, the weaker notion of non-malleable condensers introduced in \cite{Li12} is a hopeful alternative.

We also give the first privacy amplification protocol for $k=\delta n$ that simultaneously achieves optimal round complexity (2 rounds), asymptotically optimal entropy loss and communication complexity. However, our entropy loss is $2^{\poly(1/\delta)}s$, which has a large hidden constant for small $\delta$. As a comparison, the protocol in \cite{DLWZ11} runs in $\poly(1/\delta)$ rounds but only has entropy loss $\poly(1/\delta)s$. Thus for practical purposes it is interesting to see if we can reduce the hidden constant. In particular, it remains an interesting open problem to construct non-malleable extractors or non-malleable condensers for arbitrarily linear min-entropy. 

\section*{Acknowledgments}

We are grateful to David Zuckerman for many valuable discussions, especially for his suggestion to use the BCH code.

\bibliographystyle{alpha}

\bibliography{refs}

\appendix

\section{The Existence of Generalized Non-Malleable Extractors}\label{sec:nmexist}
In this section we prove the existence of non-malleable extractors with more than one adversarial seeds. First, we have the following definition.

\BD
A function $\nm:\bits^n \times \bits^d \to \bits^m$ is a $(r, k,\eps)$-non-malleable extractor if,
for any source $X$ with $\hinf(X) \geq k$ and any $r$ function $\adv_i:\bits^d \to \bits^d, i=1, \cdots, r$ such that $\adv_i(y) \neq y$ for all~$i$ and $y$,
the following holds.
When $Y$ is chosen uniformly from $\bits^d$ and independent of $X$,
\[
(\nm(X,Y),\{\nm(X,\adv_i(Y))\},Y) \approx_\eps (U_m,\{\nm(X,\adv_i(Y))\},Y).
\]
\ED

We will prove the following theorem.

\BT \label{thm:nmexist}
For any constant $r \geq 1$, there exists a $(r, k,\eps)$-non-malleable extractor as long as 

\[d > \frac{3}{2}\log(n-k)+3\log(1/\e)+O(1)\]

\[k > (r+1)m+\frac{d}{3}+2 \log (1/\e)+\log(d)+O(1) \]
\ET

We prove the theorem by using the probabilistic method, similar to the existence proof in \cite{DW09}. A function $f: \bits^n \times \bits^d \to \bits^m$ is a $(r, k,\eps)$-non-malleable extractor if for all $(n,k)$ sources $X$, all adversarial functions $\{\adv_i\}$ and all distinguishers $\calD$, we have

\[|\Pr[\cd(f(X,Y),\{f(X,\adv_i(Y))\},Y)=1]-\Pr[\cd(U_m,\{f(X,\adv_i(Y))\},Y)=1]| \leq \e.\]

As usual, it suffices to consider flat sources $X$. For the purpose of a union bound, we fix some $\cd, \{\adv_i\}$ and a source $X$ which is uniformly distributed on some subset $\Supp(X) \subseteq \bits^n$ with $|\Supp(X)|=2^k$. We use $\df$ to denote the uniform distribution over the space of all functions $f: \bits^n \times \bits^d \to \bits^m$.

For each $u \in \bits^{rm}$ and $y \in \bits^d$, define

\[\mc(u, y)=|\{u_2 \in \bits^m: \cd(u_2, u, y)=1\}  |.\]

For each $x \in \Supp(X)$, $y \in \bits^d$, define the following random variables (where the randomness comes from $\df$).

\[\dl(x, y)=\cd(\df(x, y), \{\df(x, \adv_i(y))\}, y)\] 

\[\dr(x,y)=\frac{\mc(\{\df(x, \adv_i(y))\}, y)}{2^m}\]

\[\dq(x, y)=\dl(x, y)-\dr(x, y)\]

and let 

\[\dbq=\frac{\sum_{x, y} \dq(x, y)}{2^{k+d}}.\]

Thus, $\dbq$ is essentially the quantity 

\[p(f)=\Pr_{f \leftarrow \df}[\cd(f(X,Y),\{f(X,\adv_i(Y))\},Y)=1]-\Pr_{f \leftarrow \df}[\cd(U_m,\{f(X,\adv_i(Y))\},Y)=1].\]

Therefore, we want to upper bound 

\[\Pr[|\dbq|>\e] = \Pr_{f \leftarrow \df}[|p(f)| > \e].\]

Again, it is easy to notice that for any $(x,y)$, $\dbE(\dl(x, y))=\dbE(\dr(x, y))$ and thus $\dbE(\dq(x, y))=0$ and $\dbE(\dbq)=0$. However, the variables $\dq(x, y)$ are not necessarily independent of each other (in particular, the adversarial seeds can form cycles), so we cannot use a simple Chernoff bound here. Now we represent the functions $\{\adv_i\}$ as a directed graph $G=(V, E)$ where the vertex set is $V=\bits^d$ and there is an edge from $y$ to $y'$ iff $\exists i, \adv_i(y)=y'$. Note that each $\adv_i$ is a function, thus the out-degree of every vertex is exactly $r$. The following lemma is proved in \cite{CRS11}.

\BL \label{lem:graph}
Let $G=(V, E)$ be a directed graph without self-loops. Assume that the out-degree of each vertex is at most $r$, where parallel edges are allowed. Then, there exists a subset of the vertices $V' \subseteq V$, such that the induced graph $H=(V', E')$ of $G$ is acyclic, and $|V'| \geq |V|/(r+1)$.
\EL

Let $s=1/(r+1)$. We now use \lemmaref{lem:graph} to decompose $G$ into $t+1$ subgraphs $H_j$ as follows. In each step $j, 1 \leq j \leq t$, we use the lemma to pick a $s$ fraction of vertices from the remaining vertices to form a subset $V_j$ and an induced graph $H_j$, and delete these vertices from $G$. After $t$ steps the remaining graph is $H_{t+1}$. Thus we have that $|V_j|=s(1-s)^{j-1}|V|$ for $j \leq t$ and $|V_{t+1}|=(1-s)^t |V|$.
 
The following lemma is proved in \cite{DW09}.

\BL \label{lem:martg}
For $V' \subseteq V$, let $H$ be the restriction of $G$ to the vertices $V'$ and assume that the graph $H$ is acyclic. Then the set $\{\dq(x, y)_{x \in \Supp(X), y \in V'}\}$ of random variables can be enumerated by $\dq_1, \cdots, \dq_{\ell}$ for $l=|V'|2^k$ such that $\dbE[\dq_i|\dq_1, \cdots, \dq_{i-1}]=0$ for all $1 \leq i \leq l$.
\EL

We can now prove the theorem.

\begin{thmproof}[Proof of \theoremref{thm:nmexist}.] By \lemmaref{lem:graph} and \lemmaref{lem:martg}, we can partition $\{\dq(x, y)\}$ into $t$ (enumerated) sets $\{\dq^j_1, \cdots, \dq^j_{l_j}\}$ where $l_j=|V_j|2^k$ for $j=1, \cdots, t$ and a remaining set $\{\dq(x, y)_{x \in \Supp(X), y \in V_{t+1}}\}$. For each $j \leq t$, $1 \leq i \leq l_j$, we have that $\dbE[\dq^j_i|\dq^j_1, \cdots, \dq^j_{i-1}]=0$. Now for each $j \leq t$, $1 \leq i \leq l_j$, define $S^j_i=\sum_{\ell=1}^{i} \dq^j_i$. Then for any $j \leq t$, $S^j_1, \cdots, S^j_{l_j}$ is a martingale. 

We first show that if for any $j \leq t$, $|S^j_{l_j}| \leq \frac{\e}{2} l_j$ and $|V_{t+1}| \leq \frac{\e}{4}2^d$, then $|\dbq| \leq \e$. Indeed, note that for any $(x, y)$, $\dq(x, y) \leq 2$. Thus in this case we have that 

\begin{align*}
|\dbq| &=\left |\frac{\sum_{j=1}^{t} S^j_{l_j}+\sum_{x \in \Supp(X), y \in V_{t+1}}\dq_{x, y}}{2^{k+d}} \right | \leq \frac{1}{2^{k+d}} \left (\sum_{j=1}^{t} \left |S^j_{l_j} \right |+\left |\sum_{x \in \Supp(X), y \in V_{t+1}}\dq_{x, y} \right | \right) \\ & \leq \frac{1}{2^{k+d}} \left (\sum_{j=1}^{t} \frac{\e}{2} l_j+2 \cdot \frac{\e}{4}2^{d+k} \right ) =  \frac{1}{2^{k+d}} \left (\frac{\e}{2}\left (\sum_{j=1}^{t} l_j \right )+\frac{\e}{2}2^{d+k}\right ) \\ & \leq \frac{1}{2^{k+d}} \left (\frac{\e}{2} 2^{d+k}+\frac{\e}{2}2^{d+k} \right ) = \e.
\end{align*}

Next, by Azuma's inequality we have that for any $j \leq t$, 

\[\Pr[|S^j_{l_j}| > \frac{\e}{2} l_j] \leq e^{-\frac{1}{32}l_j \e^2} \leq e^{-\frac{1}{32}s(1-s)^{t-1}2^{d+k} \e^2}\]
since $l_j =|V_j|2^k \geq s(1-s)^{t-1}2^{d+k}$.

Therefore, by the union bound, 

\[\Pr[\exists j \leq t, |S^j_{l_j}| > \frac{\e}{2} l_j] \leq te^{-\frac{1}{32}s(1-s)^{t-1}2^{d+k} \e^2}.\]

Now we will apply the union bound to all possible $X, \{\adv_i\}, \calD$. Let $N=2^n, K=2^k, D=2^d, M=2^m$. Then there are $\binom{N}{K}$ possible sources $X$, there are $D^{rD}$ possible adversaries $\{\adv_i\}$ and there are $2^{DM^{r+1}}$ possible distinguishers $\calD: \bits^{m} \times (\bits^{m})^r \times \bits^d \to \bits$. Thus to ensure that there exists a function that is a $(r, k, \e)$-non-malleable extractor, all we need is to satisfy the following inequalities:

\[\binom{N}{K}D^{rD}2^{DM^{r+1}} te^{-\frac{1}{32}s(1-s)^{t-1}2^{d+k} \e^2} <1\]

and 

\[|V_{t+1}|=(1-s)^t 2^d \leq \frac{\e}{4}2^d.\]

Choose $t$ such that $(1-s)^t=2^{-\frac{d}{3}}$. Now it is easy to check that both of these conditions are satisfied when the statements in the theorem hold.
\end{thmproof}

\section{Another Construction of Non-Malleable Extractors for Entropy Rate $1/2-\delta$}\label{sec:alter}
Here we give another construction of non-malleable extractors for $(n, k)$ sources with $k=(1/2-\delta)n$ for some constant $\delta>0$. 

Given an $(n,k)$-source $X$ with $k = (1/2-\delta)n$, we first pick a prime $p$ such that $2^n < p < 2^{n+1}$. By Bertrand's postulate, there is always such a prime. Now treat $X$ as an element in the field $\F_p$. Next we take an independent and uniform seed $Y \in \bits^n$ and again treat $Y$ as an element in $\F_p$. Encode $X, Y$ such that $\Enc(X)=(X, X^2)$ and $\Enc(Y)=(Y, Y^2)$. The operations are in $\F_p$. Our non-malleable extractor is defined as 

\[\nmExt(X, Y) = \IP(\Enc(X), \Enc(Y)) \text{ mod } M\]
for some integer $M=2^m$ that we will choose later. Note that $\Enc(X)$ and $\Enc(Y)$ are vectors in $\F^2_p$ and $\IP$ is the inner product function taken over $\F_p$.

Again, we show that for any weak source $X$ with min-entropy $(1/2-\delta) n$, $3\Enc(X)$ is close to a weak source that has min-entropy $(1/2+\delta) \log (p^2)$. 

\BL \label{lem:grow} 
Let $\F=\F_p$ for $p$ prime and $X$ be a random variable over $\F$. There is a universal constant $\delta>0$ such that if $X$ is any weak source with min-entropy $(1/2-\delta) n$, $3 \Enc(X)$ is $p^{-\Omega(1)}$-close to a source with min-entropy $(1/2+\delta) \log (p^2)$.
\EL

\begin{proof}
Note that $X$ has min-entropy $(1/2-\delta) n > (1/2-\delta) \log p-1$. First consider the distribution $2\Enc(X)$. Note that the distribution is of the form $(X+X, X^2+X^2)$. For any $(a, b)$ in the support of $2\Enc(X)$, we have that $a=x_1+x_2$ and $b=x^2_1+x^2_2$. Thus there are at most 2 different pairs of $(x_1, x_2)$ that satisfy both equations. Therefore the min-entropy of $2 \Enc(X)$ is at least $2 H_\infty(X)-1$. Now let $k=H_\infty(X)-1$, we have that $\Enc(X)$ has min-entropy at least $k$ and $2\Enc(X)$ has min-entropy at least $2k$. We now have the following claim.

\BCM Let $\alpha, \beta$ be the two constants in \theoremref{thm:incidence}. Then $3 \Enc(X)$ is $2^{-\Omega(k)}$-close to a source with min-entropy $(1+\alpha/2)2k$.
\ECM

\begin{proof}[Proof of the claim.]
Note that an element in the support of $3 \Enc(X)$ has the form $(x_1+x_2+x_3, x^2_1+x^2_2+x^2_3)$. This determines the point

\begin{align*}
&(x_1+x_2+x_3, (x_1+x_2+x_3)^2-(x^2_1+x^2_2+x^2_3)) \\
= & ((x_1+x_2)+x_3, 2(x_1+x_2)x_3+(x_1+x_2)^2-(x^2_1+x^2_2))
\end{align*}

Let $a=x_1+x_2$ and $b=x^2_1+x^2_2$, this point is 

\[(a+x_3, 2ax_3 +a^2-b).\]

Let $\bar{x}_3=a+x_3$, then 

\[(a+x_3, 2ax_3 +a^2-b)=(\bar{x}_3, 2a\bar{x}_3-a^2-b).\]

For a fixed $(a=x_1+x_2, b=x^2_1+x^2_2)$ define the line 

\[\ell_{a,b}=\{(x, 2ax-a^2-b)|x \in \F\}.\]

Note that for different $(a, b)$, the line $\ell_{a, b}$ is also different. Thus we have a set of lines $L=\{\ell_{a, b}\}$. Note that $x_3$ is sampled from $X_3$, which has min-entropy $k$ and $(a, b)$ is sampled from $\Enc(X_1)+\Enc(X_2)$, which has min-entropy $2k$. Further note that these two distributions are independent. Since every weak source with min-entropy $k$ is a convex combination of flat $k$ sources, without loss of generality we can assume that $X_3$ and $\Enc(X_1)+\Enc(X_2)$ are both flat sources. Thus $L$ has size $2^{2k}$. 

Now assume that $3 \Enc(X)$ is $\e$-far from any source with min-entropy $(1+\alpha/2)2k$. Since $3 \Enc(X)$ determines the distribution $(A+X_3, 2AX_3 +A^2-B)$, this distribution is also $\e$-far from any source with min-entropy $(1+\alpha/2)2k$. Thus there must exist some set $M$ of size at most $2^{(1+\alpha/2)2k}$ such that 

\[\Pr_{(a, b) \leftarrow 2\Enc(X), X_3 \leftarrow X}[(a+x_3, 2ax_3 +a^2-b) \in M] \geq \e.\]  

Note that whenever $(a+x_3, 2ax_3 +a^2-b) \in M$, this point has an incidence with the line $\ell_{a,b}$. Further note that whenever $(a, b)$ is different or $x_3$ is different, the incidence is also different. Thus by the above inequality the number of incidences between the set of points $M$ and the set of lines $L$ is  at least

\[\Pr_{(a, b) \leftarrow 2\Enc(X), X_3 \leftarrow X}[(a+x_3, 2ax_3 +a^2-b) \in M] 2^k 2^{2k} \geq \e2^{3k}.\]

On the other hand, since $L$ has size $2^{2k}$ and $M$ has size $2^{(1+\alpha/2)2k} \leq 2^{(1+\alpha/2)2(1/2-\delta)\log p} < 2^{(1+\alpha/2)\log p} \leq p^{2-\beta}$, by \theoremref{thm:incidence}, the number of incidences between $M$ and $L$ is at most $O(2^{(3/2-\alpha)(2+\alpha)k}) < 2^{3k(1-\alpha/6)}=2^{-\alpha k/2}2^{3k}$.

Thus we must have $\e < 2^{-\alpha k/2}$.
\end{proof}

By choosing $\delta$ appropriately and noting that $k \geq (1/2-\delta)\log p-2$, the lemma is proved.
\end{proof}

Now we can use the non-uniform XOR lemma to argue that our extractor is non-malleable. Specifically, we have the following lemma.

\BL \label{lem:main}
Let $\delta$ be the constant in \lemmaref{lem:grow}. Given any $(n,k)$-source $X$ with $k = (1/2-\delta)n$, and $Y$ an independent source over $\bits^n$ with min-entropy $(1-\delta)n$, let $W=\IP(\Enc(X), \Enc(Y))$ and $W'=\IP(\Enc(X), \Enc(Y'))$ where $Y'=\adv(Y)$ and $\forall y \in \bits^n, \adv(y) \neq y$. For any two characters $\psi(s)=e^{2 \pi i ts/p}$ and $\psi'(s)=e^{2 \pi i t's/p}$ where $t, t' \in \F_p$ and $t \neq 0$, 

\[|E_{W, W'}[ \psi(W) \psi'(W')] | \leq 2^{-\Omega(n)}.\]
\EL

\begin{proof}
Note that $W, W'$ are deterministic functions of $X, Y$. Thus 

\[E_{W, W'}[ \psi(W) \psi'(W')] = E_{X, Y}[ \psi(W) \psi'(W')].\] 
Depending on whether $\psi'$ is trivial, we have two cases.

\textbf{Case 1: $t'=0$}. This corresponds to the case where $\psi'$ is the trivial character. In this case $\psi'(W')$ is always 1. Thus 

\[E_{W, W'}[ \psi(W) \psi'(W')] = E_{X, Y}[ \psi(W)]=E_{X, Y}[ \psi(\Enc(X) \cdot \Enc(Y))].\]

Note that $\Enc(Y)$ has the same min-entropy as $Y$, which is $(1-\delta)n$. Now consider $\Enc(X)$. Since $X$ has min-entropy $(1/2-\delta)n$, by \lemmaref{lem:grow} $3\Enc(X)$ is $p^{-\Omega(1)}$-close to having min-entropy $(1/2+\delta) \log (p^2)$. Now note that the min-entropy of $4\Enc(X)-4\Enc(X)$ is at least the min-entropy of $4\Enc(X)$, and which in turn is at least the min-entropy of $3\Enc(X)$. Thus $4\Enc(X)-4\Enc(X)$ is  $p^{-\Omega(1)}$-close to having min-entropy $(1/2+\delta) \log (p^2)$. Since $(1/2+\delta) \log (p^2)+(1-\delta)n> (1+2\delta) \log p+(1-\delta)(\log p-1) > (2+\delta)\log p-1$, by \lemmaref{lem:char2} we have

\[|E_{W, W'}[ \psi(W) \psi'(W')]| = |E_{X, Y}[ \psi(\Enc(X) \cdot \Enc(Y))] | \leq (p^2 2^{1-(2+\delta)\log p})^{1/16}+p^{-\Omega(1)} =2^{-\Omega(n)}.\]

\textbf{Case 2: $t' \neq 0$}. This corresponds to the case where $\psi'$ is non-trivial. In this case, note that

\[\psi(W) \psi'(W') = e^{2 \pi i t (\Enc(X) \cdot \Enc(Y))} e^{2 \pi i t' (\Enc(X) \cdot \Enc(Y'))} = e^{2 \pi i t (\Enc(X) \cdot (\Enc(Y) + r \Enc(Y'))},\]
where $r=t'/t \in \F_p$ and $r \neq 0$ since $t \neq 0$ and $t' \neq 0$.

Let $\widetilde{\Enc(Y)}=\Enc(Y) + r \Enc(Y')$, then 

\[E_{W, W'}[ \psi(W) \psi'(W')] = E_{X, Y}[ \psi(W) \psi'(W')]=E_{X, Y}[\psi(\Enc(X) \cdot \widetilde{\Enc(Y)})].\]

Now again by the same argument as above we have that $4\Enc(X)-4\Enc(X)$ is  $p^{-\Omega(1)}$-close to having min-entropy $(1/2+\delta) \log (p^2)$. Now we only need to bound the min-entropy of $\widetilde{\Enc(Y)}$.

If for every two different $y_1, y_2$, we have that $\Enc(y_1) + r \Enc(y_1') \neq \Enc(y_2) + r \Enc(y_2')$, then obviously $\widetilde{\Enc(Y)}$ will have the same min-entropy as $Y$. Now assume that for some two different $y_1, y_2$, we have $\Enc(y_1) + r \Enc(y_1') = \Enc(y_2) + r \Enc(y_2')$.

This gives us

\[y_1 +r y_1' = y_2 +r y_2'\] and

\[(y_1)^2 +r (y_1')^2 = (y_2)^2 +r (y_2')^2.\]

Hence we get

\[ y_1-y_2 = r (y_2'-y_1')\] and 

\[ (y_1+y_2)(y_1-y_2) = r(y_2'+y_1')(y_2'-y_1').\]

Since $y_1 \neq y_2$ and $r \neq 0$, we must have that $y_1' \neq y_2'$. Thus we get 

\[y_1+y_2 = y_2' + y_1'.\]

Therefore we can completely solve the equations and get 

\[y_1' = ((r+1)y_2+(r-1)y_1)/2r, \text{ }  y_2' = ((r+1)y_1+(r-1)y_2)/2r.\]

Thus any element in $\Supp(\widetilde{\Enc(Y)})$ can come from at most 2 elements in $\Supp(Y)$. To see this, assume for the sake of contradiction that we have $\Enc(y_1) + r \Enc(y_1') = \Enc(y_2) + r \Enc(y_2')=\Enc(y_3) + r \Enc(y_3')$ for three different $y_1, y_2, y_3$. Thus by above we have

\[y_1' = ((r+1)y_2+(r-1)y_1)/2r\] and

\[y_1' = ((r+1)y_3+(r-1)y_1)/2r.\]

Note that $r \neq -1$ since otherwise this would imply that $y_1' = y_1$ which contradicts the assumption that $\forall y, \adv(y) \neq y$. Thus we get $y_2=y_3$, another contradiction.

Therefore the min-entropy of $\widetilde{\Enc(Y)}$ is at least $H_{\infty}(Y)-1=(1-\delta)n-1$. Now since $(1/2+\delta) \log (p^2)+(1-\delta)n-1> (1+2\delta) \log p+(1-\delta)(\log p-1)-1 > (2+\delta)\log p-2$, by \lemmaref{lem:char2} we have

\[|E_{W, W'}[ \psi(W) \psi'(W')]| = |E_{X, Y}[ \psi(\Enc(X) \cdot \widetilde{\Enc(Y)})] | \leq (p^2 2^{2-(2+\delta)\log p})^{1/16}+p^{-\Omega(1)} =2^{-\Omega(n)}.\]
\end{proof}

Now we can prove the following theorem.

\BT
Let $\delta$ be the constant from \lemmaref{lem:grow}. Given any $(n,k)$ source $X$ with $k=(1/2-\delta)n$ and an independent uniform seed $Y \in \bits^n$, as well as any deterministic function $\adv: \bits^n \to \bits^n$ such that $\forall y, \adv(y) \neq y$, 

\[|(\nmExt(X, Y), \nmExt(X, \adv(Y)), Y)-(U_m, \nmExt(X, \adv(Y)), Y)| \leq \e, \]
where $\e=2^{-\Omega(n)}$ and output size $m=\Omega(n)$.
\ET

\begin{thmproof}
Let $Z=\nmExt(X, Y)$ and $Z'=\nmExt(X, \adv(Y))$. By \lemmaref{lem:main} and \lemmaref{lem:noxor}, we can choose an $m=\Omega(n)$ and $M=2^m$ such that when $\nmExt(X, Y)=\IP(\Enc(X), \Enc(Y)) \text{ mod } M$ and $Y$ is an $(n, (1-\delta)n)$ source independent of $X$, we have 

\[|(Z, Z')-(U_m, Z')| \leq \e',\]
where $\e'=O(n 2^m 2^{-\Omega(n)}+2^{m-n})=2^{-\Omega(n)}$.

Therefore when $Y$ is an independent uniform distribution over $\bits^n$, by \theoremref{thm:snmext} we have

\[|(Z, Z', Y)-(U_m, Z', Y)| \leq \e,\]
where $\e=2^{2m}(2^{1-\delta n}+\e')$.

Note that $\e'=O(n 2^m 2^{-\Omega(n)}+2^{m-n})$. Thus we can take $m=\Omega(n)$ and $\e=2^{2m}(2^{1-\delta n}+\e')=2^{-\Omega(n)}$. Thus the theorem is proved.
\end{thmproof}

\end{document}